\newtheorem{thm}{Theorem}
\newtheorem{lem}[thm]{Lemma}
\newtheorem{cor}[thm]{Corollary}
\newtheorem{prop}[thm]{Proposition}
\newtheorem{conj}{Conjecture}
\theoremstyle{definition}
\newtheorem{defn}[thm]{Definition}
\theoremstyle{remark}
\newtheorem*{rmk}{Remark}
\newcommand{\eps}{\varepsilon}
\newcommand{\DEF}{{\,:=\,}}
\newcommand{\PT}[1]{\mathbf{#1}}
\DeclareMathOperator{\dd}{\mathrm{d}}
\DeclareMathOperator{\atanh}{atanh}
\DeclareMathOperator{\betafcn}{B}
\DeclareMathOperator{\bal}{Bal}
\DeclareMathOperator{\CAP}{cap}
\DeclareMathOperator{\gammafcn}{\Gamma}
\DeclareMathOperator{\GoncharA}{G_1}
\DeclareMathOperator{\GoncharB}{G_2}
\DeclareMathOperator{\GoncharC}{G_3}
\DeclareMathOperator{\GoncharD}{G_4}
\DeclareMathOperator{\kelvin}{K}
\DeclareMathOperator{\kelvinMEAS}{\mathcal{K}}
\DeclareMathOperator{\LOG}{log}
\DeclareMathOperator{\digammafcn}{\psi}
\DeclareMathOperator{\EllipticE}{E}
\DeclareMathOperator{\EllipticK}{K}
\DeclareMathOperator{\supp}{supp}
\DeclareMathOperator{\HyperF}{F}
\DeclareMathOperator{\HyperTildeF}{\tilde{F}}
\newcommand{\Hypergeom}[5]{{\sideset{_#1}{_#2}\HyperF\!\left(\substack{\displaystyle#3\\\displaystyle#4};#5\right)}}
\newcommand{\HypergeomReg}[5]{{\sideset{_#1}{_#2}\HyperTildeF\!\left(\substack{\displaystyle#3\\\displaystyle#4};#5\right)}}
\newcommand{\Pochhsymb}[2]{{\left(#1\right)_{#2}}}
\title[An  Electrostatics Problem on the Sphere]{An  Electrostatics Problem on the Sphere Arising from a Nearby Point Charge}
\author[J. S. Brauchart, P. D. Dragnev, E. B. Saff]{Johann S. Brauchart, Peter D. Dragnev\textdagger, Edward B. Saff\textdaggerdbl} 
\thanks{\noindent The research of this author was supported, in part, by an APART-Fellowship of the Austrian Academy of Sciences, a Grants-in-Aid program of ORESP at IPFW, and, by the Australian Research Council. \\
\textdagger The research of this author was supported, in part, by a Grants-in-Aid program of ORESP at IPFW and by a grant from the Simons Foundation no. 282207. \\
\textdaggerdbl The research of this author was supported, in
part, by the U. S. National Science Foundation under grant DMS-1109266 as well as by an Australian Research Council Discovery grant.}
\date{\today}
\begin{document}

\address{J. S. Brauchart:
School of Mathematics and Statistics, 
University of New South Wales, 
Sydney, NSW, 2052, 
Australia }
\address{P. D. Dragnev:
Department of Mathematical Sciences,
Indiana University - Purdue University,
Fort Wayne, IN 46805,
USA}
\address{E. B. Saff:
Center for Constructive Approximation, 
Department of Mathematics, 
Vanderbilt University, 
Nashville, TN 37240, 
USA}
\email{j.brauchart@unsw.edu.au}
\email{dragnevp@ipfw.edu}
\email{Edward.B.Saff@Vanderbilt.Edu}

\begin{abstract}
For a positively charged insulated  $d$-dimensional sphere we investigate how the distribution of this charge is affected by proximity to a nearby positive or negative point charge when the system is governed by a Riesz $s$-potential $1/r^s, s>0,$ where $r$ denotes Euclidean distance between point charges. Of particular 
interest are those distances from the point charge to the sphere for which the equilibrium charge distribution is no longer supported on the whole of the sphere (i.e. spherical caps of negative charge appear). 
Arising from this problem attributed to A. A. Gonchar are sequences of polynomials of a complex variable that have some fascinating properties regarding their zeros.  
\end{abstract}

\keywords{Electrostatics problem; Golden ratio; Gonchar problem; Gonchar polynomial; Plastic Number; Riesz potential, Signed Equilibrium; Sphere} \subjclass[2000]{Primary 30C10, 31B15; Secondary 28A12, 30C15, 31B10}

\maketitle


\section{Introduction}

For the insulated unit sphere $\mathbb{S}^d$ in $\mathbb{R}^{d+1}$ of total charge +1 on which particles interact according to the Riesz-$s$ potential $1/r^s$, $s > 0$, where $r$ is the Euclidean distance between two particles, the equilibrium distribution of charge is uniform; that is, given  by normalized surface area measure $\sigma_d$ on $\mathbb{S}^d$. However, in the presence of an ``external field'' due to a nearby point charge the equilibrium distribution changes. A positive external field repels charge away from the portion of the sphere near the source and may even clear a spherical cap of charge, whereas a negative external field attracts charge nearer to the source, thus `thinning out' a region on $\mathbb{S}^d$ opposite to the direction of the source. In the Coulomb case $s = 1$ and $d = 2$ this is a well-studied problem in electrostatics (cf., e.g., \cite{Ja1998}). Here we deviate from this classical setting and show that new and sometimes surprising phenomena can be observed.  
The outline of the paper is as follows.
 
In Section~\ref{sec:Gonchar.s.Question} we introduce and discuss a problem (\emph{Gonchar's problem}) concerning the critical distance from a unit point charge to $\mathbb{S}^d$ such that the support of the $s$-equilibrium measure on the sphere (for Riesz $s$-potential $1/r^s$, $s>0$) is no longer all of the sphere when the point charge is at any closer distance. We shall make this more precise below. Our starting point is the solution to the ``signed equilibrium problem'' for a positive charge outside the sphere, which is intimately connected with the solution of this external field energy problem. Using similar methods, we are able to extend the results in \cite{BrDrSa2009} to external fields due to a positive/negative point charge inside/outside the sphere.
We shall analyze Gonchar's problem for different magnitudes and for both positive and negative charges  which, in turn, will provide a more comprehensive picture than presented in \cite{BrDrSa2009} and \cite{BrDrSa2012}, and will reveal some interesting new phenomena. For example, in the case when $d-s$ is an even positive integer, a logarithmic term appears in the formula for the critical distance; and for a negative external field due to a source inside the sphere, a crucial issue is whether the Riesz kernel is strictly subharmonic ($d-1 < s < d$) or strictly superharmonic ($0 < s < d-1$).

For the logarithmic potential, we are able to provide a complete answer for Gonchar's problem. In the classical (harmonic) case $s = d - 1$ (and more generally, when ${d - s}$ is an odd positive integer) the critical distance appears as a zero of a certain family of polynomials indexed by the dimension $d$.
The classical case and the associated polynomials are discussed in some detail in \cite{BrDrSa2012}. By allowing a negative as well as positive charge $q$ and allowing this charge to be inside or outside the sphere, we derive a family of polynomials for each of the four combinations: (i) $0 \leq R < 1$ and $q > 0$, (ii) $R > 1$ and $q > 0$, (iii) $0 \leq R < 1$ and $q < 0$, and (iv) $R > 1$ and $q < 0$. In Section~\ref{sec:polynomial.families}, we investigate these four families of polynomials arising from Gonchar's problem. Figure~\ref{fig2} illustrates the qualitative patterns of their zeros in the classical case $s = d - 1$ and illustrates how these families complement each other. The last two displays of this figure are for $s = d - 3$. Figure~\ref{fig2b} contrasts the cases $s = d - 3$ and $s = d - 5$ for ``weak'' ($q=1/10$), ``canonical'' ($q=1$) and ``strong'' ($q=10$) external fields.

In \cite{BrDrSa2014} we discussed the Riesz external field problem due to a negative point charge above the South Pole of a positively charged unit sphere (total charge $1$) and we derived the extremal and the signed equilibria on spherical caps for $d - 2 < s < d$. In Section~\ref{sec:negatively.charged.external field}, we provide the details of this derivation and consider also the limiting cases when $s \to d - 2$. 

Section~\ref{sec:proofs} contains the remaining proofs. In the Appendix we study the $s$-potential of uniform (normalized) surface area measure on the $d$-sphere in more detail.

\section{Potential Theoretic Setting for Gonchar's Problem}
\label{sec:Gonchar.s.Question}

Let $A$ be a compact subset of $\mathbb{S}^d$. Consider the class $\mathcal{M}(A)$ of unit positive Borel measures supported on $A$. The {\em Riesz-$s$ potential} and {\em Riesz-$s$ energy} of a measure $\mu\in \mathcal{M}(A)$ modeling a (positive) charge distribution of total charge $1$ on $A$ are defined as
\begin{equation*}
U_s^\mu (\PT{x}) \DEF \int |\PT{x}-\PT{y}|^{-s} \dd\mu(\PT{y}), \qquad \mathcal{I}_s [\mu] \DEF \int \int |\PT{x}-\PT{y}|^{-s} \dd\mu(\PT{x}) \dd\mu(\PT{y}).
\end{equation*}
The \emph{Riesz-$s$ energy} and the \emph{$s$-capacity} of $A$ are given by
\begin{equation*}
W_s(A) \DEF \inf\Big\{ \mathcal{I}_s [\mu] : \mu \in \mathcal{M}(A) \Big\}, \qquad \CAP_s(A) = 1 / W_s(A).
\end{equation*}
It is well-known from potential-theory (cf. Landkof~\cite{La1972}) that if $A$ has positive $s$-capacity, then there always exists a unique measure $\mu_{A,s} \in \mathcal{M}(A)$, which is called the {\em $s$-equilibrium measure on~$A$}, such that $W_s(A) = \mathcal{I}_s[\mu_{A,s}]$. For example, $\sigma_d$ is the $s$-equilibrium measure on~$\mathbb{S}^d$ for each $0<s<d$.
A standard argument utilizing the uniqueness of the $s$-equilibrium measure $\mu_{A,s}$ shows that it is the limit distribution (in the weak-star sense) of a sequence of minimal $s$-energy $N$-point configurations on $A$ minimizing the {\em discrete $s$-energy}
\begin{equation*}
E_s(\PT{x}_1, \dots, \PT{x}_N) \DEF \mathop{\sum_{j=1}^N \sum_{k=1}^N}_{j \neq k} \frac{1}{\left| \PT{x}_j - \PT{x}_k \right|^s}, \qquad \PT{x}_1, \dots, \PT{x}_N \in A,
\end{equation*}
over all $N$-point systems on $A$. (For the discrete $s$-energy problem we refer to \cite{HaSa2004}.)

\subsection*{Weighted Energy and External Fields}
We are concerned with the Riesz external field generated by a positive or negative point charge of amount $q$ located at $\PT{a} = (\PT{0}, R)$ on the polar axis with $0 \leq R < 1$ or $R > 1$. Such a field is given by
\begin{equation} \label{externalfield}
Q( \PT{x} ) = Q_{R,q,s}( \PT{x} ) \DEF q / \left| \PT{x} - \PT{a} \right|^s, \qquad \PT{x} \in \mathbb{R}^{d+1}.
\end{equation}
The Riesz-$s$ external field on a compact subset $A \subset \mathbb{S}^d$ with positive $s$-capacity is $Q$ restricted to $A$. (For simplicity, we use the same symbol.) The \emph{weighted $s$-energy $\mathcal{I}_Q[\mu]$} associated with such a continuous external field and its extremal value $V_Q$ are defined as
\begin{equation*}
\mathcal{I}_Q[\mu] \DEF \mathcal{I}_s[\mu] + 2 \int Q(\PT{x}) \dd\mu(\PT{x}), \qquad V_Q(A) \DEF \inf \left\{ \mathcal{I}_Q[\mu] : \mu \in \mathcal{M}(A) \right\}.
\end{equation*}
A measure $\mu_Q\in \mathcal{M}(A)$ such that $\mathcal{I}_Q[\mu] = V_Q(A)$ is called an \emph{$s$-extremal (or positive equilibrium) measure on $A$ associated with $Q$}. This measure is unique and it satisfies the Gauss variational inequalities (cf. \cite{DrSa2007}) \footnote{Note that a continuous negative field can be made into a positive one by adding a fixed constant.}
\begin{align}
U_s^{\mu_Q}(\PT{x}) + Q(\PT{x}) &\geq F_Q(A) \qquad \text{everywhere on $A$,} \label{geqineq} \\
U_s^{\mu_Q}(\PT{x}) + Q(\PT{x}) &\leq F_Q(A) \qquad \text{everywhere on support $\supp(\mu_Q)$ of $\mu_Q$,} \label{leqineq}
\end{align}
where
\begin{equation*}
F_Q(A) \DEF V_Q(A) - \int Q(\PT{x}) \, \dd \mu_Q(\PT{x}).
\end{equation*}
In fact, once $\supp(\mu_Q)$ is known, the equilibrium measure $\mu_Q$ can be recovered by solving the integral equation
\begin{equation*}
U_s^{\mu}(\PT{x}) + Q(\PT{x}) = 1 \qquad \text{everywhere on $A$}
\end{equation*}
for positive measures $\mu$ supported on $A$.
In the absence of an external field ($Q \equiv 0$) and when $\CAP_s(A) > 0$ the measure $\mu_Q$ coincides with $\mu_{A,s}$.

Riesz external fields due to a positive charge on the sphere $\mathbb{S}^d$ were instrumental in the derivation of separation results for minimum Riesz-$s$ energy points on $\mathbb{S}^d$ for $s \in (d-2,d)$ (see \cite{DrSa2007}).
In \cite{BrDrSa2009} we studied Riesz external fields due to a positive charge above $\mathbb{S}^d$ which led to a discussion of a fascinating sequence of polynomials arising from answering Gonchar's problem for the harmonic case in \cite{BrDrSa2012}.
The least separation of minimal energy configurations on $\mathbb{S}^d$ subjected to an external field is investigated in \cite{BrDrSa2014}.
In \cite{BrDrSa2009} we also developed a technique for finding the extremal measure associated with more general axis-supported fields.\footnote{The case $d=1$, $s=0$, where the source is a point on the unit circle, was investigated in \cite{LaSaVa1979}.} For external fields in the most general setting we refer the reader to the work of Zori{\u\i}~\cite{Zo2003,Zo2003a,Zo2004} (also cf.~\cite{HaWeZo2012}).

\subsection*{Signed Equilibrium}
The Gauss variational inequalities \eqref{geqineq} and \eqref{leqineq} for $A=\mathbb{S}^d$ imply that the weighted equilibrium potential is constant everywhere on the support of the measure $\mu_Q$ on $\mathbb{S}^d$. In general, one can not expect that the support of $\mu_Q$ is all of the sphere. A sufficiently strong external field (large $q>0$ or small $R>1$) would thin out the charge distribution around the North Pole and even clear a spherical cap of charge. (In this ``insulated sphere'' setting there is no negative charge that would be attracted to the North Pole.) By enforcing constant weighted potential everywhere on the sphere, in general, one has a signed measure as a solution. (This corresponds to a grounded sphere.) 
In the classical Coulomb case ($d=2$, $s=1$) a standard electrostatic problem is to find the charge density (signed measure) on a charged, insulated, conducting sphere in the presence of a point charge $q$ off the sphere (see \cite[Ch.~2]{Ja1998}). This motivates the following definition (cf. \cite{Dr2007}).

\begin{defn} \label{def:signed.equilibrium} Given a compact subset $A\subset \mathbb{R}^p$ ($p\geq 3$) and an external field $Q$ on $A$, we call a signed measure $\eta_{Q}=\eta_{A,Q,s}$ supported on $A$ and of total charge $\eta_{Q}(A)=1$ {\em a signed $s$-equilibrium on $A$ associated with
$Q$} if its weighted Riesz $s$-potential is constant on $A$; i.e., 
\begin{equation}
U_s^{\eta_{Q}}(\PT{x}) + Q(\PT{x}) = G_{A,Q,s} \qquad \text{everywhere on $A$.} \label{signedeq}
\end{equation}
\end{defn}
Physicists usually prefer neutral charge $\eta_{Q}(A)=0$. However, for the applications here it is more convenient to have the normalization $\eta_{Q}(A)=1$. It can be shown that if a signed equilibrium $\eta_{Q}$ on $A$ exists, then it is unique (see \cite{DrSa2007}). We remark that the determination of signed equilibria is a substantially easier problem than that of finding non-negative extremal measures. However, the solution to the former problem is useful in solving the latter problem. In \cite{BrDrSa2009} it is shown for $q>0$ and $R>1$ that the signed $s$-equilibrium $\eta_{Q}$ on $\mathbb{S}^d$ associated with the Riesz external field \eqref{externalfield} is absolutely continuous with respect to the normalized surface area measure on $\mathbb{S}^d$. 
Using ``Imaginary inversion'' (cf. Landkof~\cite{La1972}), the proof can be extended to hold for the class of fields considered here. We remark that in the Coulomb case ($s = 1$ and $d = 2$) this result is well-known from elementary physics (cf. \cite[p.~61]{Ja1998}).

\begin{thm} \label{thm:signed.equilibrium.sphere}
Let $0 < s <  d$. The signed $s$-equilibrium $\eta_Q$ on $\mathbb{S}^d$ associated with the external field $Q$ of \eqref{externalfield} is absolutely continuous with respect to the normalized surface area measure on $\mathbb{S}^d$; that is, $\dd \eta_Q( \PT{x} ) = \eta_{R,q,s}^\prime( \PT{x} ) \, \dd \sigma_d( \PT{x} )$, and its density is given by
\begin{equation} \label{eta.Q}
\eta_{R,q,s}^\prime(\PT{x}) = 1 + \frac{q U_s^{\sigma_d} (\PT{a})}{W_s(\mathbb{S}^d)} - \frac{q\left|R^2-1\right|^{d-s}}{W_s(\mathbb{S}^d)\left|\PT{x}-\PT{a}\right|^{2d-s}}, \qquad \PT{x} \in \mathbb{S}^d.
\end{equation}
(The charge $q$ can be positive or negative and the distance of the charge to the sphere center satisfies $0 \leq R < 1$ or $R > 1$.) Moreover, the weighted $s$-potential of $\eta_Q$ on $\mathbb{S}^d$ equals
\begin{equation} \label{eq:G.sphere.Q.s}
G_{\mathbb{S}^d,Q,s} = W_s( \mathbb{S}^d ) + q U_s^{\sigma_d}( \PT{a} ).
\end{equation}
\end{thm}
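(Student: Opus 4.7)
The plan is to look for $\eta_Q$ as an explicit linear combination of the surface measure $\sigma_d$ and the Riesz balayage $\hat\epsilon_{\PT{a}}$ of the point mass $\epsilon_{\PT{a}}$ onto $\mathbb{S}^d$. Since $\sigma_d$ is the unweighted $s$-equilibrium measure we have $U_s^{\sigma_d}\equiv W_s(\mathbb{S}^d)$ on $\mathbb{S}^d$, while by the defining property of balayage $U_s^{\hat\epsilon_{\PT{a}}}(\PT{x})=|\PT{x}-\PT{a}|^{-s}=Q(\PT{x})/q$ for $\PT{x}\in\mathbb{S}^d$ when $\PT{a}$ lies off the sphere. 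Hence the ansatz $\eta_Q=\alpha\,\sigma_d-q\,\hat\epsilon_{\PT{a}}$ automatically produces $U_s^{\eta_Q}(\PT{x})+Q(\PT{x})=\alpha\,W_s(\mathbb{S}^d)$ on $\mathbb{S}^d$, which is constant as required by \eqref{signedeq}.

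The first main ingredient is the classical explicit formula for the Riesz balayage density. For $R>1$ it is known (see \cite{BrDrSa2009}, and more generally Landkof~\cite{La1972}) that
\[
\frac{d\hat\epsilon_{\PT{a}}}{d\sigma_d}(\PT{x})=\frac{(R^2-1)^{d-s}}{W_s(\mathbb{S}^d)\,|\PT{x}-\PT{a}|^{2d-s}}.
\]
A Fubini argument combined with $U_s^{\hat\epsilon_{\PT{a}}}(\PT{x})=|\PT{x}-\PT{a}|^{-s}$ on the sphere yields the total mass identity $\hat\epsilon_{\PT{a}}(\mathbb{S}^d)=U_s^{\sigma_d}(\PT{a})/W_s(\mathbb{S}^d)$, so the normalization $\eta_Q(\mathbb{S}^d)=1$ forces $\alpha=1+q\,U_s^{\sigma_d}(\PT{a})/W_s(\mathbb{S}^d)$. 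Substituting back reproduces \eqref{eta.Q} (with the $R>1$ choice of sign in $(R^2-1)^{d-s}$) and immediately gives \eqref{eq:G.sphere.Q.s}.

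To extend the result to the inside regime $0\le R<1$, I would use the Kelvin ``imaginary inversion'' $\Phi:\PT{y}\mapsto\PT{y}/|\PT{y}|^2$, which fixes $\mathbb{S}^d$ pointwise and carries $\PT{a}$ to $\PT{a}^{\ast}=\PT{a}/R^2$, now outside the sphere. The identity $|\Phi(\PT{x})-\Phi(\PT{y})|=|\PT{x}-\PT{y}|/(|\PT{x}|\,|\PT{y}|)$ controls how Riesz kernels transform and, restricted to $\mathbb{S}^d$, converts the external field $Q_{R,q,s}$ into a scalar multiple of $Q_{1/R,\tilde q,s}$ for a suitable $\tilde q$. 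This reduces the interior case to the exterior case already treated, with $R^2-1$ replaced by $1/R^2-1$; the pull-back through $\Phi$ absorbs a compensating factor of $R^{2(d-s)}$, and the net effect is that both regimes are uniformly described by $|R^2-1|^{d-s}$ in \eqref{eta.Q}. The sign of $q$ plays no role in this structural argument, since each term in the ansatz is linear in $q$.

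As a final consistency check I would verify directly from the candidate density that $\int\eta_{R,q,s}'\,d\sigma_d=1$ (a restatement of the mass computation above) and that $U_s^{\eta_Q}+Q$ is indeed constant and equal to $W_s(\mathbb{S}^d)+q\,U_s^{\sigma_d}(\PT{a})$ on $\mathbb{S}^d$; uniqueness of $\eta_Q$ then follows from the general remark recorded after Definition~\ref{def:signed.equilibrium}. The main technical obstacle is the inversion step: one must verify that the balayage formula transforms as claimed under $\Phi$ and carefully track the signs and the power $2(d-s)$ so that both the density and the Robin constant $G_{\mathbb{S}^d,Q,s}$ take the uniform form stated in the theorem.
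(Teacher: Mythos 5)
Your proposal is correct, and for the interior case $0\le R<1$ it follows a genuinely different path from the paper. The exterior-case decomposition
$\eta_Q=\bigl(1+qU_s^{\sigma_d}(\PT{a})/W_s(\mathbb{S}^d)\bigr)\sigma_d - q\,\hat\epsilon_{\PT{a}}$
and the mass identity $\|\hat\epsilon_{\PT{a}}\|=U_s^{\sigma_d}(\PT{a})/W_s(\mathbb{S}^d)$ are exactly the content of \cite{BrDrSa2009}, which the paper likewise cites for $R>1$; there is no divergence there. Where you diverge is the passage to $R<1$. You invert in the unit sphere itself ($\Phi:\PT{y}\mapsto\PT{y}/|\PT{y}|^2$, fixing $\mathbb{S}^d$ pointwise, sending $\PT{a}$ to $\PT{a}/R^2$) and adjust the charge $q\mapsto qR^{-s}$; since $|\PT{x}-\Phi(\PT{a})|=|\PT{x}-\PT{a}|/R$ for $\PT{x}\in\mathbb{S}^d$, the two external fields agree as functions on the sphere, so they share the same signed equilibrium, and the exterior formula transfers to the interior with $|R^2-1|^{d-s}$ appearing automatically (the identity $R^{-s}U_s^{\sigma_d}(1/R)=U_s^{\sigma_d}(R)$, visible from the invariance of the hypergeometric argument in \eqref{eq:s.potential} under $R\mapsto 1/R$, handles the constant). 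This is precisely the ``principle of inversion for $\mathbb{S}^d$'' the paper records only afterwards, in the subsection on connecting interior and exterior fields. The paper's own proof instead invokes the ``imaginary inversion'' — the Kelvin map \emph{centered at $\PT{a}$} with $|\PT{y}-\PT{a}|\,|\PT{y}^*-\PT{a}|=1-R^2$, which also fixes $\mathbb{S}^d$ — and uses it to reduce the integral $\int|\PT{y}-\PT{a}|^{-(2d-s)}|\PT{x}-\PT{y}|^{-s}\,\dd\sigma_d(\PT{y})$ directly to $W_s(\mathbb{S}^d)/|\PT{x}-\PT{a}|^s$, thereby verifying the constancy of the weighted potential from scratch for the interior case. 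Your route is shorter and more conceptual (it exploits a symmetry of the problem rather than recomputing), but it is not self-contained for $R<1$: it relies on the exterior-case balayage formula as an input, whereas the paper's computation treats the interior case on its own terms. Both are valid; in either case uniqueness of the signed equilibrium completes the argument, as you note.
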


In the following we shall use the Pochhammer symbol 
\begin{equation*}
\Pochhsymb{a}{0} \DEF 1, \qquad \Pochhsymb{a}{n} \DEF a (a+1) \cdots (a+n-1), \quad n \geq 1,
\end{equation*}
which can be expressed in terms of the Gamma function $\gammafcn$ by means of $\Pochhsymb{a}{n} = \gammafcn( n + a ) / \gammafcn( a )$ whenever $n + a$ is not an integer $\leq 0$, and the Gauss hypergeometric function and its regularized form with series expansions
\begin{equation} \label{eq:HypergeomSeries}
\Hypergeom{2}{1}{a,b}{c}{z} \DEF \sum_{n=0}^\infty \frac{\Pochhsymb{a}{n}\Pochhsymb{b}{n}}{\Pochhsymb{c}{n}} \frac{z^n}{n!}, \quad  \HypergeomReg{2}{1}{a,b}{c}{z} \DEF \sum_{n=0}^\infty \frac{\Pochhsymb{a}{n}\Pochhsymb{b}{n}}{\gammafcn(n+c)} \frac{z^n}{n!}, \qquad  |z|<1. 
\end{equation}
We shall also use the incomplete Beta function and the Beta function, 
\begin{equation} \label{eq:betafnc}
\betafcn(x;\alpha,\beta) \DEF \int_{0}^x v^{\alpha-1} \left( 1 - v \right)^{\beta-1} \dd v, \qquad \betafcn(\alpha,\beta) \DEF \betafcn(1; \alpha,\beta),
\end{equation}
and the regularized incomplete Beta function
\begin{equation}
\mathrm{I}(x;a,b) \DEF \betafcn(x;a,b) \big/ \betafcn(a,b). \label{regbetafnc}
\end{equation}

The density $\eta_{R,q,s}^\prime$ of \eqref{eta.Q} is given in terms of the $s$-energy of $\mathbb{S}^d$ \footnote{$W_s(\mathbb{S}^d)$ can be obtained using the Funk-Hecke formula \cite{Mu1966}. Also, cf. Landkof~\cite{La1972}.}, 
\begin{equation} \label{eq:W.s.S.d}
W_s(\mathbb{S}^d) = \int \int \frac{1}{\left| \PT{x} - \PT{y} \right|^s} \dd \sigma_d( \PT{x} ) \dd \sigma_d( \PT{y} ) = \frac{\gammafcn(d)\gammafcn((d-s)/2)}{2^s\gammafcn(d/2)\gammafcn(d-s/2)}, 
\end{equation}
and the Riesz-$s$ potential of the uniform normalized surface area measure $\sigma_d$ evaluated at the location of the source of the external field (cf. \cite[Theorem~2]{BrDrSa2009}),  
\begin{equation} \label{eq:s.potential}
U_s^{\sigma_d}(\PT{a}) = \left( R + 1 \right)^{-s} \Hypergeom{2}{1}{s/2,d/2}{d}{4R \big/ \left(R+1\right)^2}, \qquad R = | \PT{a} |.
\end{equation}
By abuse of notation we shall also write $U_s^{\sigma_d}( R )$. From formula~\eqref{eta.Q} we observe that the minimum value of the density $\eta_{R,q,s}^\prime$ is attained at the North Pole $\PT{p}$ if $q>0$,
\begin{equation*}
\eta_{R,q,s}^\prime(\PT{p}) = 1 + \frac{q U_s^{\sigma_d} (\PT{a})}{W_s(\mathbb{S}^d)} - \frac{q \left( R + 1 \right)^{d-s}}{W_s(\mathbb{S}^d) \left| R - 1 \right|^{d}},
\end{equation*}
and at the South Pole $-\PT{p}$ if $q<0$,
\begin{equation*}
\eta_{R,q,s}^\prime(-\PT{p}) = 1 + \frac{q U_s^{\sigma_d} (\PT{a})}{W_s(\mathbb{S}^d)} - \frac{q \left| R - 1 \right|^{d-s}}{W_s(\mathbb{S}^d) \left( R + 1 \right)^{d}}.
\end{equation*}

\begin{prop} \label{prop}
Let $0 < s < d$. For the external field $Q$ of \eqref{externalfield} with $q \neq 0$ and $0 < R < 1$ or $R > 1$, the signed $s$-equilibrium is a positive measure on all of $\mathbb{S}^d$ if and only if 
\begin{enumerate}[\bf (a)]
\item Positive external field ($q > 0$):
\begin{equation} \label{eqsigned}
\frac{W_s(\mathbb{S}^d)}{q} \geq \frac{\left(R+1\right)^{d-s}}{\left|R-1\right|^d} - U_s^{\sigma_d}(R). 
\end{equation}
\item Negative external field ($q < 0$):
\begin{equation} \label{neg.eqsigned} 
\frac{W_s(\mathbb{S}^d)}{q} \leq \frac{\left|R-1\right|^{d-s}}{\left(R+1\right)^d} - U_s^{\sigma_d}(R).
\end{equation}
\end{enumerate}
In such a case, $\mu_{Q}=\eta_{Q}$.
\end{prop}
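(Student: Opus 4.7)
The proof is essentially a pointwise sign analysis of the density $\eta_{R,q,s}^\prime$ from Theorem~\ref{thm:signed.equilibrium.sphere}, followed by a variational characterization argument. Since $\eta_Q$ already has total mass $1$, positivity reduces to the pointwise condition $\eta_{R,q,s}^\prime(\PT{x}) \geq 0$ for every $\PT{x} \in \mathbb{S}^d$.

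The plan is to exploit the fact that in the density
\begin{equation*}
\eta_{R,q,s}^\prime(\PT{x}) = 1 + \frac{q U_s^{\sigma_d}(\PT{a})}{W_s(\mathbb{S}^d)} - \frac{q\left|R^2-1\right|^{d-s}}{W_s(\mathbb{S}^d)\left|\PT{x}-\PT{a}\right|^{2d-s}},
\end{equation*}
the only $\PT{x}$-dependence lies in the factor $|\PT{x}-\PT{a}|^{-(2d-s)}$, which is a strictly decreasing function of $|\PT{x}-\PT{a}|$. Because $\PT{a}=(\PT{0},R)$ lies on the polar axis, the map $\PT{x} \mapsto |\PT{x}-\PT{a}|$ on $\mathbb{S}^d$ attains its minimum $|R-1|$ at the North Pole $\PT{p}$ (in both cases $0\leq R<1$ and $R>1$) and its maximum $R+1$ at the South Pole $-\PT{p}$. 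Consequently, for $q>0$ the density is minimized at $\PT{p}$ and for $q<0$ at $-\PT{p}$; these are precisely the two expressions $\eta_{R,q,s}^\prime(\PT{p})$ and $\eta_{R,q,s}^\prime(-\PT{p})$ displayed just before the proposition.

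I would then rearrange the inequality $\eta_{R,q,s}^\prime(\PT{p}) \geq 0$ in case (a), dividing by the positive $q/W_s(\mathbb{S}^d)$, to obtain \eqref{eqsigned}. For case (b) I would do the same with $\eta_{R,q,s}^\prime(-\PT{p})\geq 0$; here the division is by $q/W_s(\mathbb{S}^d)$ with $q<0$, so the inequality reverses, yielding \eqref{neg.eqsigned}. The converse directions are automatic: once the minimum value of the density is nonnegative, so is the density everywhere.

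For the final claim $\mu_Q=\eta_Q$, I would invoke uniqueness of the $s$-extremal measure. When $\eta_Q$ is a positive unit Borel measure on $\mathbb{S}^d$, the defining identity \eqref{signedeq}, namely $U_s^{\eta_Q}(\PT{x})+Q(\PT{x})=G_{\mathbb{S}^d,Q,s}$ everywhere on $\mathbb{S}^d$, makes the Gauss variational inequalities \eqref{geqineq}--\eqref{leqineq} hold with equality on all of $\supp(\eta_Q)\subseteq \mathbb{S}^d$ (with $F_Q(\mathbb{S}^d)=G_{\mathbb{S}^d,Q,s}$). By the uniqueness of the measure satisfying the Gauss variational inequalities (stated in the paragraph preceding \eqref{geqineq}), $\eta_Q$ must coincide with $\mu_Q$. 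The only real work in the proof is the pointwise monotonicity/extremum analysis in the first step; the algebraic rearrangement and the uniqueness argument are then immediate.
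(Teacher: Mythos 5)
Your proposal is correct and takes essentially the same approach as the paper. The paper states the minimum-at-pole observation (with the resulting formulas for $\eta_{R,q,s}^\prime(\pm\PT{p})$) in the text immediately preceding the proposition and then gives a terse proof of the equivalence and of $\mu_Q=\eta_Q$ via the uniqueness of $\eta_Q$ and the Gauss variational characterization of $\mu_Q$ — exactly the two ingredients you use.
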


\begin{proof}
The arguments given in \cite{BrDrSa2009} for $q>0$ and $R>1$ apply. We provide the proof for \eqref{neg.eqsigned}. If $\supp(\mu_Q) = \mathbb{S}^d$, then $\mu_Q$ is a signed equilibrium on $\mathbb{S}^d$ (the Gauss variational inequalities \eqref{geqineq} and \eqref{leqineq} hold everywhere on $\mathbb{S}^d$). By uniqueness of $\eta_Q$, $\eta_Q = \mu_Q$; hence, it is non-negative and \eqref{neg.eqsigned} holds. If \eqref{neg.eqsigned} holds, then $\eta_Q$ is a non-negative measure on $\mathbb{S}^d$ whose weighted $s$-potential is constant everywhere on $\mathbb{S}^d$; that is, $\eta_Q$ satisfies the Gauss variational inequalities with $F_Q(\mathbb{S}^d) = G_{\mathbb{S}^d, Q, s}$. By uniqueness of $\mu_Q$, $\mu_Q = \eta_Q$ and $\supp(\mu_Q) = \mathbb{S}^d$. 
\end{proof}

Note that $R = 0$ satisfies \eqref{eqsigned} and \eqref{neg.eqsigned} with strict inequality for any choice of $q \neq 0$. Given a charge $q \neq 0$, any $R \in ( 0, 1 ) \cup ( 1, \infty )$ for which equality holds in \eqref{eqsigned} or in \eqref{neg.eqsigned} is called \emph{critical distance}. At a critical distance $R^*$, the density $\eta_{R^*,q,s}^\prime$ of \eqref{eta.Q} assumes the value $0$ at one point on $\mathbb{S}^d$ (and $\eta_{R^*,q,s}^\prime$ is strictly positive away from this unique minimum). Interestingly, in the case of negative external fields due to a source inside the sphere, there can be more than one critical distance as discussed below. The critical distance(s) anchor the subintervals of radii $R$ in $( 0, 1 ) \cup ( 1, \infty )$ for which $\eta_{R,q,s}^\prime > 0$ everywhere on $\mathbb{S}^d$.

\begin{rmk}[Positive external fields]
For every fixed positive charge $q$, there is a unique critical distance $R_q$ such that for $R \geq R_q > 1$ ($0 \leq R \leq R_q < 1$) the signed $s$-equilibrium is a positive measure on the whole sphere $\mathbb{S}^d$. This follows from the fact that the right-most part of \eqref{eqsigned} is a strictly decreasing (increasing) function of~$R$ if $R>1$ ($0 < R < 1$).
\end{rmk}

The technical details for this and the next remark will be postponed until Section~\ref{sec:proofs}.

\begin{rmk}[Negative external fields]
The subtleties of the right-hand side of \eqref{neg.eqsigned}, 
\begin{equation*}
f( R ) \DEF \frac{\left|R-1\right|^{d-s}}{\left(R+1\right)^d} - U_s^{\sigma_d}(R), 
\end{equation*}
gives rise to a multitude of different, even surprising, cases (cf. Theorems~\ref{thm:Gonchar.C}, \ref{thm:Gonchar.D.superharmonic}, \ref{thm:Gonchar.D.harmonic}, and~\ref{thm:Gonchar.D.subharmonic}).  
The function $f$ is continuous on $[0,\infty)$, negative on $(0,\infty)$, and bounded from below. Therefore, \eqref{neg.eqsigned} is trivially satisfied for all  charges $q$ with $W_s( \mathbb{S}^d ) / q < \min_{R \geq 0} f(R)$; otherwise, at least one critical distance exists. 
For an exterior field source (that is, on $( 1, \infty )$) the function $f$ has the same qualitative behavior for all $0 < s < d$ in the sense that it is strictly monotonically increasing with lower bound $f(1^+) = - W_s(\mathbb{S}^d)$ and a horizontal asymptote at level~$0$. Consequently, there is a unique critical distance if $q < -1$, which is the least distance $R$ such that $\eta_Q \geq 0$ on $\mathbb{S}^d$, and none if $q \geq -1$. For an interior field source (that is, on $( 0, 1 )$) the qualitative behavior of $f$ changes with the potential-theoretic regime (superharmonic, harmonic, subharmonic $s$). Figure~\ref{fig:typical.f.R} illustrates the typical form of $f$. 
\begin{figure}[ht]
\begin{center}
\includegraphics[scale=.8]{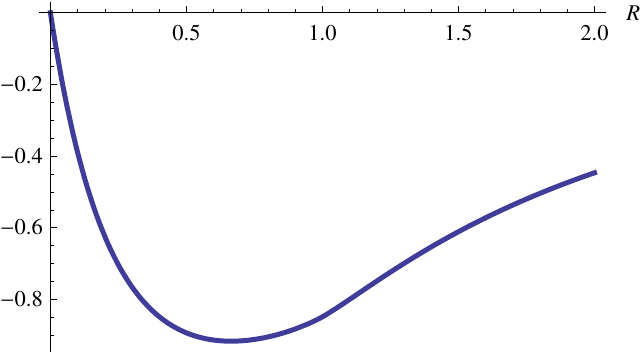}
\includegraphics[scale=.8]{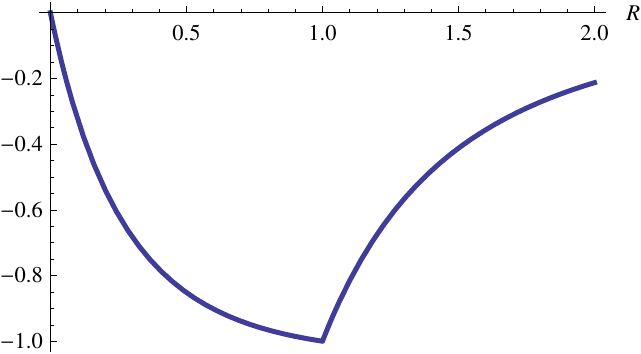}
\includegraphics[scale=.8]{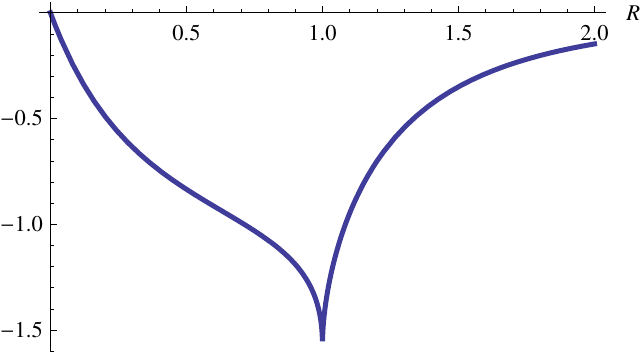}
\caption{\label{fig:typical.f.R} The typical behavior of the function $f(R)$ (right-hand side of~\eqref{neg.eqsigned}) in the strictly superharmonic, harmonic, and strictly subharmonic case shown here for $d = 3$ and $s = (d - 1)/2$, $s = d - 1$, and $s = d - 1/2$.}
\end{center}
\end{figure}
In particular, one can have more than one critical value of $R$ for which equality is assumed in \eqref{neg.eqsigned} as demonstrated for the case $d = 4$ and $s = 1$ when \eqref{neg.eqsigned} reduces to
\begin{equation} \label{demonstration}
W_s(\mathbb{S}^d) / q \leq \left( 1 - R \right)^3 \left( 1 + R \right)^{-4} + R^2 / 5 - 1, \qquad 0 \leq R < 1.
\end{equation}
The right-hand side above is convex in $(0,1)$ with a unique minimum at $R\approx 0.507122392\dots$, so that equality will hold in above relation at two radii $R_{q,1}$ and $R_{q,2}$ in $(0,1)$ near this minimum and the above relation will hold on $[0,R_{q,1}]$ and $[R_{q,2},1)$.
\end{rmk}

We remark that the complementary problem of fixing the distance $R$ and asking for the critical charge $q_R$ is trivial. A simple manipulation yields the unique $q_R$ for which equality holds in \eqref{eqsigned} or \eqref{neg.eqsigned} such that for $q > q_R > 0$ (positive external field) or for $q < \min\{ q_R, 0\}$ (negative external field) the signed $s$-equilibrium measure is a positive measure on all of~$\mathbb{S}^d$.

\subsection*{Gonchar's Problem}

A. A. Gonchar asked the following question (cf. \cite{LoMaNeEtal2013}): {\em A positive unit point charge approaching the insulated unit sphere carrying the total charge $+1$ will eventually cause a spherical cap free of charge to appear. \footnote{On a grounded sphere a negatively charged spherical cap will appear.} What is the smallest distance from the point charge to the sphere where still all of the sphere is positively charged?} 

For the classical harmonic Newtonian potential ($s=d-1$) we answer this question in~\cite{BrDrSa2009} and discuss it in detail in~\cite{BrDrSa2012}. In this particular case the $s$-energy $W_s(\mathbb{S}^d)$ of the $d$-sphere equals $1$ (see \eqref{eq:W.s.S.d}), and the mean-value property for harmonic functions implies that the $s$-potential of the normalized surface area measure $\sigma_d$ at $\PT{a}$ simplifies to $U_s^{\sigma_d}(R) = R^{1-d}$ ($R \geq 1$), cf. \eqref{eq:s.potential.A} below. 
So, when requiring that $R > 1$ and $q > 0$, Proposition~\ref{prop} yields that $\supp(\mu_Q) = \mathbb{S}^d$ if and only if the following rational relation is satisfied:
\begin{equation} \label{fundamental.relation}
1 / q \geq \left( R + 1 \right) \left( R - 1 \right)^{-d} - R^{1-d}.
\end{equation}
The critical distance $R_q$ (from the center of $\mathbb{S}^d$) is assumed when equality holds in \eqref{fundamental.relation}. Curiously, for $d=2$ (classical Coulomb case) the answer to Gonchar's problem is the Golden ratio $\phi$, that is $R_1-1$ (the distance from the unit sphere) equals $( 1 + \sqrt{5} ) / 2$ \footnote{An elementary physics argument would also show that $R-1 \geq \phi$ (for $q=1$) implies $\supp(\mu_Q) = \mathbb{S}^2$.}; and for $d=4$, the answer is the {\em Plastic constant} $P$ (defined in Eq.~\eqref{Plastic.constant} below). For general dimension $d\geq2$, the critical distance $R_q = R_q(d)$ is for positive exterior external fields a solution of the following algebraic equation
\begin{equation} \label{equation}
\GoncharA( d, q; R ) \DEF \left[ \left( R - 1 \right)^d / q - R - 1 \right] R^{d-1} + \left( R - 1 \right)^d = 0, 
\end{equation}
which follows from \eqref{fundamental.relation} and gives rise to the family of polynomials studied in \cite{BrDrSa2012}. In fact, it is shown in \cite{BrDrSa2012} that $R_q$ is the uniqe (real) zero in $(1, +\infty)$ of the Gonchar polynomial $\GoncharA(d,q; z)$.
Asymptotical analysis (see \cite[Appendix~A]{BrDrSa2012}) shows that
\begin{equation*}
R_q = 2 + \left[ \log (3 q) \right] / d + \mathcal{O}(1/d^2) \qquad \text{as $d\to\infty$.}
\end{equation*}

The answer to Gonchar's problem for the external field of \eqref{externalfield} for general parameters $0<s<d$, $R  > 1$, and $q > 0$ relies on solving a, in general, highly non-algebraic equation for the critical distance~$R_q$, namely the characteristic equation (cf. relation \eqref{eqsigned})
\begin{equation} \label{eq:characteristic.equation.exterior.positive}
\frac{W_s(\mathbb{S}^d)}{q} = \frac{\left( R + 1 \right)^{d-s}}{\left( R - 1 \right)^d} - U_s^{\sigma_d}( R ).
\end{equation}
Figure~\ref{fig3} displays the graphical solution to Gonchar's problem for the dimensions ${d = 2, 4}$.
\begin{figure}[ht]
\begin{center}
\hfill
\includegraphics[scale=.085]{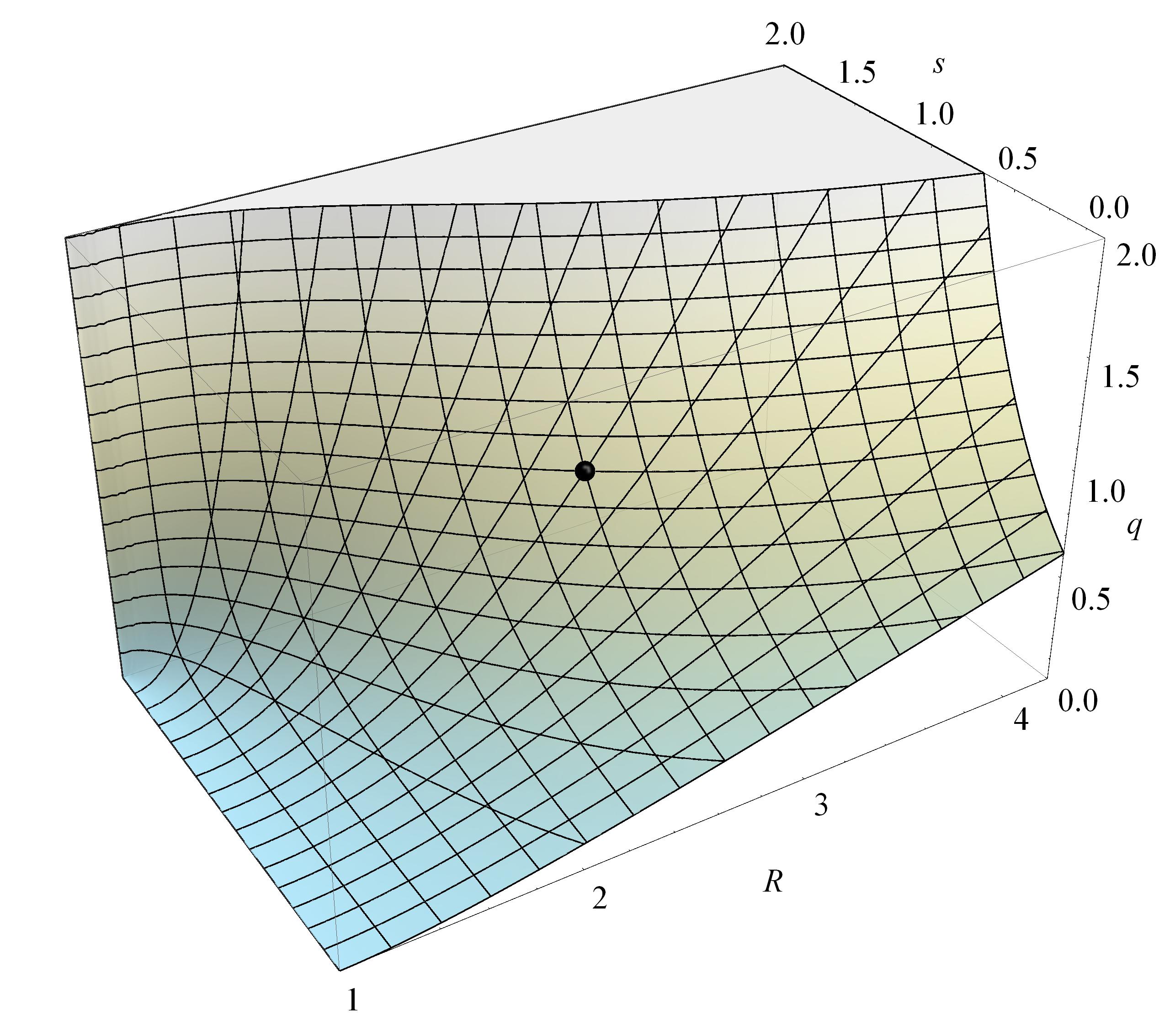} \hfill
\includegraphics[scale=.085]{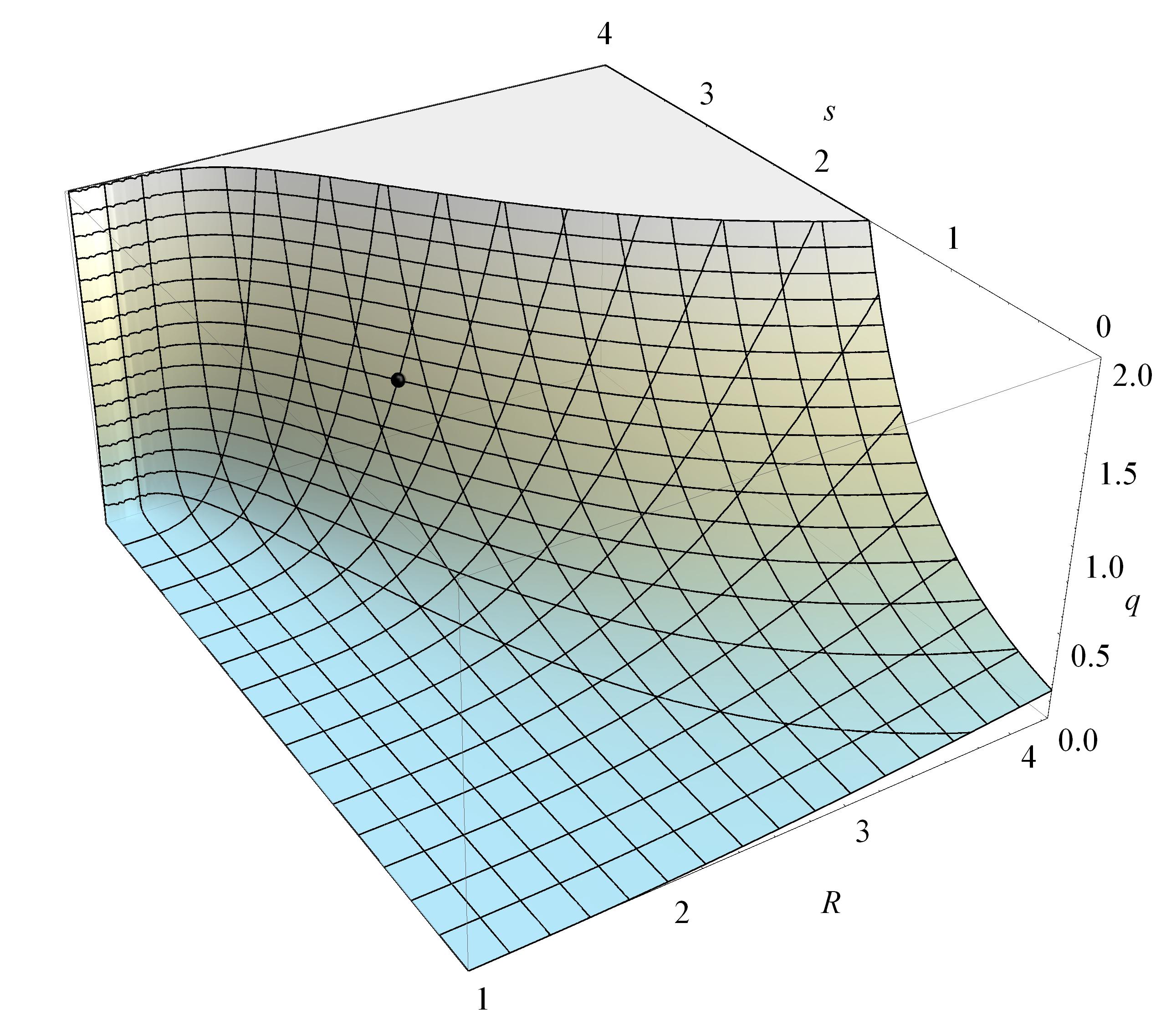} 
\caption{\label{fig3} Surfaces (equality in \eqref{eqsigned}) representing the answer to Gonchar question for $d=2,4$ for the selected ranges for $R$, $q$, and $s$. The dots indicate the solution to Eq.~\eqref{equation} (Newtonian case, $q=1$).} 
\end{center}
\end{figure}
Taking into account that $R > 1$, we can rewrite \eqref{eq:characteristic.equation.exterior.positive} as 
\begin{equation}
\GoncharA( d, s, q; R ) = 0,
\end{equation}
where we define the \emph{Gonchar function of the first kind}
\begin{equation} \label{eq:Gonchar.function.1st.kind}
\GoncharA( d, s, q; R ) \DEF \left( \frac{W_s(\mathbb{S}^d)}{q} \left( R - 1 \right)^d - \left( R + 1 \right)^{d-s} \right) R^{d-1} + R^{d-1} \left( R - 1 \right)^d U_s^{\sigma_d}( R ).
\end{equation}
Answering Gonchar's problem for $0 < s < d$, $q > 0$, and $R > 1$ amounts to finding the unique (cf. remark after Proposition~\ref{prop}) (real)\footnote{Depending on the formula used for $U_s^{\sigma_d}( R )$ (e.g., \eqref{eq:s.potential}), $\GoncharA( d, s, q; R )$ can be analytically continued to the complex $R$-plane.} zero in $(1,\infty)$ of the function $\GoncharA( d, s, q; R )$.
Moreover, the density $\eta_{R,q,s}^\prime$ of Theorem~\ref{thm:signed.equilibrium.sphere} evaluated at the North Pole can be expressed as
\begin{equation*}
\eta_{R,q,s}^\prime( \PT{p} ) = \frac{q}{W_s( \mathbb{S}^d )} \, \frac{\GoncharA( d, s, q; R )}{R^{d-1} \left( R - 1 \right)^{d}}, \qquad R > 1, q > 0, 0 < s < d.
\end{equation*}

The quadratic transformation formula for Gauss hypergeometric functions \cite[Eq.~15.3.17]{AbSt1992} applied to the formula of $U_s^{\sigma_d}( R )$ of \eqref{eq:s.potential} yields
\begin{equation} \label{eq:s.potential.A}
U_s^{\sigma_d}( R ) = \left( 1 / R \right)^{s}  \Hypergeom{2}{1}{-(d-1-s)/2,s/2}{(d+1)/2}{\left( 1 / R \right)^{2}}, \qquad R > 1.
\end{equation}
The hypergeometric function simplifies to $1$ if $s = d - 1$ (harmonic case) and reduces to a polynomial of degree $m$ in $1/R^2$ for $d - 1 - s = 2m$. Thus $R_q$ will be an algebraic number if $W_s(\mathbb{S}^d) / q$ is algebraic, which is interesting from a number-theoretic point of view. 

The following result generalizes \cite[Theorem~5]{BrDrSa2012}.

\begin{thm} \label{thm:Gonchar.A}
For the external field $Q$ of \eqref{externalfield} with $0 < s < d$, $q > 0$, and $R > 1$ the signed $s$-equilibrium is a positive measure on all of~$\mathbb{S}^d$ if and only if $R \geq R_q$, where $R_q$ is the unique (real) zero in $(1,\infty)$ of the Gonchar function $\GoncharA(d,s,q;z)$. If $s = d - 1 - 2m$ for $m$ a non-negative integer, then $\GoncharA(d,d-1-2m,q;z)$ is a polynomial. 

In particular, the solution to Gonchar's problem is given by $\rho( d, s ) = R_q - 1$. 
\end{thm}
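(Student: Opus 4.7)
The plan is to translate Proposition~\ref{prop}(a) directly into a statement about $\GoncharA$. For $R > 1$ and $q > 0$ the factor $R^{d-1}(R-1)^d$ is strictly positive, so multiplying both sides of \eqref{eqsigned} by it and rearranging produces exactly $\GoncharA(d,s,q;R) \geq 0$, with equality in \eqref{eqsigned} corresponding to $\GoncharA(d,s,q;R) = 0$. The same relation is already visible in the displayed identity for $\eta_{R,q,s}^\prime(\PT{p})$ just before \eqref{eq:s.potential.A}: since the minimum of $\eta_{R,q,s}^\prime$ over $\mathbb{S}^d$ is attained at the North Pole when $q > 0$ (the displayed consequence of \eqref{eta.Q} noted right after Theorem~\ref{thm:signed.equilibrium.sphere}), the density $\eta_{R,q,s}^\prime$ is non-negative everywhere on $\mathbb{S}^d$ if and only if $\GoncharA(d,s,q;R) \geq 0$.

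To pin down $R_q$, I would invoke the monotonicity statement in the first remark following Proposition~\ref{prop}. Writing
\[
g(R) \DEF \frac{(R+1)^{d-s}}{(R-1)^d} - U_s^{\sigma_d}(R), \qquad R > 1,
\]
this function is strictly decreasing on $(1,\infty)$ (the content of that remark, whose technical details are deferred to Section~\ref{sec:proofs}), with $g(R) \to +\infty$ as $R \to 1^+$ (since $(R-1)^{-d}$ blows up while $U_s^{\sigma_d}$ stays bounded near $R = 1$) and $g(R) \to 0$ as $R \to \infty$. So the level equation $g(R) = W_s(\mathbb{S}^d)/q$ has a unique root $R_q \in (1,\infty)$, and $g(R) \leq W_s(\mathbb{S}^d)/q$ iff $R \geq R_q$. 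Through the positive factor $R^{d-1}(R-1)^d$, this $R_q$ is the unique real zero of $\GoncharA(d,s,q;\,\cdot\,)$ in $(1,\infty)$, and $\supp(\mu_Q) = \mathbb{S}^d$ precisely when $R \geq R_q$; the Gonchar distance from the surface is $\rho(d,s) = R_q - 1$ by definition.

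For the polynomial claim I would use the alternative hypergeometric representation \eqref{eq:s.potential.A}: when $s = d - 1 - 2m$ with $m$ a non-negative integer, the top parameter $-(d-1-s)/2 = -m$ in $\Hypergeom{2}{1}{-m, s/2}{(d+1)/2}{1/R^2}$ is a non-positive integer, so the Gauss series terminates and yields a polynomial of degree $m$ in $1/R^2$. Multiplying by $R^{-s}$ and then by $R^{d-1}(R-1)^d$ turns the second summand of \eqref{eq:Gonchar.function.1st.kind} into $(R-1)^d$ times a polynomial in $R$; in the first summand, $(R+1)^{d-s} = (R+1)^{2m+1}$ and $R^{d-1}(R-1)^d$ are manifestly polynomials, so the entire expression $\GoncharA(d,d-1-2m,q;R)$ is a polynomial in $R$. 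The only genuinely analytic step in the whole proof is the strict monotonicity of $g$, which the remark commits to handling in Section~\ref{sec:proofs}; modulo that, the theorem reduces to the elementary manipulations above.
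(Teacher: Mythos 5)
Your proposal is correct and follows essentially the same route the paper takes: Proposition~\ref{prop}(a) combined with the identity $\eta_{R,q,s}^\prime(\PT{p}) = \frac{q}{W_s(\mathbb{S}^d)}\GoncharA(d,s,q;R)/[R^{d-1}(R-1)^d]$, the strict monotonicity of the right-hand side of \eqref{eqsigned} (the content of the first remark after Proposition~\ref{prop}, proved in Section~\ref{sec:proofs}), and the termination of the hypergeometric series in \eqref{eq:s.potential.A} when $-(d-1-s)/2=-m$ is a non-positive integer. Your explicit note that $g(R)\to+\infty$ as $R\to1^+$ and $g(R)\to0$ as $R\to\infty$ makes the existence of the unique root slightly more transparent than the paper's remark, but the argument is the same in substance.
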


We remark that integer values of $s$ give rise to special forms of the Gonchar function $\GoncharA$. Indeed, for $s = d - 1, d - 3, \dots$, the $s$-potential $U_s^{\sigma_d}$ and therefore the Gonchar function $\GoncharA$ reduces to a polynomial. 
If $(d-s)/2$ is a positive integer and $d$ is an even dimension, then successive application of the contiguous function relations in \cite[\S~15.5(ii)]{NIST:DLMF} to \eqref{eq:s.potential.A} lead to a linear combination of (cf. \cite[Eq.~15.4.2, 15.4.6]{NIST:DLMF})
\begin{equation*}
\Hypergeom{2}{1}{1/2,1}{3/2}{z^2} = \frac{1}{2z} \, \log \frac{1+z}{1-z} = \frac{1}{z} \, \atanh z, \qquad \Hypergeom{2}{1}{1/2,1}{1/2}{z^2} = \left( 1 - z^2 \right)^{-1} 
\end{equation*}
with unique coefficients that are rational functions of $z^2$, where here $z = 1/R$.
For the convenience of the reader, we record here that for $d = 4$ and $s = 2$, the $s$-potential in \eqref{eq:s.potential.A} reduces to (which can be verified directly by using, for example, MATHEMATICA)
\begin{equation*}
U_2^{\sigma_4}(R) = \frac{3}{8} \frac{R^2+1}{R^2} + \frac{3}{32} \frac{\left( R^2 - 1 \right)^2}{R^3} \log \frac{\left( R-1 \right)^2}{\left( R+1 \right)^2}
\end{equation*}
which yields the Gonchar function of the first kind,
\begin{equation*}
\begin{split}
\GoncharA( 4, 2, q; R ) 
&= \left( \frac{3/4}{q} \left( R - 1 \right)^4 - \left( R + 1 \right)^2 \right) R^3 + \frac{12}{32} \left( R - 1 \right)^4 R \left( R^2 + 1 \right) \\
&\phantom{=}+ \frac{3}{32} \left( R - 1 \right)^4 \left( R^2 - 1 \right)^2 \log \frac{(R - 1)^2}{(R + 1)^2},
\end{split}
\end{equation*}
and for $d = 6$ and $s = 4$ or $s = 2$ one has
\begin{align*}
U_4^{\sigma_6}(R) &= \frac{15}{32} \, \frac{R^4 - \frac{2}{3} R^2 + 1}{R^4} + \frac{15}{128} \, \frac{\left(R^2 + 1\right) \left( R^2 - 1 \right)^2}{R^5} \, \log \frac{\left( R-1 \right)^2}{\left( R+1 \right)^2}, \\
U_2^{\sigma_6}(R) &= - \frac{15}{128} \, \frac{\left(R^2 + 1\right) \left(R^4 - \frac{14}{3} R^2 + 1\right)}{R^4} - \frac{15}{512} \, \frac{\left(R^2-1\right)^4}{R^5} \, \log \frac{\left( R-1 \right)^2}{\left( R+1 \right)^2}.
\end{align*}
(This representations hold, in fact, for all $R \in (0,1) \cup (1, \infty)$.) 
Further analysis shows that for even $d$ and $s = d - 2, d - 4, \dots$, the Gonchar function of the first kind reduces to
\begin{equation*}
\GoncharA( d, d - 2m, q; R ) = \frac{W_s(\mathbb{S}^d)}{q} \left( R - 1 \right)^d R^{d-1} + R \, P( R ) + Q( R ) \, \log \frac{\left( R-1 \right)^2}{\left( R+1 \right)^2}
\end{equation*}
for some polynomials $P$ and $Q$.
A curious fact is that for odd dimension $d \geq 3$ and $(d-s)/2$ a positive integer (that is, $s = d - 2m$), the $s$-potential $U_{d-2m}^{\sigma_d}( R )$ is a linear combination of a \emph{complete elliptic integral of the first kind},
\begin{equation*}
\EllipticK( m ) \DEF \int_0^{\pi/2} \frac{\dd \theta}{\sqrt{1 - m \left( \sin \theta \right)^2}} = \frac{\pi}{2} \, \Hypergeom{2}{1}{1/2,1/2}{1}{m},
\end{equation*}
and a \emph{complete elliptic integral of the second kind},
\begin{equation*}
\EllipticE( m ) \DEF \int_0^{\pi/2} \sqrt{1 - m \left( \sin \theta \right)^2} \, \dd \theta = \frac{\pi}{2} \, \Hypergeom{2}{1}{-1/2,1/2}{1}{m},
\end{equation*}
with $m = 1 / R^2$ and with coefficients that are rational functions of $1/R^2$. \footnote{A similar relation holds if $0 < R < 1$. Then $m = R^2$.}
This follows by applying contiguous function relations for hypergeometric functions to \eqref{eq:s.potential.A} (cf. \cite[\S~15.5(ii)]{NIST:DLMF}). 
For example, for $d = 3$ and $s = d - 2 = 1$, one has
\begin{equation*}
U_1^{\sigma_d}( R ) = \frac{4}{3\pi} \, \frac{1 + R^2}{R^2} \, \EllipticE( R^2 ) - \frac{4}{3\pi} \, \frac{1 - R^2}{R^2} \, \EllipticK( R^2 ), \qquad 0 \leq R < 1.
\end{equation*}

If $(d-s)/2$ is not an integer, then the linear transformation \cite[Eq.s~15.8.4]{NIST:DLMF} applied to~\eqref{eq:s.potential} followed by the linear transformation \cite[last in Eq.~15.8.1]{NIST:DLMF} gives the following formula valid for all $R \in (0,1) \cup (1, \infty)$,
\begin{equation} \label{eq:s.potential.C}
\begin{split}
U_s^{\sigma_d}(R) 
&= W_s(\mathbb{S}^d) R^{1-d} \left( \frac{R+1}{2} \right)^{2d-s-2} \Hypergeom{2}{1}{1-d/2,1-d+s/2}{1-(d-s)/2}{\frac{\left( R-1 \right)^2}{\left( R+1 \right)^2}} \\
&\phantom{=}+ \frac{\gammafcn((d+1)/2) \gammafcn((s-d)/2)}{2 \sqrt{\pi} \gammafcn(s/2)} \left| R - 1 \right|^{d-s} R^{1-d} \left( \frac{R+1}{2} \right)^{d-2} \\
&\phantom{=\pm}\times \Hypergeom{2}{1}{1-d/2,1-s/2}{1+(d-s)/2}{\frac{\left( R-1 \right)^2}{\left( R+1 \right)^2}}.
\end{split}
\end{equation}
Both hypergeometric functions reduce to a polynomial if $d$ is an even positive integer. In this case the Gonchar function reduces to
\begin{equation*}
\GoncharA( d, s, q; R ) = \frac{W_s(\mathbb{S}^d)}{q} \left( R - 1 \right)^d R^{d-1} + W_s( \mathbb{S}^d ) \left( R + 1 \right)^{d-s} P( R ) + C(d,s) \left( R - 1 \right)^{d-s} Q(  R )
\end{equation*}
for some polynomials $P$ and $Q$.

\subsection*{Gonchar's Problem for Interior Sources} 
{\em A positive (unit) point charge is placed inside the insulated unit sphere with total charge $+1$. What is the smallest distance from the point charge to the sphere so that the support of the extremal measure associated with the external field due to this interior source is just the entire sphere?} 

The trivial solution is to put the field source at the center of the sphere. 
Then the signed $s$-equilibrium $\eta_Q$ on $\mathbb{S}^d$ and the $s$-extremal measure $\mu_Q$ on $\mathbb{S}^d$ associated with the external field $Q( \PT{x} ) = q / | \PT{x} |^s$ coincide with the $s$-equilibrium measure $\sigma_d$ on $\mathbb{S}^d$. 
We are interested in non-trivial solutions. 
 
First, we answer this question for the classical Newtonian case ($s=d-1$). The maximum principle for harmonic functions implies that the $s$-potential of the $s$-equilibrium measure~$\sigma_d$ is constant on $\mathbb{S}^d$ and this extends to the whole unit ball ({\em Faraday cage effect}); that is, $U_{d-1}^{\sigma_d}(\PT{a}) = W_{d-1}(\mathbb{S}^d) = 1$ for all $\PT{a} \in \mathbb{R}^{d+1}$ with $| \PT{a} | \leq 1$. Assuming $0 < R < 1$ and $q > 0$, by Proposition~\ref{prop}, $\supp(\mu_{Q}) = \mathbb{S}^d$ if and only if
\begin{equation} \label{fundamental.relation.smaller.1}
1 / q \geq \left( 1 + R \right) \left( 1 - R \right)^{-d} - 1.
\end{equation}
It follows that for a positive external field ($q > 0$) induced by an interior source ($0 < R < 1$) the critical distance $R_q = R_q(d)$ (to the center of $\mathbb{S}^d$) is a solution of the following algebraic equation
\begin{equation} \label{equation.interior}
\GoncharB( d, q; R ) \DEF \left[ 1 + ( 1 / q ) \right] \left( 1 - R \right)^d - R - 1 = 0.
\end{equation}
For $d=2$ (classical Coulomb case) the answer to Gonchar's problem is 
\begin{equation*}
1 - R_q = \frac{ \sqrt{ 9 + 8 / q } - 1 }{ 2 ( 1 + 1 / q ) },
\end{equation*}
which reduces to $( \sqrt{17} - 1 ) / 4$ for $q = 1$. This number seems to have no special meaning. \footnote{Trivia: The digit sequence of $2 - R_1 = ( 3 + \sqrt{17} ) / 4$ is sequence A188485 of Sloane's OEIS~\cite{OEIS2013}. One feature is the periodic continued fraction expansion $[\overline{1,1,3}]$.} 

The answer to Gonchar's problem for the external field of \eqref{externalfield} for general parameters $0<s<d$, $0 < R < 1$, and $q > 0$ relies on solving the characteristic equation (cf. \eqref{eqsigned})
\begin{equation} \label{eq:characteristic.equation.interior.positive}
\frac{W_s(\mathbb{S}^d)}{q} = \frac{\left( 1 + R \right)^{d-s}}{\left( 1 - R \right)^d} - U_s^{\sigma_d}( R ).
\end{equation}
Taking into account that $0 \leq R < 1$, we can rewrite \eqref{eq:characteristic.equation.interior.positive} as
\begin{equation}
\GoncharB( d, s, q; R ) = 0,
\end{equation}
where we define the \emph{Gonchar function of the second kind},
\begin{equation} \label{eq:Gonchar.function.2nd.kind}
\GoncharB( d, s, q; R ) \DEF \frac{W_s(\mathbb{S}^d)}{q} \left( 1 - R \right)^d - \left( 1 + R \right)^{d-s} + \left( 1 - R \right)^d U_s^{\sigma_d}( R ).
\end{equation}
Answering Gonchar's problem for $0 < s < d$, $q > 0$, and $0 < R < 1$ amounts to finding the unique (cf. remark after Proposition~\ref{prop}) (real) zero in $(0,1)$ of the function $\GoncharB( d, s, q; R )$.
Moreover, the density $\eta_{R,q,s}^\prime$ of Theorem~\ref{thm:signed.equilibrium.sphere} evaluated at the North Pole can be expressed as
\begin{equation*}
\eta_{R,q,s}^\prime( \PT{p} ) = \frac{q}{W_s( \mathbb{S}^d )} \, \frac{\GoncharB( d, s, q; R )}{\left( 1 - R \right)^{d}}, \qquad 0 < R < 1, q > 0, 0 < s < d.
\end{equation*}

The quadratic transformation formula for Gauss hypergeometric functions \cite[Eq.~15.3.17]{AbSt1992} applied to the formula of $U_s^{\sigma_d}( R )$ of \eqref{eq:s.potential} yields
\begin{equation} \label{eq:s.potential.B}
U_s^{\sigma_d}( R ) = \Hypergeom{2}{1}{-(d-1-s)/2,s/2}{(d+1)/2}{R^2}, \qquad 0 \leq R < 1.
\end{equation}
The hypergeometric function simplifies to $1$ if $s = d - 1$ (harmonic case) and reduces to a polynomial of degree $m$ in $R^2$ for $d - 1 - s = 2m$. Thus $R_q$ will be an algebraic number if $W_s(\mathbb{S}^d) / q$ is algebraic. 

\begin{thm} \label{thm:Gonchar.B}
For the external field $Q$ of \eqref{externalfield} with $0 < s < d$, $q > 0$, and $0 < R < 1$ the signed $s$-equilibrium is a positive measure on all of~$\mathbb{S}^d$ if and only if $0 \leq R \leq R_q$, where $R_q$ is the unique (real) zero in $(0,1)$ of the Gonchar function $\GoncharB(d,s,q;z)$. If $s = d - 1 - 2m$ for $m$ a non-negative integer, then $\GoncharB(d,d-1-2m,q;z)$ is a polynomial. 

In particular, the solution to Gonchar's problem is given by $\rho( d, s ) = 1 - R_q$ (and the trivial solution $1$). 
\end{thm}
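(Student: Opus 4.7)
My plan is to translate Proposition~\ref{prop}(a) directly into a statement about the sign of $\GoncharB(d,s,q;\cdot)$. For $0 < R < 1$ and $q > 0$, that proposition says $\mu_Q = \eta_Q$ is a nonnegative measure on the full sphere precisely when $W_s(\mathbb{S}^d)/q \geq (1+R)^{d-s}/(1-R)^d - U_s^{\sigma_d}(R)$; multiplying through by the positive quantity $(1-R)^d$ and comparing with \eqref{eq:Gonchar.function.2nd.kind} turns this into $\GoncharB(d,s,q;R) \geq 0$. So the iff part reduces to showing that $\GoncharB(d,s,q;\cdot)$ has a unique zero $R_q$ in $(0,1)$ and is strictly positive on $[0, R_q)$ and strictly negative on $(R_q, 1)$.

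The uniqueness and location of $R_q$ will rest on the strict monotonicity asserted in the remark after Proposition~\ref{prop}: the right-hand side of \eqref{eqsigned} is strictly increasing in $R$ on $(0,1)$. Two boundary evaluations then close the argument. At $R = 0$ the source lies at the center of $\mathbb{S}^d$, so $U_s^{\sigma_d}(0) = 1$ and the right-hand side of \eqref{eqsigned} equals $0$; equivalently, $\GoncharB(d,s,q;0) = W_s(\mathbb{S}^d)/q > 0$. As $R \to 1^-$, $(1-R)^d \to 0$ while $U_s^{\sigma_d}(R)$ remains bounded --- which follows from Gauss's summation theorem applied to the hypergeometric representation \eqref{eq:s.potential.B}, valid since $s < d$ --- so $\GoncharB(d,s,q;R) \to -2^{d-s} < 0$. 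Continuity plus strict monotonicity then produce a unique sign change at some $R_q \in (0,1)$, with $\GoncharB \geq 0$ precisely on $[0, R_q]$.

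The polynomial assertion is read off from \eqref{eq:s.potential.B}: when $s = d - 1 - 2m$, the upper parameter $-(d-1-s)/2$ equals $-m$, so the hypergeometric series terminates, $U_s^{\sigma_d}(R)$ becomes a polynomial in $R^2$ of degree $m$, and \eqref{eq:Gonchar.function.2nd.kind} makes $\GoncharB(d,d-1-2m,q;z)$ a polynomial in $z$. Geometrically, the Euclidean distance from $\PT{a} = (\PT{0}, R)$ to $\mathbb{S}^d$ is $1 - R$, so the admissible distances for which $\supp(\mu_Q) = \mathbb{S}^d$ form the interval $[1 - R_q, 1]$; the smallest nontrivial answer to Gonchar's question is $\rho(d,s) = 1 - R_q$, while $R = 0$ (distance $1$) is the trivial solution.

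The main obstacle is really borrowed from Section~\ref{sec:proofs}: the strict monotonicity of $(1+R)^{d-s}/(1-R)^d - U_s^{\sigma_d}(R)$ on $(0,1)$. Differentiating \eqref{eq:s.potential.B} via the contiguous relation $\tfrac{\dd}{\dd z}\Hypergeom{2}{1}{a,b}{c}{z} = (ab/c)\Hypergeom{2}{1}{a+1,b+1}{c+1}{z}$ expresses the derivative of the first summand as a sum of manifestly positive terms; in the superharmonic regime $s < d-1$ the derivative of $-U_s^{\sigma_d}$ is also positive and nothing further is required, whereas in the subharmonic regime $s > d-1$ a hypergeometric correction must be dominated by the $(1-R)^{-d-1}$ blow-up, which is where the real estimate lives. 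Once that monotonicity is in hand, the remainder of the proof above is essentially bookkeeping.
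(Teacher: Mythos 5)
Your reduction to Proposition~\ref{prop}(a) and the identity
$\GoncharB(d,s,q;R) = (1-R)^d\bigl[\,W_s(\mathbb{S}^d)/q - (1+R)^{d-s}/(1-R)^d + U_s^{\sigma_d}(R)\,\bigr]$
is exactly what the theorem rests on, and your endpoint evaluations
$\GoncharB(d,s,q;0)=W_s(\mathbb{S}^d)/q>0$ and
$\GoncharB(d,s,q;R)\to -2^{d-s}<0$ as $R\to 1^-$
(via Gauss's theorem, valid since $c-a-b=d-s>0$) combined with the monotonicity from the first remark after Proposition~\ref{prop} give the unique sign change at $R_q$; the terminating-series argument for $s=d-1-2m$ and the geometric reading $\rho=1-R_q$ are also as in the paper. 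So the proof of the theorem itself is correct and follows the paper's route.

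The one place you diverge from the paper is in the sketch of the monotonicity lemma. You propose differentiating $(1+R)^{d-s}/(1-R)^d$ and $U_s^{\sigma_d}(R)$ separately and note that in the subharmonic regime $d-1<s<d$ the increasing $U_s^{\sigma_d}$ creates a competing term, for which you would need to show the $(1-R)^{-d-1}$ blow-up of the first derivative dominates the cusp singularity of $(U_s^{\sigma_d})'$ near $R=1$ — a genuine estimate that you leave unfilled. The paper avoids this entirely by first applying the linear transformation \cite[Eq.~15.8.1]{NIST:DLMF} to \eqref{eq:s.potential}, rewriting the right-hand side of \eqref{eqsigned} as a single product
$\frac{(1+R)^{d-s}}{(1-R)^d}\bigl[1-\Hypergeom{2}{1}{d-s/2,d/2}{d}{-4R/(1-R)^2}\bigr]$,
where both factors are nonnegative and increasing on $(0,1)$ uniformly in $s\in(0,d)$, so no case split and no dominance estimate is needed. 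Since you explicitly invoke the remark as a black box, this is not a gap in your proof of Theorem~\ref{thm:Gonchar.B}, but if you intend to prove the remark along the lines you sketched, you would need to supply that estimate; the paper's combined-hypergeometric route is the shorter path.
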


We remark that in the case of $s \neq d - 1 - 2m$, $m$ a non-negative integer, one can use alternative representations of $U_s^{\sigma_d}( R )$ similar to those derived after Theorem~\ref{thm:Gonchar.A}.

\subsection*{Connecting Interior and Exterior External Fields}

The Riesz-$s$ external fields of the form \eqref{externalfield} induced by an interior ($0 < R^\prime < 1$, $q^\prime > 0$) and an exterior ($R > 1$, $q > 0$) point source giving rise to signed $s$-equilibria $\eta_{Q^\prime}$ and $\eta_Q$ on $\mathbb{S}^d$ with the same weighted $s$-potential,
\begin{equation*}
U_s^{\eta_{Q^\prime}}( \PT{x} ) + Q^\prime( \PT{x} ) = U_s^{\eta_{Q}}( \PT{x} ) + Q( \PT{x} ) \qquad \text{everywhere on $\mathbb{S}^d$,}
\end{equation*}
are connected by the following necessary and sufficient condition (cf. \eqref{eq:G.sphere.Q.s})
\begin{equation} \label{basic.relation}
q^\prime \, U_s^{\sigma_d}( R^\prime ) = q \, U_s^{\sigma_d}( R ). 
\end{equation}

One way to realize this condition is known as the {\em principle of inversion for $\mathbb{S}^d$}. For a fixed $s \in (0, d)$, the principle states that to a charge $q > 0$ at distance $R > 1$ from the center of the sphere there corresponds a charge $q^\prime = q R^{-s}$ at distance $R^\prime = 1 / R$ so that the $s$-equilibria $\eta_Q$ and $\eta_{Q^\prime}$ coincide and thus have the same weighted $s$-potential on $\mathbb{S}^d$. (Indeed, the hypergeometric function in \eqref{eq:s.potential} is invariant under inversion $R \mapsto 1 / R$. The adjustment of the charge follows from \eqref{basic.relation}. An inspection of \eqref{eta.Q} shows that $\eta_{Q^\prime} = \eta_Q$.) 

Now, if we require \eqref{basic.relation} but do not assume that $\eta_{Q^\prime} = \eta_Q$, then new phenomena emerge. 
The potential-theoretic regime (superharmonic, harmonic, subharmonic) determines for what ratios $q^\prime / q$, radii $R^\prime$ and $R$ exist so that \eqref{basic.relation} can be satisfied; cf. Figure~\ref{fig:typical.s.potentials}.
\begin{figure}[tb]
\begin{center}
\includegraphics[scale=.8]{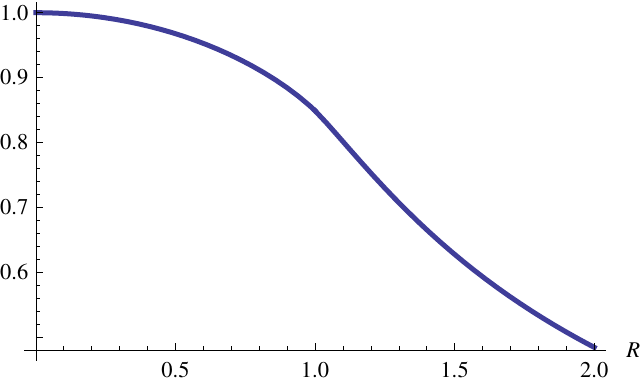}
\includegraphics[scale=.8]{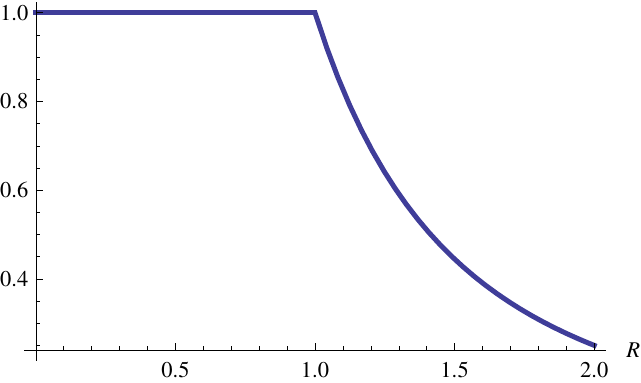}
\includegraphics[scale=.8]{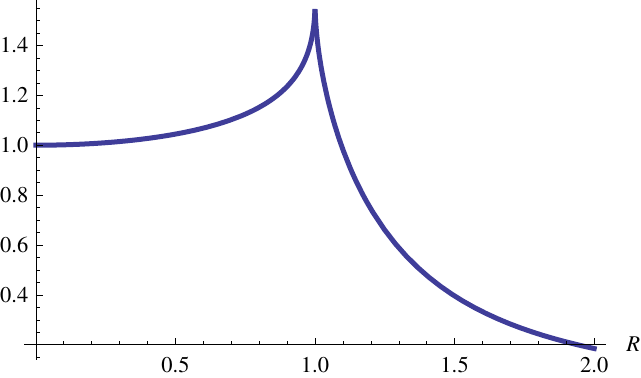}
\caption{\label{fig:typical.s.potentials} The typical behavior of the $s$-potential $U_s^{\sigma_d}( R )$ (for a formula see \eqref{eq:s.potential}) in the strictly superharmonic, harmonic, and strictly subharmonic case shown here for $d = 3$ and $s = (d - 1)/2$, $s = d - 1$, and $s = d - 1/2$.}
\end{center}
\end{figure}
Basic calculus\footnote{Using in \eqref{eq:s.potential.B} (if $0 \leq R < 1$) and \eqref{eq:s.potential} (if $R > 1$).} shows that in the strictly superharmonic case ($0 < s < d - 1$) the continuous $s$-potential $U_s^{\sigma_d}( R )$ is strictly monotonically decreasing on $(0,\infty)$. This implies that for any positive charges $q^\prime$ and $q$ with $0 < q^\prime / q < 1$, each $R^\prime \in (0,1)$ uniquely determines an $R > 1$ such that \eqref{basic.relation} holds. 
In the harmonic case ($s = d - 1$), \eqref{basic.relation} reduces to $q^\prime = q \, R^{1-d}$, since $U_{d-1}^{\sigma_d}( R^\prime ) = 1$ on $[0,1]$ (cf. \eqref{eq:s.potential.B}), each $R^\prime \in (0, 1)$ is mapped to $R = ( q^\prime / q )^{1 / ( 1 - d )}$ provided $q^\prime < q$. 
In the strictly subharmonic case ($d - 1 < s < d$) the $s$-potential $U_s^{\sigma_d}( R )$ is strictly monotonically increasing on $(0,1)$ and strictly monotonically decreasing on $(1,\infty)$. This implies that for any positive charges $q^\prime$ and $q$ with $0 < q^\prime / q < W_s( \mathbb{S}^d )$, each $R^\prime \in (0,1)$ uniquely determines an $R > 1$ such that \eqref{basic.relation} holds. 

As an example we consider the case $d = 2$ and $s = 3/2$ and assume that $q^\prime = q$. Then each $R^\prime \in (0, 1)$ determines a unique $R > 1$ satisfying $U_{3/2}^{\sigma_2}( R^\prime ) = U_{3/2}^{\sigma_2}( R )$, or equivalently, 
\begin{equation*}
\frac{2}{\sqrt{1+R^\prime} + \sqrt{1-R^\prime}} = \frac{\sqrt{R+1} - \sqrt{R-1}}{R}.
\end{equation*}
Curiously, a positive charge at the center of the sphere (that is, $R^\prime=0$ in above relation) would require that $R$ is a solution of the equation ${(\sqrt{R+1} - \sqrt{R-1}) / R = 1}$; that is, one can show that $R$ is the smaller of the two positive zeros of the minimal polynomial $x^4 - 4 x^3 + 4$.

\begin{rmk}
The equations \eqref{equation} and \eqref{equation.interior} characterizing the corresponding critical distance $R_q$ are related by the principle of inversion; that is, application of the transformations $R \mapsto 1 / R$ and $q \mapsto q R^{1-d}$ to either of the two reproduces the other one.
\end{rmk}

\subsection*{Negative External Fields}

A Gonchar type question can also be asked for Riesz external fields $Q$ of the form \eqref{externalfield} induced by a negative point source. The density $\eta_{R,q,s}^\prime$ of the signed equilibrium on $\mathbb{S}^d$ associated with $Q$, given in \eqref{eta.Q}, assumes its minimum value at the South Pole provided $R \in (0,1) \cup (1, \infty)$. Thus, Gonchar's problem concerns the distance from the North Pole such that $\eta_{R,q,s}^\prime$ is zero at the South Pole.
Answering this question for negative Riesz external fields is more subtle than for positive external fields. As discussed in the second remark after Proposition~\ref{prop}, a solution of the characteristic equation (cf. \eqref{neg.eqsigned})
\begin{equation} \label{eq:characteristic.equation.neg.charge}
\frac{W_s(\mathbb{S}^d)}{q} = f( R ) \DEF \frac{\left|R-1\right|^{d-s}}{\left(R+1\right)^d} - U_s^{\sigma_d}(R)
\end{equation}
exists for $q$ satisfying $W_s( \mathbb{S}^d ) / q \geq \min_{R \geq 0} f(R)$ (cf. Figure~\ref{fig:typical.f.R}). 

For exterior negative point sources ($R >1$, $0 < s < d$), we have $f( R ) > f( 1 ) = - W_s( \mathbb{S}^d )$ on $( 1, \infty )$. Thus, a critical distance $R_q$ (and therefore an answer to Gonchar's problem) exists if and only if $q < -1$. (If $R_q$ exists, then it is unique.) Defining the \emph{Gonchar function of the third kind},
\begin{equation} \label{eq:Gonchar.function.3rd.kind}
\GoncharC( d, s, q; R ) \DEF \left( \frac{W_s(\mathbb{S}^d)}{q} \left( R + 1 \right)^d - \left( R - 1 \right)^{d-s} \right) R^{d-1} + R^{d-1} \left( R + 1 \right)^d U_s^{\sigma_d}( R ),
\end{equation}
we can rewrite \eqref{eq:characteristic.equation.neg.charge} as
\begin{equation*}
\GoncharC( d, s, q; R ) = 0.
\end{equation*}
Given $q < -1$, the critical distance $R_q$ is the unique (real) zero of $\GoncharC( d, s, q; R )$ in $( 1, \infty )$. Moreover, the density $\eta_{R,q,s}^\prime$ of Theorem~\ref{thm:signed.equilibrium.sphere} evaluated at the South Pole can be expressed as
\begin{equation*}
\eta_{R,q,s}^\prime( -\PT{p} ) = \frac{q}{W_s( \mathbb{S}^d )} \, \frac{\GoncharC( d, s, q; R )}{R^{d-1} \left( R + 1 \right)^{d}}, \qquad R > 1, q < 0, 0 < s < d.
\end{equation*}

The following result generalizes \cite[Theorem~5]{BrDrSa2012} to exterior negative Riesz external fields.

\begin{thm} \label{thm:Gonchar.C}
For the external field $Q$ of \eqref{externalfield} with $0 < s < d$, $q < 0$, and $R > 1$ the signed $s$-equilibrium is a positive measure on all of~$\mathbb{S}^d$ if and only if one of the following conditions holds
\begin{enumerate}[\bf (i)]
\item $q \in [-1, 0)$ and $R > 1$. Gonchar's problem has no solution (no critical distance).
\item $q < -1$ and $R \geq R_q$, where $R_q$ is the unique (real) zero in $(1,\infty)$ of the Gonchar function $\GoncharC(d,s,q;z)$. The solution to Gonchar's problem is $R_q - 1$.
\end{enumerate}

If $s = d - 1 - 2m$ for $m$ a non-negative integer, then $\GoncharC(d,d-1-2m,q;z)$ is a polynomial.
\end{thm}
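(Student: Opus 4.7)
The plan is to apply Proposition~\ref{prop}(b) to convert the positivity question for $\eta_Q$ into the scalar inequality $W_s(\mathbb{S}^d)/q \leq f(R)$, analyze the function $f$ on $(1,\infty)$ to derive the case split on $q$, and then recast the equality case in the algebraic form $\GoncharC(d,s,q;R)=0$, with the polynomial property read off from the hypergeometric representation~\eqref{eq:s.potential.A}.

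First, by Proposition~\ref{prop}(b) the signed $s$-equilibrium is a positive measure on $\mathbb{S}^d$ if and only if
\begin{equation*}
\frac{W_s(\mathbb{S}^d)}{q} \;\leq\; f(R) \;=\; \frac{(R-1)^{d-s}}{(R+1)^d} - U_s^{\sigma_d}(R), \qquad R>1.
\end{equation*}
I would then invoke the qualitative description of $f$ on $(1,\infty)$ promised by the second remark after Proposition~\ref{prop}: $f$ is continuous and strictly monotonically increasing on $(1,\infty)$ with $f(1^+)=-W_s(\mathbb{S}^d)$ and horizontal asymptote $0$. Given this, the case split is immediate. If $q\in[-1,0)$, then $W_s(\mathbb{S}^d)/q \leq -W_s(\mathbb{S}^d) < f(R)$ for every $R>1$, so the condition holds for all $R$ and no critical distance exists, proving~(i). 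If $q<-1$, then $W_s(\mathbb{S}^d)/q \in (-W_s(\mathbb{S}^d),0)$ lies strictly inside the range of $f$, so by the intermediate value theorem combined with strict monotonicity there is a unique $R_q \in (1,\infty)$ with $f(R_q)=W_s(\mathbb{S}^d)/q$; the same monotonicity gives $\eta_Q \geq 0$ on $\mathbb{S}^d$ precisely when $R\geq R_q$, proving~(ii).

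Second, I would recast the defining equation $f(R)=W_s(\mathbb{S}^d)/q$ algebraically by clearing the denominator $(R+1)^d$ and multiplying through by $R^{d-1}$, thereby obtaining
\begin{equation*}
\left(\frac{W_s(\mathbb{S}^d)}{q}(R+1)^d - (R-1)^{d-s}\right) R^{d-1} + R^{d-1}(R+1)^d\, U_s^{\sigma_d}(R) \;=\; 0,
\end{equation*}
which is exactly $\GoncharC(d,s,q;R)=0$ from \eqref{eq:Gonchar.function.3rd.kind}. The auxiliary factor $R^{d-1}$ is inserted so that the polynomial claim follows: when $s=d-1-2m$, representation~\eqref{eq:s.potential.A} writes $U_s^{\sigma_d}(R)=R^{-(d-1-2m)} \cdot p(1/R^2)$ with $p$ a polynomial of degree $m$, hence $R^{d-1}U_s^{\sigma_d}(R)=R^{2m}p(1/R^2)$ is a polynomial in $R$ of degree $2m$; the remaining terms of $\GoncharC$ are manifestly polynomial, and the final assertion is established.

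The principal obstacle is the qualitative claim on $f$ that drives case~(ii), specifically strict monotonicity of $f$ on $(1,\infty)$. The boundary values are easy: at $R=1^+$ the first term of $f$ vanishes while $U_s^{\sigma_d}(1)=W_s(\mathbb{S}^d)$ because $\sigma_d$ is the $s$-equilibrium measure of $\mathbb{S}^d$, giving $f(1^+)=-W_s(\mathbb{S}^d)$; and as $R\to\infty$ both $(R-1)^{d-s}/(R+1)^d$ and $U_s^{\sigma_d}(R)$ decay like $R^{-s}$, yielding $f(\infty)=0$. Strict monotonicity, however, requires an honest derivative computation using a convenient representation of $U_s^{\sigma_d}$ such as \eqref{eq:s.potential.A} or \eqref{eq:s.potential.C}; since the paper defers those technical details to Section~\ref{sec:proofs}, I would import the monotonicity from there rather than redo the calculation.
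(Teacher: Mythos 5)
Your proposal is correct and takes essentially the same approach as the paper: the paper does not present a standalone proof of Theorem~\ref{thm:Gonchar.C}, but proves the requisite properties of $f$ (continuity, $f(1^+) = -W_s(\mathbb{S}^d)$, strict monotonicity on $(1,\infty)$, horizontal asymptote at $0$) in the proof of the second remark after Proposition~\ref{prop}, and then the theorem follows exactly as you describe via Proposition~\ref{prop}(b), the intermediate value theorem, and the algebraic recasting into $\GoncharC=0$; the polynomial claim follows from the terminating hypergeometric series in \eqref{eq:s.potential.A} exactly as you argue.
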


For an external field with negative point source inside the sphere ($0 < R < 1$), one needs to differentiate between the (i) strictly superharmonic ($0 < s < d - 1$), (ii) harmonic ($s = d - 1$), or (iii) strictly subharmonic ($d - 1 < s < d$) case; cf. Figure~\ref{fig:typical.f.R}. We define the \emph{Gonchar function of the fourth kind},
\begin{equation} \label{eq:Gonchar.function.4th.kind}
\GoncharD( d, s, q; R ) \DEF \frac{W_s(\mathbb{S}^d)}{q} \left( 1 + R \right)^d - \left( 1 - R \right)^{d-s} + \left( 1 + R \right)^d U_s^{\sigma_d}( R )
\end{equation}
and have
\begin{equation*}
\eta_{R,q,s}^\prime( -\PT{p} ) = \frac{q}{W_s( \mathbb{S}^d )} \, \frac{\GoncharD( d, s, q; R )}{\left( R + 1 \right)^{d}}, \qquad 0 < R < 1, q < 0, 0 < s < d.
\end{equation*}

In case (i), the function $f(R)$ in \eqref{eq:characteristic.equation.neg.charge} has a single minimum in $(0,1)$, say, at $R^*$ with $f(R^*) < - W_s(\mathbb{S}^d)$. Set $q^* \DEF W_s(\mathbb{S}^d) / f(R^*)$. Then \eqref{eq:characteristic.equation.neg.charge} has no solution in $(0,1)$ if ${q \in (q^*, 0)}$, two solutions in $(0,1)$ if $q \in ( -1, q^* )$ (as demonstrated for the special case $d = 4$ and $s = 1$ in \eqref{demonstration}) which degenerate to one as $q \to q^*$, and one solution in $(0,1)$ if $q \leq -1$.

\begin{thm} \label{thm:Gonchar.D.superharmonic}
For the external field $Q$ of \eqref{externalfield} with $0 < s < d-1$, $q < 0$, and $0 < R < 1$ the signed $s$-equilibrium is a positive measure on all of~$\mathbb{S}^d$ if and only if one of the following conditions holds
\begin{enumerate}[\bf (i)]
\item $q \in (q^*, 0)$ and $R \in (0, 1)$. Gonchar's problem has no solution as there exists no critical distance.
\item $q = q^*$ and $R \in (0, 1)$. The solution to Gonchar's problem is $1 - R^*$ as $\eta_{R,q,s}^\prime( -\PT{p} ) = 0$ at $R = R^*$ but $\eta_{R,q,s}^\prime( -\PT{p} ) > 0$ for $R \in (0, R^*) \cup (R^*, 1)$.
\item $q \in ( -1, q^* )$ and $R \in ( 0, R_{q,1} ) \cup ( R_{q,2}, 1 )$, where $R_{q,1}$ and $R_{q,2}$ are the two only (real) zeros in $(0,1)$ of the Gonchar function $\GoncharD(d,s,q;z)$. Gonchar's problem has two solutions $1 - R_{q,1}$ and $1 - R_{q,2}$.
\item $q \leq -1$ and $R \in (0, R_q]$, where $R_q$ is the unique (real) zero in $(0,1)$ of the Gonchar function $\GoncharD(d,s,q;z)$. The solution to Gonchar's problem is $1 - R_q$. 
\item $q < 0$ and $R = 0$ (trivial solution). Gonchar's problem has no solution.
\end{enumerate}

If $s = d - 1 - 2m$ for $m$ a non-negative integer, then $\GoncharD(d,d-1-2m,q;z)$ is a polynomial.
\end{thm}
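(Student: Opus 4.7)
The plan is to leverage Proposition~\ref{prop}(b): since $q < 0$, the signed equilibrium $\eta_Q$ is nonnegative on $\mathbb{S}^d$ (and then coincides with $\mu_Q$) if and only if
\begin{equation*}
\frac{W_s(\mathbb{S}^d)}{q} \leq f(R) \DEF \frac{(1-R)^{d-s}}{(1+R)^d} - U_s^{\sigma_d}(R), \qquad 0 < R < 1.
\end{equation*}
A direct rearrangement shows that equality here is equivalent to $\GoncharD(d,s,q;R) = 0$, and that the South-Pole value of the density satisfies
\begin{equation*}
\eta_{R,q,s}^\prime(-\PT{p}) = \frac{q}{W_s(\mathbb{S}^d)} \, \frac{\GoncharD(d,s,q;R)}{(R+1)^d}.
\end{equation*}
Since for $q < 0$ the density $\eta_{R,q,s}^\prime$ attains its minimum at $-\PT{p}$ (noted after Theorem~\ref{thm:signed.equilibrium.sphere}), the entire theorem reduces to a one-variable study of $f$ on $[0,1)$ and to counting intersections of its graph with the horizontal line $y = W_s(\mathbb{S}^d)/q$.

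The qualitative analysis of $f$ proceeds in two stages. First the boundary values: $U_s^{\sigma_d}(0) = 1$ yields $f(0) = 0$, and $U_s^{\sigma_d}(1) = W_s(\mathbb{S}^d)$ together with $(1-R)^{d-s} \to 0$ gives $f(1^-) = -W_s(\mathbb{S}^d)$. One then needs to establish that in the strictly superharmonic range $0 < s < d-1$, the function $f$ attains a unique interior minimum at some $R^* \in (0,1)$ with $f(R^*) < -W_s(\mathbb{S}^d)$. The strict inequality follows from a first-order expansion about $R = 1$: because $U_s^{\sigma_d}$ is continuously differentiable and strictly monotonically decreasing on $(0,\infty)$ in the superharmonic regime, one has $U_s^{\sigma_d}(1 - t) = W_s(\mathbb{S}^d) + a\,t + O(t^2)$ with $a > 0$, while $(1-R)^{d-s}/(1+R)^d = O(t^{d-s}) = o(t)$ since $d - s > 1$; consequently $f(1 - t) = -W_s(\mathbb{S}^d) - a\,t + o(t)$, dipping strictly below $-W_s(\mathbb{S}^d)$ for small $t > 0$. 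Uniqueness of the minimizer is the main technical obstacle; I plan to attack it by differentiating $f$ and using the explicit Gauss hypergeometric representation~\eqref{eq:s.potential.B} of $U_s^{\sigma_d}$ to show that $f^\prime$ has a single sign-change on $(0,1)$.

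Once this shape of $f$ is secured, set $q^* \DEF W_s(\mathbb{S}^d)/f(R^*)$, which lies in $(-1,0)$ because $f(R^*) < -W_s(\mathbb{S}^d) < 0$. The five cases of the theorem correspond to the five qualitatively distinct positions of the horizontal line $y = W_s(\mathbb{S}^d)/q$ relative to the graph of $f$: for $q \in (q^*, 0)$ the line lies strictly below the minimum, so $W_s(\mathbb{S}^d)/q < f(R)$ holds everywhere on $(0,1)$ and no critical distance arises; for $q = q^*$ the line is tangent to the graph exactly at $R^*$, giving a single touch; for $q \in (-1, q^*)$ the line crosses the graph at exactly two points $R_{q,1} < R^* < R_{q,2}$, producing the two admissible subintervals; for $q \leq -1$ the line sits at or above $f(1^-) = -W_s(\mathbb{S}^d)$, so the equation $W_s(\mathbb{S}^d)/q = f(R)$ has a single solution $R_q \in (0, R^*)$ and $f \geq W_s(\mathbb{S}^d)/q$ only on $(0, R_q]$; finally the degenerate case $R = 0$ is where the field source coincides with the sphere center and $\eta_Q = \sigma_d$ automatically. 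In each case the identification of the critical distance(s) as the zero(s) of $\GoncharD(d,s,q;\,\cdot\,)$ is immediate from the equivalence established at the outset.

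Lastly, when $s = d - 1 - 2m$ for a nonnegative integer $m$, the Gauss hypergeometric in~\eqref{eq:s.potential.B} terminates and $U_s^{\sigma_d}(R)$ is a polynomial of degree $2m$ in $R$. Substitution into~\eqref{eq:Gonchar.function.4th.kind} then makes $\GoncharD(d,d-1-2m,q;R)$ a polynomial in $R$, completing the proof. As already emphasized, the remaining delicate point is a rigorous exclusion of extra interior critical points of $f$, which the authors address in Section~\ref{sec:proofs}.
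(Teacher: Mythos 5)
Your overall strategy coincides with the paper's: reduce everything to the shape of $f(R) = (1-R)^{d-s}(1+R)^{-d} - U_s^{\sigma_d}(R)$ on $(0,1)$ via Proposition~\ref{prop}(b), compute $f(0)=0$ and $f(1^-)=-W_s(\mathbb{S}^d)$, show $f$ dips strictly below $-W_s(\mathbb{S}^d)$ at an interior point, establish that the minimizer $R^*$ is unique, and then read off the five cases from the position of the level $W_s(\mathbb{S}^d)/q$. The case analysis (i)--(v), the definition $q^* = W_s(\mathbb{S}^d)/f(R^*)$, and the polynomial reduction when $s=d-1-2m$ are all correct and match the paper.

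However, the central claim --- that $f$ has a \emph{unique} interior minimum --- is explicitly left unproved in your proposal (``I plan to attack it \ldots'', ``the remaining delicate point \ldots which the authors address''). This is not a minor detail: it is \emph{the} load-bearing step, and without it the five-fold case split is unjustified. The paper handles it by writing $g(R) = (1-R)^{d-s}(1+R)^{-d}$ and observing that $f'(R)=0$ is equivalent to
\[
h(R) \DEF -\frac{g'(R)}{R} = \bigl[\,2d - s(1+R)\,\bigr]\frac{(1-R)^{d-s-1}}{R\,(1+R)^{d+1}}
\;=\; \frac{s(d-1-s)}{d+1}\,\Hypergeom{2}{1}{1-(d-1-s)/2,\,1+s/2}{1+(d+1)/2}{R^2},
\]
where the right-hand side comes from differentiating \eqref{eq:s.potential.B}. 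They show $h$ is strictly decreasing on $(0,1)$ (by checking $\{R(1+R)^{d+1}h(R)\}' < 0$) with $h(0^+)=+\infty$, $h(1)=0$, while the right-hand side is positive at $R=0$ and monotone (decreasing, constant, or increasing according as $s<d-3$, $s=d-3$, $d-3<s<d-1$). Combined with $f'(0)=s-2d<0$ and $f'(1^-)=\tfrac{s}{2}W_s(\mathbb{S}^d)>0$, this yields exactly one zero of $f'$ and hence the unique minimum. You would need to reproduce this (or an equivalent) argument to close the gap.

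A second, smaller imprecision: you infer $a>0$ in $U_s^{\sigma_d}(1-t)=W_s(\mathbb{S}^d)+at+O(t^2)$ solely from ``strictly monotonically decreasing'', but strict monotonicity alone does not preclude a vanishing derivative at $R=1$. What is actually true (see the Appendix, \eqref{eq:critical.R.EQ.1.limit}) is the explicit value $\lim_{R\to1}\tfrac{\dd}{\dd R}U_s^{\sigma_d}(R)=-\tfrac{s}{2}W_s(\mathbb{S}^d)<0$ in the strictly superharmonic regime, which is what the paper uses (equivalently, $f'(1^-)=\tfrac{s}{2}W_s(\mathbb{S}^d)>0$). Cite that instead of the monotonicity heuristic.
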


In case (ii) the function $f(R)$ is strictly monotonically decreasing and convex on $(0,1)$ with $-1 = -W_{d-1}(\mathbb{S}^d) \leq f(R) \leq 0$. Hence, \eqref{eq:characteristic.equation.neg.charge} has no solution in $(0,1)$ if $q \in [-1, 0)$ and one solution in $(0,1)$ if $q < -1$.

\begin{thm} \label{thm:Gonchar.D.harmonic}
For the external field $Q$ of \eqref{externalfield} with $s = d-1$, $q < 0$, and $0 < R < 1$ the signed $s$-equilibrium is a positive measure on all of~$\mathbb{S}^d$ if and only if one of the following conditions holds
\begin{enumerate}[\bf (i)]
\item $q \in [-1, 0)$ and $R \in (0, 1)$. Gonchar's problem has no solution (no critical distance).
\item $q < -1$ and $R \in (0, R_q]$, where $R_q$ is the unique (real) zero in $(0,1)$ of the Gonchar function $\GoncharD(d,s,q;z)$. The solution to Gonchar's problem is $1 - R_q$.
\item $q < 0$ and $R = 0$ (trivial solution). Gonchar's problem has no solution.
\end{enumerate}

If $s = d - 1 - 2m$ for $m$ a non-negative integer, then $\GoncharD(d,d-1-2m,q;z)$ is a polynomial.
\end{thm}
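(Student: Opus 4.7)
The plan is to apply Proposition~\ref{prop}(b) in the harmonic regime, where two simplifications collapse the problem. First, $W_{d-1}(\mathbb{S}^d)=1$ by \eqref{eq:W.s.S.d}. Second, by the mean-value property for harmonic functions (the Faraday cage effect already invoked in the paper), $U_{d-1}^{\sigma_d}(R)=1$ throughout $[0,1]$. With these, the criterion \eqref{neg.eqsigned} reduces to $1/q \le f(R)$ with
\[
f(R) \;=\; \frac{1-R}{(1+R)^d}-1,\qquad 0\le R<1.
\]
First I would record the qualitative behavior of $f$: a routine differentiation gives $f'(R)=-[1+d-R(d-1)]/(1+R)^{d+1}$, whose numerator is linear in $R$ and strictly positive on $[0,1]$, so $f$ is continuous and strictly decreasing with $f(0)=0$ and $f(1^-)=-1=-W_{d-1}(\mathbb{S}^d)$; convexity follows from one further differentiation but is not logically required for the argument.

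Next I would perform the case analysis according to where $1/q$ sits in the range of $f$. If $q\in[-1,0)$, then $1/q\le -1 < f(R)$ for every $R\in(0,1)$, so \eqref{neg.eqsigned} holds strictly on the whole interval; Proposition~\ref{prop} then gives $\mu_Q=\eta_Q\ge 0$ on all of $\mathbb{S}^d$ for every such interior position, so no critical distance exists. This yields case~(i). If $q<-1$, then $1/q\in(-1,0)$ lies strictly inside the range of $f$; by the intermediate value theorem combined with strict monotonicity, there is a unique $R_q\in(0,1)$ with $f(R_q)=1/q$, and $f(R)\ge 1/q$ precisely on $[0,R_q]$. Applying Proposition~\ref{prop} on this set and its complement in $(0,1)$ yields case~(ii), with $R_q$ itself included in the range since equality in \eqref{neg.eqsigned} corresponds to the density $\eta_{R,q,s}^\prime$ touching zero at the South Pole while remaining non-negative elsewhere. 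Case~(iii) is immediate: at $R=0$ the external field $Q(\PT{x})=q/|\PT{x}|^s$ is the constant $q$ on $\mathbb{S}^d$, so $\mu_Q=\sigma_d$ and the problem degenerates.

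To connect $R_q$ with the Gonchar function of the fourth kind I would substitute $s=d-1$ into \eqref{eq:Gonchar.function.4th.kind} and use $W_{d-1}(\mathbb{S}^d)=1=U_{d-1}^{\sigma_d}(R)$ to get
\[
\GoncharD(d,d-1,q;R)\;=\;\frac{(1+R)^d}{q}-(1-R)+(1+R)^d,
\]
a polynomial in $R$ (which in passing verifies the final polynomial assertion, corresponding to $m=0$). Rearranging $\GoncharD(d,d-1,q;R)=0$ reproduces exactly $f(R)=1/q$, so the $R_q$ produced above is precisely the unique real zero of $\GoncharD(d,d-1,q;z)$ in $(0,1)$, as required.

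The main obstacle is conceptual rather than computational: one must carefully interpret ``positive measure on $\mathbb{S}^d$'' as allowing the density to vanish at a single point, so that equality in \eqref{neg.eqsigned} is admitted and $R_q$ itself belongs to the interval listed in (ii). Once the Faraday cage effect is invoked and $f$ is made explicit, the rest is bookkeeping on a monotone function of one real variable.
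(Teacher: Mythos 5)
Your proof is correct and takes essentially the same route as the paper: it reduces to the explicit trinomial-type form of $f$ using $W_{d-1}(\mathbb{S}^d)=1$ and $U_{d-1}^{\sigma_d}\equiv 1$ on $[0,1]$, observes that $f$ decreases strictly from $0$ to $-1$ on $(0,1)$, and applies Proposition~\ref{prop}(b); the paper's text before the theorem makes the same monotonicity observation (it also records convexity, which you rightly note is unnecessary for this case). One small gap worth flagging: the closing sentence of the theorem asserts that $\GoncharD(d,d-1-2m,q;z)$ is a polynomial for \emph{every} non-negative integer $m$, which you verify only for $m=0$; the general case follows directly from the expansion \eqref{eq:s.potential.B}, where the hypergeometric factor of $U_s^{\sigma_d}$ terminates as a degree-$m$ polynomial in $R^2$ when $d-1-s=2m$, so it would be worth adding that one line.
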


In case (iii) the function $f(R)$ is strictly monotonically decreasing on $(0,1)$ with $-W_{s}(\mathbb{S}^d) \leq f(R) \leq 0$ like in case (ii) but neither convex nor concave on all of $(0,1)$, since $f^{\prime\prime}(0^+) > 0$ and $f^{\prime\prime}(R) \to - \infty$ as $R \to 1^-$. Equation~\eqref{eq:characteristic.equation.neg.charge} has a solution in $(0,1)$ if and only if $q < -1$.

\begin{thm} \label{thm:Gonchar.D.subharmonic}
For the external field $Q$ of \eqref{externalfield} with $d - 1 < s < d$, $q < 0$, and $0 < R < 1$ the signed $s$-equilibrium is a positive measure on all of~$\mathbb{S}^d$ if and only if one of the following conditions holds
\begin{enumerate}[\bf (i)]
\item $q \in [-1, 0)$ and $R \in (0, 1)$. Gonchar's problem has no solution (no critical distance).
\item $q < -1$ and $R \in (0, R_q]$, where $R_q$ is the unique (real) zero in $(0,1)$ of the Gonchar function $\GoncharD(d,s,q;z)$. The solution to Gonchar's problem is $1 - R_q$.
\item $q < 0$ and $R = 0$ (trivial solution). Gonchar's problem has no solution.
\end{enumerate}

If $s = d - 1 - 2m$ for $m$ a non-negative integer, then $\GoncharD(d,d-1-2m,q;z)$ is a polynomial.
\end{thm}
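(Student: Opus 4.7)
The plan is to apply Proposition~\ref{prop}\textbf{(b)}, which asserts that in the present regime the signed $s$-equilibrium $\eta_Q$ is a positive measure on $\mathbb{S}^d$ if and only if $W_s(\mathbb{S}^d)/q \le f(R)$, where $f(R) = (1-R)^{d-s}/(1+R)^d - U_s^{\sigma_d}(R)$ for $0 \le R < 1$. Rewriting \eqref{eq:Gonchar.function.4th.kind} as $\GoncharD(d,s,q;R) = (1+R)^d\bigl[W_s(\mathbb{S}^d)/q - f(R)\bigr]$ shows that this is in turn equivalent to $\GoncharD(d,s,q;R) \le 0$, so the task reduces to understanding how the graph of $f$ meets the horizontal level $W_s(\mathbb{S}^d)/q$. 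Since every point of $\mathbb{S}^d$ lies at unit distance from the origin, $U_s^{\sigma_d}(0) = 1$ and hence $f(0) = 0$; since $U_s^{\sigma_d}$ is continuous on $\Rset^{d+1}$ and equals $W_s(\mathbb{S}^d)$ on the sphere, $\lim_{R\to 1^-} f(R) = -W_s(\mathbb{S}^d)$.

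The main step---and where the subharmonic hypothesis is actually used---is to show that $f$ is strictly decreasing on $(0,1)$. Write $f = g - U_s^{\sigma_d}$ with $g(R) = (1-R)^{d-s}(1+R)^{-d}$. A logarithmic derivative gives
\begin{equation*}
g'(R) = -(1-R)^{d-s-1}(1+R)^{-d-1}\bigl[2d - s(1+R)\bigr];
\end{equation*}
since $2d - s(1+R) > 2(d-s) > 0$ on $[0,1)$, the bracketed factor is positive and $g'(R) < 0$ throughout. For the second summand I would invoke the hypergeometric representation \eqref{eq:s.potential.B} together with the standard differentiation rule $(d/dx)\Hypergeom{2}{1}{a,b}{c}{x} = (ab/c)\Hypergeom{2}{1}{a+1,b+1}{c+1}{x}$ to obtain
\begin{equation*}
\frac{d}{dR}U_s^{\sigma_d}(R) = \frac{(s-d+1)\,s\,R}{d+1}\,\Hypergeom{2}{1}{(s-d+3)/2,(s+2)/2}{(d+3)/2}{R^2}.
\end{equation*}
In the strictly subharmonic range $d-1 < s < d$ the prefactor $s-d+1$ is strictly positive, all three hypergeometric parameters are positive, and $0 < R^2 < 1$, so every term of the defining power series is non-negative and the series sums to a strictly positive value. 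Hence $(U_s^{\sigma_d})'(R) > 0$, and therefore $f'(R) = g'(R) - (U_s^{\sigma_d})'(R) < 0$ on $(0,1)$.

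Combining these facts, $f$ is a continuous, strictly decreasing bijection $[0,1) \to (-W_s(\mathbb{S}^d), 0]$, and the three cases of the theorem follow by tracking where the level $W_s(\mathbb{S}^d)/q$ lies. If $q \in [-1,0)$, then $W_s(\mathbb{S}^d)/q \le -W_s(\mathbb{S}^d) < f(R)$ for every $R \in (0,1)$, so $\eta_Q \ge 0$ on the whole sphere and no critical distance exists, giving case (i). If $q < -1$, then $-W_s(\mathbb{S}^d) < W_s(\mathbb{S}^d)/q < 0$ lies in the range of $f$, so there is a unique $R_q \in (0,1)$ with $f(R_q) = W_s(\mathbb{S}^d)/q$, and monotonicity converts $f(R) \ge W_s(\mathbb{S}^d)/q$ into $R \in [0, R_q]$, giving case (ii). Case (iii) is immediate, since at $R = 0$ the signed equilibrium coincides with $\sigma_d$. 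The polynomial assertion for $s = d-1-2m$, $m \in \Nset_0$, is inherited directly from the termination of the series in \eqref{eq:s.potential.B}. The principal technical obstacle is the monotonicity of $U_s^{\sigma_d}$ in the subharmonic regime, which I handle with the explicit hypergeometric derivative above; one could alternatively deduce it from strict subharmonicity of $|\PT{x}|^{-s}$ combined with the rotational invariance of $\sigma_d$, but the direct series argument is cleaner and also explains why the strictly superharmonic case (where $s-d+1 < 0$) must be treated separately, as it is in Theorem~\ref{thm:Gonchar.D.superharmonic}.
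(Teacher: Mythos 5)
Your proof is correct and takes essentially the same route as the paper: the paper's proof (in Section~5, following the superharmonic case) likewise observes that $g(R)=(1-R)^{d-s}(1+R)^{-d}$ is strictly decreasing for all $0<s<d$ and that $U_s^{\sigma_d}$ is strictly increasing on $(0,1)$ when $d-1<s<d$ (a fact established in the Appendix via the same hypergeometric derivative formula you compute), concluding that $f=g-U_s^{\sigma_d}$ is strictly decreasing from $f(0)=0$ to $f(1^-)=-W_s(\mathbb{S}^d)$, from which the three cases follow by locating the level $W_s(\mathbb{S}^d)/q$.
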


We remark that in the harmonic case, the analogues of the algebraic equations \eqref{equation} and~\eqref{equation.interior} characterizing the critical distance(s) either in $(0,1)$ or $(1,\infty)$ are given by 
\begin{align}
\GoncharC(d,q^\prime; R) \DEF \left[ \left( R + 1 \right)^d / q^\prime - R + 1 \right] R^{d-1} + \left( R + 1 \right)^d &= 0, \qquad R > 1, q^\prime < 0, \label{equation.exterior.neg} \\
\GoncharD(d,q^\prime; R) \DEF \left[ 1 + ( 1 / q^\prime ) \right] \left( 1 + R \right)^d + R - 1 &= 0, \qquad 0 \leq R < 1, q^\prime < 0. \label{equation.interior.neg}
\end{align}
Both equations are related by the principle of inversion: simultaneous application of the transformations $R \mapsto 1 / R$ and $q^\prime \mapsto q^\prime R^{1-d}$ changes one equation into the other.

\subsection*{Gonchar's Problem for the Logarithmic Potential}
The logarithmic potential $\log(1/r)$ follows from the Riesz $s$-potential $1/r^s$ ($s \neq 0$) by means of a limit process:
\begin{equation*}
\log(1/r) = \frac{\dd r^{-s}}{\dd s} \Big|_{s\to 0} = \lim_{s\to0} \frac{r^{-s}-1}{s}, \qquad r > 0.
\end{equation*}
This connection allows us to completely answer Gonchar's problem for the logarithmic potential and the logarithmic external field
\begin{equation} \label{eq:log.external.field}
Q_{\LOG,\PT{a},q}(\PT{x}) \DEF - q \log | \PT{x} - \PT{a} |, \qquad \PT{x} \in \mathbb{R}^{d+1},
\end{equation}
where $\PT{a} \not\in \mathbb{S}^d$. We remark that the logarithmic potential with external field in the plane is treated in \cite{SaTo1997} and \cite{BrDrSa2009,BrDrSa2014} deal with the logarithmic case on $\mathbb{S}^2$ which can be reduced to an external field problem in the plane using stereographic projection as demonstrated in \cite{Dr2007a,Si2005}. The general case for higher-dimensional spheres seems to have not been considered yet.

\begin{thm} \label{thm:signed.equilibrium.log.case}
The signed logarithmic equilibrium $\eta_{Q_{\LOG}}$ on $\mathbb{S}^d$ associated with the external field $Q_{\LOG} = Q_{\LOG,\PT{a},q}$ of \eqref{eq:log.external.field} is absolutely continuous with respect to the (normalized) surface area measure on $\mathbb{S}^d$; that is, $\dd \eta_{Q_{\LOG}}( \PT{x} ) = \eta_{\LOG}^\prime( \PT{x} ) \dd \sigma_d( \PT{x} )$, and its density is given by 
\begin{equation}
\eta_{\LOG}^\prime( \PT{x} ) = 1 + q - q \frac{\left| R^2 - 1 \right|^d}{\left| \PT{x} - \PT{a} \right|^{2d}}, \qquad \PT{x} \in \mathbb{S}^d.
\end{equation}
\end{thm}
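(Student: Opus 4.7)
The plan is to derive $\eta_{Q_{\LOG}}$ as the $s\to 0^+$ limit of the Riesz-$s$ signed equilibrium furnished by Theorem~\ref{thm:signed.equilibrium.sphere}, exploiting $|\PT{x}-\PT{a}|^{-s} = 1 + s\log(1/|\PT{x}-\PT{a}|) + O(s^2)$ as $s\to 0^+$ (and the analogous expansion for the two-point Riesz kernel $|\PT{x}-\PT{y}|^{-s}$).

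For the pointwise (in fact uniform) limit of the density, note that by \eqref{eq:W.s.S.d} the function $W_s(\mathbb{S}^d)$ is real-analytic near $s=0$ with $W_0(\mathbb{S}^d) = 1$, while $U_s^{\sigma_d}(\PT{a}) \to U_0^{\sigma_d}(\PT{a}) = 1$ as $s\to 0^+$ by dominated convergence on $\sigma_d$, since $\PT{a}\notin\mathbb{S}^d$. Together with $|R^2-1|^{d-s} \to |R^2-1|^d$ and $|\PT{x}-\PT{a}|^{2d-s} \to |\PT{x}-\PT{a}|^{2d}$ uniformly in $\PT{x}\in\mathbb{S}^d$, formula~\eqref{eta.Q} gives $\eta_{R,q,s}'(\PT{x}) \to \eta_{\LOG}'(\PT{x})$ uniformly on $\mathbb{S}^d$. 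Since each $\eta_s$ has total mass $1$, the limiting signed measure does as well.

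To verify the signed logarithmic equilibrium property, I Taylor-expand each term in the Riesz identity $U_s^{\eta_s}(\PT{x}) + Q_s(\PT{x}) = G_s$ at $s=0$. Because $\eta_s(\mathbb{S}^d) = 1$, the leading order term of $U_s^{\eta_s}(\PT{x})$ is simply $1$, giving $U_s^{\eta_s}(\PT{x}) = 1 + s\,U_{\LOG}^{\eta_s}(\PT{x}) + O(s^2)$; likewise $Q_s(\PT{x}) = q + s\,Q_{\LOG}(\PT{x}) + O(s^2)$ and $G_s = (1+q) + s\bigl(W_{\LOG}(\mathbb{S}^d) + q\,U_{\LOG}^{\sigma_d}(\PT{a})\bigr) + O(s^2)$, where $W_{\LOG}(\mathbb{S}^d) = \int\!\int \log(1/|\PT{x}-\PT{y}|)\,d\sigma_d(\PT{x})\,d\sigma_d(\PT{y})$ and $U_{\LOG}^{\sigma_d}$ is the analogous logarithmic potential of $\sigma_d$. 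The $(1+q)$ terms cancel; dividing by $s>0$ and letting $s\to 0^+$ (using uniform convergence of the density to carry $U_{\LOG}^{\eta_s}(\PT{x})\to U_{\LOG}^{\eta_{Q_{\LOG}}}(\PT{x})$) yields
\begin{equation*}
U_{\LOG}^{\eta_{Q_{\LOG}}}(\PT{x}) + Q_{\LOG}(\PT{x}) = W_{\LOG}(\mathbb{S}^d) + q\,U_{\LOG}^{\sigma_d}(\PT{a}), \qquad \PT{x}\in\mathbb{S}^d,
\end{equation*}
which is exactly the defining property of the signed logarithmic equilibrium of unit mass. By uniqueness of signed equilibria, $\eta_{Q_{\LOG}}$ coincides with the limit measure and its density is as stated.

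The main technical obstacle is the uniform control of the $O(s^2)$ remainders in the Taylor expansion of $U_s^{\eta_s}(\PT{x})$, which integrates a mildly singular kernel against a signed measure whose density also depends on $s$. The density $\eta_s'(\PT{y})$ is continuous in $(s,\PT{y})$ on a neighbourhood of $\{0\}\times\mathbb{S}^d$ (since $|\PT{y}-\PT{a}|$ stays bounded away from $0$), and $\bigl(\log|\PT{x}-\PT{y}|\bigr)^k$ is $\sigma_d$-integrable in $\PT{y}$ uniformly in $\PT{x}\in\mathbb{S}^d$ for each $k\geq 1$; a standard dominated-convergence argument then justifies both the termwise expansions and the interchange of differentiation with integration needed for $W_s(\mathbb{S}^d)$ and $U_s^{\sigma_d}(\PT{a})$.
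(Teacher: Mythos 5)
Your proposal is correct and takes essentially the same route as the paper: both derive the result by letting $s\to 0^+$ in the Riesz signed-equilibrium identity $U_s^{\eta_Q}+Q=G_{\mathbb{S}^d,Q,s}$ of Theorem~\ref{thm:signed.equilibrium.sphere}, check via \eqref{eq:W.s.S.d} and \eqref{eq:s.potential} that the density of \eqref{eta.Q} converges uniformly to the claimed $\eta'_{\LOG}$, observe that the resulting weighted logarithmic potential is constant on $\mathbb{S}^d$, and invoke uniqueness of the signed equilibrium. Your presentation via first-order Taylor coefficients is equivalent to the paper's direct differentiation with respect to $s$; your observation that the mass normalization $\int\eta'_s\,d\sigma_d=1$ immediately yields $U_s^{\eta_s}(\PT{x})=1+s\,U_{\LOG}^{\eta_s}(\PT{x})+O(s^2)$ is a clean shortcut that avoids the $\PT{x}$-independent term $\int\bigl[\tfrac{d\eta'_{R,q,s}}{ds}\bigr]_{s\to 0}\,d\sigma_d$ appearing in the paper's version (which, as you implicitly use, vanishes by the same normalization).
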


\begin{proof}
Differentiating the signed equilibrium relation \eqref{signedeq} for $A = \mathbb{S}^d$ with respect to $s$, i.e.
\begin{equation*}
\int \frac{\dd}{\dd s} \left\{ \frac{\eta_{R,q,s}^\prime (\PT{y})}{\left| \PT{x} - \PT{y} \right|^s} \right\} \dd \sigma_d(\PT{y}) + \frac{\dd }{\dd s} \left\{ \frac{q}{\left| \PT{x} - \PT{a} \right|^s} \right\} = \frac{\dd U_s^{\eta_{Q}}(\PT{x})}{\dd s} + \frac{\dd Q(\PT{x})}{\dd s} = \frac{\dd G_{\mathbb{S}^d,Q,s}}{\dd s},
\end{equation*}
and letting $s$ go to zero yields
\begin{equation}
\begin{split} \label{eq:log.case.master.equ.A}
&\int \log \frac{1}{\left| \PT{x} - \PT{y} \right|} \left[ \eta_{R,q,s}^\prime (\PT{y}) \right]_{s\to0} \dd \sigma_d(\PT{y}) + q \log \frac{1}{\left| \PT{x} - \PT{a} \right|} \\
&\phantom{equalsequals}= \frac{\dd G_{\mathbb{S}^d,Q,s}}{\dd s} - \int \left[ \frac{\dd \eta_{R,q,s}^\prime (\PT{y})}{\dd s}\right]_{s\to0} \dd \sigma_d(\PT{y}).
\end{split}
\end{equation}
Using~\eqref{eq:W.s.S.d} and~\eqref{eq:s.potential}, one can easily verify that $\eta_{\LOG}^\prime( \PT{y} ) = \lim_{s\to0} \eta_{R,q,s}^\prime (\PT{y})$. As the right-hand side above does not depend on $\PT{x} \in \mathbb{S}^d$, the measure $\mu$ with $\dd \mu( \PT{y} ) = \eta_{\LOG}^\prime( \PT{y} ) \dd \sigma_d( \PT{y} )$ has constant weighted logarithmic potential on $\mathbb{S}^d$; that is, by the uniqueness of the signed equilibrium (cf. \cite[Lemma~23]{BrDrSa2009}), the measure $\mu$ is the signed logarithmic equilibrium on $\mathbb{S}^d$ associated with $Q_{\LOG} = Q_{\LOG,\PT{a},q}$.
\end{proof}

From \eqref{eq:log.case.master.equ.A} and \eqref{eq:G.sphere.Q.s} it follows that the weighted logarithmic potential of $\eta_{Q_{\LOG}}$ equals everywhere on $\mathbb{S}^d$ the constant
\begin{equation*}
G_{\mathbb{S}^d, Q_{\LOG}, \LOG} = W_{\LOG}(\mathbb{S}^d) + q U_{\LOG}^{\sigma_d}( \PT{a} ) - \int \left[ \frac{\dd \eta_{R,q,s}^\prime (\PT{y})}{\dd s}\right]_{s\to0} \dd \sigma_d(\PT{y}),
\end{equation*}
where the logarithmic energy of $\mathbb{S}^d$ is explicitly given by (cf., e.g., \cite[Eq.s~(2.24), (2.26)]{Br2008})
\begin{equation} \label{eq:log.energy.sphere}
W_{\LOG}(\mathbb{S}^d) \DEF \{ \mathcal{I}_{\LOG}[\mu] : \mu \in \mathcal{M}(\mathbb{S}^d) \} = - \log 2 + \left[ \digammafcn(d) - \digammafcn(d/2) \right] / 2
\end{equation}
(here $\digammafcn(s) \DEF \gammafcn^\prime(s) / \gammafcn(s)$ denotes the digamma function), and the logarithmic potential of $\sigma_d$ can be represented as
\begin{equation} \label{Wlog.logPotential}
U_{\LOG}^{\sigma_d}( \PT{a} ) = \log \frac{1}{R+1} + \frac{1}{2} \sum_{k=1}^\infty \frac{\Pochhsymb{d/2}{k}}{\Pochhsymb{d}{k} \, k} \frac{\left( 4 R \right)^k}{\left( R + 1 \right)^{2k}}.
\end{equation}
On the other hand, direct computation of the weighted logarithmic potential at the North Pole gives
\begin{equation*}
G_{\mathbb{S}^d, Q_{\LOG}, \LOG} = \left( 1 + q \right) W_{\LOG}(\mathbb{S}^d) - q \int \frac{\left| R^2 - 1 \right|^d}{\left| \PT{x} - \PT{a} \right|^{2d}} \log \frac{1}{\left| \PT{x} - \PT{p} \right|} \dd \sigma_d( \PT{x} ) + q \log \frac{1}{\left| R - 1 \right|}. 
\end{equation*}

The density $\eta_{\LOG}^\prime$ assumes its minimum value at the North (South) Pole if $q>0$ (${q<0}$), 
\begin{equation*}
\eta_{\LOG}^\prime( \PT{p} ) = 1 + q - q \, \frac{\left( R + 1 \right)^d}{\left| R - 1 \right|^d}, \qquad \eta_{\LOG}^\prime( -\PT{p} ) = 1 + q - q \, \frac{\left| R - 1 \right|^d}{\left( R + 1 \right)^d}.
\end{equation*}
From that one can obtain necessary and sufficient conditions for when $\supp(\mu_{Q_{\LOG}}) = \mathbb{S}^d$.

\begin{thm} \label{logthm}
Let $\PT{a} \in \mathbb{R}^{d+1}$ with $R = | \PT{a} | \neq 1$ and $q \neq 0$. Set $\alpha \DEF [ (1 + q) / q]^{1/d}$. Then $\supp(\mu_{Q_{\LOG}}) = \mathbb{S}^d$ (and thus $\mu_Q = \eta_Q$) if and only if 
\begin{enumerate}[\bf (i)]
\item for positive fields ($q>0$)
\begin{equation} \label{log.inequality1}
\frac{1+q}{q} \geq \frac{\left( R + 1 \right)^d}{\left| R - 1 \right|^d}, \quad \text{i.e.} \quad 
\begin{cases}
\displaystyle R \geq \frac{\alpha+1}{\alpha-1} & \text{for $R > 1$ and $q > 0$,} \\[1em]
\displaystyle R \leq \frac{\alpha-1} {\alpha+1} & \text{for $0 < R < 1$ and $q > 0$,}
\end{cases}
\end{equation}
\item or for negative fields ($q<0$)
\begin{equation} \label{log.inequality2}
\frac{1+q}{q} \leq \frac{\left| R - 1 \right|^d}{\left( R + 1 \right)^d}, \quad \text{i.e.} \quad 
\begin{cases}
\displaystyle R \geq \frac{1+\alpha}{1-\alpha} & \text{for $R > 1$ and $q < -1$,} \\[1em]
\displaystyle R \leq \frac{1-\alpha}{1+\alpha} & \text{for $0 < R < 1$ and $q < - 1$,}
\end{cases}
\end{equation}
whereas the first inequality is trivially satisfied for all $R\neq1$ if $-1 \leq q < 0$. 
\end{enumerate}
\end{thm}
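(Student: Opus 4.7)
The plan is to combine the explicit density formula from Theorem~\ref{thm:signed.equilibrium.log.case} with a logarithmic analogue of Proposition~\ref{prop}. Namely, $\supp( \mu_{Q_{\LOG}} ) = \mathbb{S}^d$ is equivalent to $\mu_{Q_{\LOG}} = \eta_{Q_{\LOG}}$, which in turn is equivalent to $\eta_{Q_{\LOG}}$ being a positive measure. The proof of this equivalence transcribes verbatim the argument of Proposition~\ref{prop}: if $\supp(\mu_{Q_{\LOG}}) = \mathbb{S}^d$, the Gauss variational inequalities hold with equality everywhere, so $\mu_{Q_{\LOG}}$ is a signed logarithmic equilibrium and, by uniqueness (cf.~\cite[Lemma~23]{BrDrSa2009}), coincides with $\eta_{Q_{\LOG}}$; conversely, a non-negative $\eta_{Q_{\LOG}}$ is a unit positive measure with constant weighted logarithmic potential on $\mathbb{S}^d$ and hence equals $\mu_{Q_{\LOG}}$ by uniqueness of the logarithmic extremal measure.

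With this reduction in hand, I would characterize when $\eta_{\LOG}^\prime \geq 0$ everywhere on $\mathbb{S}^d$. The key observation is that on $\mathbb{S}^d$ one has $| \PT{x} - \PT{a} |^2 = 1 + R^2 - 2 R \cos \theta$ with $\theta$ the polar angle, so $| \PT{x} - \PT{a} |$ is monotonically increasing in $\theta \in [0,\pi]$. Inspecting
\begin{equation*}
\eta_{\LOG}^\prime(\PT{x}) = 1 + q - q \, \frac{\left| R^2 - 1 \right|^d}{\left| \PT{x} - \PT{a} \right|^{2d}},
\end{equation*}
one sees that for $q > 0$ the minimum of the density is attained at the North Pole and for $q < 0$ at the South Pole; non-negativity of the density on $\mathbb{S}^d$ is therefore equivalent to non-negativity of the two boundary values $\eta_{\LOG}^\prime(\pm\PT{p})$ displayed immediately before the statement of the theorem.

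For $q > 0$, rearranging $\eta_{\LOG}^\prime(\PT{p}) \geq 0$ gives $(1+q)/q \geq (R+1)^d/|R-1|^d$; taking $d$-th roots yields $\alpha \geq (R+1)/|R-1|$, and solving this linear inequality separately in the regimes $R > 1$ and $0 < R < 1$ produces the two explicit bounds in \eqref{log.inequality1}. For $q < 0$, the inequality $\eta_{\LOG}^\prime(-\PT{p}) \geq 0$ reads $1 + q \geq q \, |R-1|^d/(R+1)^d$; since division by $q < 0$ reverses the inequality, one obtains $(1+q)/q \leq |R-1|^d/(R+1)^d$. When $-1 \leq q < 0$ the left-hand side is $\leq 0$ whereas the right-hand side is positive for $R \neq 1$, so the condition is trivially satisfied. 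When $q < -1$, the left-hand side lies in $(0,1)$, so $\alpha \in (0,1)$ and one solves $\alpha \leq |R-1|/(R+1)$ in the two regimes to arrive at \eqref{log.inequality2}.

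The only non-routine ingredient beyond algebraic manipulation is the logarithmic analogue of Proposition~\ref{prop}, which depends on the uniqueness of the signed logarithmic equilibrium and the fact that constancy of the weighted potential on $\mathbb{S}^d$ together with positivity suffices to verify the Gauss variational inequalities. Everything else amounts to identifying the extremum of $| \PT{x} - \PT{a} |$ on $\mathbb{S}^d$ and being careful with the sign reversal when $q$ is negative; I anticipate no further obstacle.
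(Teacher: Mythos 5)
Your proposal is correct and follows the same route as the paper: reduce to positivity of the density $\eta_{\LOG}^\prime$ via the argument of Proposition~\ref{prop} (using uniqueness of the signed logarithmic equilibrium), identify the extremum of $|\PT{x}-\PT{a}|$ on $\mathbb{S}^d$ to locate the minimum of the density at the North or South Pole depending on $\sgn q$, and then carry out the algebraic manipulations, including the sign reversal for $q<0$. The paper's proof is terser but relies on exactly these ingredients.
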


\begin{rmk}
While the critical distance $R_{q,\LOG}$ is given by equality in above relations for $q<-1$ or $q>0$, in a weak negative logarithmic field ($-1 \leq q < 0$) there exists no critical distance. Observe that for fixed charge $q\neq0$, the sequence of critical distances $\{ R_{q,\LOG} \}_{d\geq2}$ goes to $+\infty$ ($0$) monotonically as $d \to \infty$ for $q>0$ ($q<0$).
\end{rmk}

\begin{proof}[Proof of Theorem~\ref{logthm}]
By differentiating the expressions in \eqref{eq:W.s.S.d} and \eqref{eq:s.potential} with respect to $s$ and letting $s\to0$, we arrive at \eqref{eq:log.energy.sphere} and \eqref{Wlog.logPotential}. The equivalence in Theorem~\ref{logthm} (and $\mu_{Q_{\LOG}} = \eta_{Q_{\LOG}}$) follows in a similar way as in the proof of Proposition~\ref{prop}. The relations in \eqref{log.inequality1} and \eqref{log.inequality2} are obtained by simple algebraic manipulations.
\end{proof}

\subsection*{Beyond Gonchar's Problem} 

This problem arises in a natural way when studying the external field problem on the $d$-sphere in the presence of a single positive point source above $\mathbb{S}^d$ exerting the external field.
The answer to Gonchar's problem pinpoints the critical distance $R_q$ of a charge $q > 0$ from the center of $\mathbb{S}^d$ such that the support of the $s$-extremal measure $\mu_Q$ associated with $Q$ is all of $\mathbb{S}^d$ for $R \geq R_q$ but $\supp( \mu_Q )$ is a proper subset of $\mathbb{S}^d$ for $1 < R < R_q$. Finding the $s$-extremal measure $\mu_Q$ when $\supp( \mu_Q ) \subsetneq \mathbb{S}^d$ turns out to be much more difficult. Given $d - 2 \leq s < d$, a convexity argument (cf. \cite[Theorem~10]{BrDrSa2009}) shows that $S_Q \DEF \supp( \mu_Q )$ is connected and forms a spherical cap centered at the pole opposite to the charge $q$; in fact, $S_Q$ minimizes the $\mathcal{F}_s$-functional\footnote{It is the Riesz analog of the Mhaskar-Saff functional from classical logarithmic potential theory in the plane (see \cite{MhSa1985} and \cite[Ch.~IV, p. 194]{SaTo1997}).} 
\begin{equation} \label{eq:F.s.functional}
\mathcal{F}_s( A ) \DEF W_s( A ) + \int Q \dd \mu_A, \qquad \text{$A \subset \mathbb{S}^d$ compact with $\CAP_s(A) > 0$,}
\end{equation}
where $W_s(A)$ is the $s$-energy of $A$ and $\mu_A$ is the $s$-extremal measure on $A$. Remarkably, if the signed $s$-equilibrium on a compact set $A \subset \mathbb{S}^d$ associated with $Q$ exists, then $\mathcal{F}_s(A) = G_{A,Q,s}$ (cf. \eqref{signedeq}). This connection to signed equilibria is exploited in \cite{BrDrSa2009} when determining the support $S_Q$ and, subsequently, the $s$-extremal measure $\mu_Q$ on $\mathbb{S}^d$ associated with $Q$ as the signed equilibrium $\eta_Q$ on a spherical cap $\Sigma_t \DEF \{ ( \sqrt{1 - u^2}  \overline{\PT{x}}, u) : -1 \leq u \leq t, \overline{\PT{x}} \in \mathbb{S}^{d-1} \}$ with critical 'size' $t = t_0$. We remark that either the  variational inequality \eqref{geqineq} would be violated on $\mathbb{S}^d \setminus \Sigma_t$ in case of a too small $\Sigma_t$ ($t < t_0$) or the density of $\eta_Q$ would be negative near the boundary of $\Sigma_t$ in case of a too large $\Sigma_t$ ($t > t_0$). 
Somewhat surprisingly it turns out that for $s = d - 2$ the signed equilibrium $\eta_Q$ on $\Sigma_t$ has a component that is uniformly distributed on the boundary of $\Sigma_t$ which vanishes if $t = t_0$. It should be noted that $\eta_Q$ on $\Sigma_t$ can be expressed in terms of the $s$-balayage measures (onto $\Sigma_t$) $\bal_s( \sigma_d, \Sigma_t )$ and $\bal_s( \delta_{\PT{a}}, \Sigma_t )$ by means of $\eta_{Q} = c \bal_s( \sigma_d, \Sigma_t ) - q \bal_s( \delta_{\PT{a}}, \Sigma_t )$, where $c$ is chosen such that $\eta_{Q}$ has total charge $1$.\footnote{Given a measure $\nu$ and a compact set $A \subset \mathbb{S}^d$, the $s$-balayage measure $\hat{\nu} \DEF \bal_s( \nu, A )$ preserves the Riesz $s$-potential of $\nu$ onto the set $A$ and diminishes it elsewhere (on the sphere $\mathbb{S}^d$).} In fact, the balayage method and the (restricted if $d - 2 \leq s < d - 1$) principle of domination play a crucial role in the derivation of these results (cf. \cite{BrDrSa2009}). Those techniques break 
down when $0 < s < d - 2$ ($d \geq 3$). For this gap new ideas are needed.

\subsection*{Padovan Sequence and the Plastic Number} The Padovan sequence $1$, $1$, $1$, $2$, $2$, $3$, $4$, $5$, $7$, $9$, $12$, $16$, $\dots$ (sequence A000931 in Sloane's OEIS~\cite{OEIS2013}) is named after architect Richard Padovan (cf.~\cite{St1996}). These numbers satisfy the recurrence relation $P_n = P_{n-2} + P_{n-3}$ with initial values $P_0 = P_1 = P_2 = 1$. The ratio $P_{n+1}/P_n$ of two consecutive Padovan numbers approximates the {\em Plastic constant $P$}\footnote{One origin of the name is Dutch: ``plastische getal''.} as $N \to \infty$:
\begin{equation} \label{Plastic.constant}
P \DEF  \lim_{n\to\infty} \frac{P_{n+1}}{P_n} = \frac{( 9 - \sqrt{69} )^{1/3} + ( 9 + \sqrt{69} )^{1/3}}{2^{1/3} 3^{2/3}} = 1.3247179572\dots.
\end{equation}
Padovan attributed \cite{Pa2002} its discovery to Dutch architect Hans van der Laan, who introduced in \cite{vdLa1960} the number $P$ as the ideal ratio of the geometric scale for spatial objects. 
The Plastic number is one of only two numbers $x$ for which there exist integers $k, m > 0$ such that $x+1=x^k$ and $x-1=x^{-m}$ (see \cite{AaFoKr2001}). The other number is the Golden ratio $\phi$. Figure~\ref{fig0a} shows one way to visualize the Padovan sequence as cuboid spirals, where the dimensions of each cuboid made up by the previous ones are given by three consecutive numbers in the sequence. Further discussion of this sequence appears in \cite{Ro2011}.
\begin{figure}[htb]
\includegraphics[scale=.4]{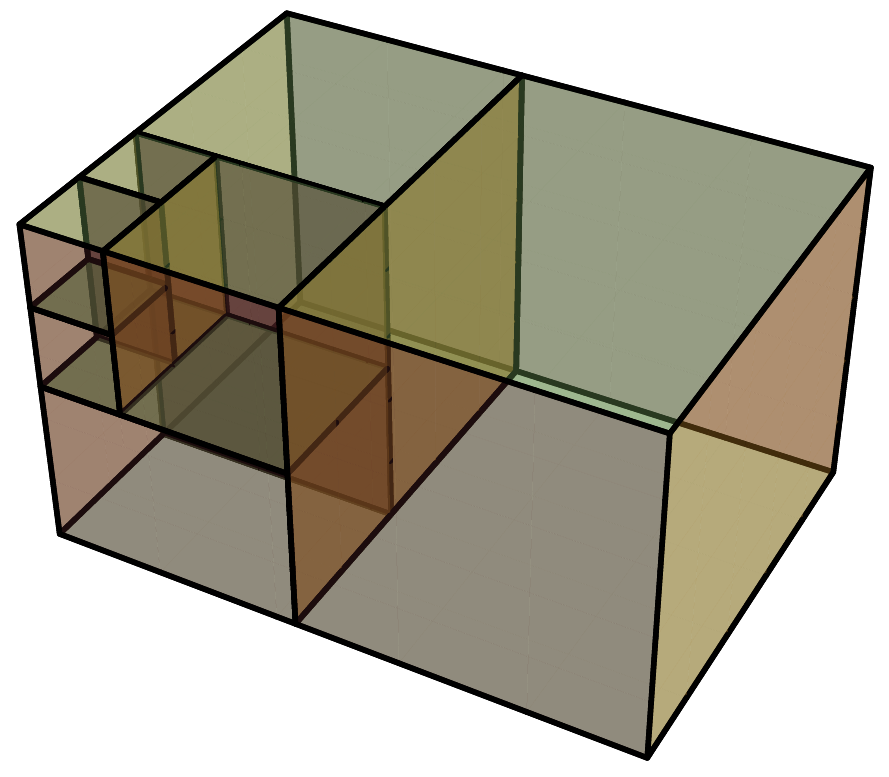}
\includegraphics[scale=.4]{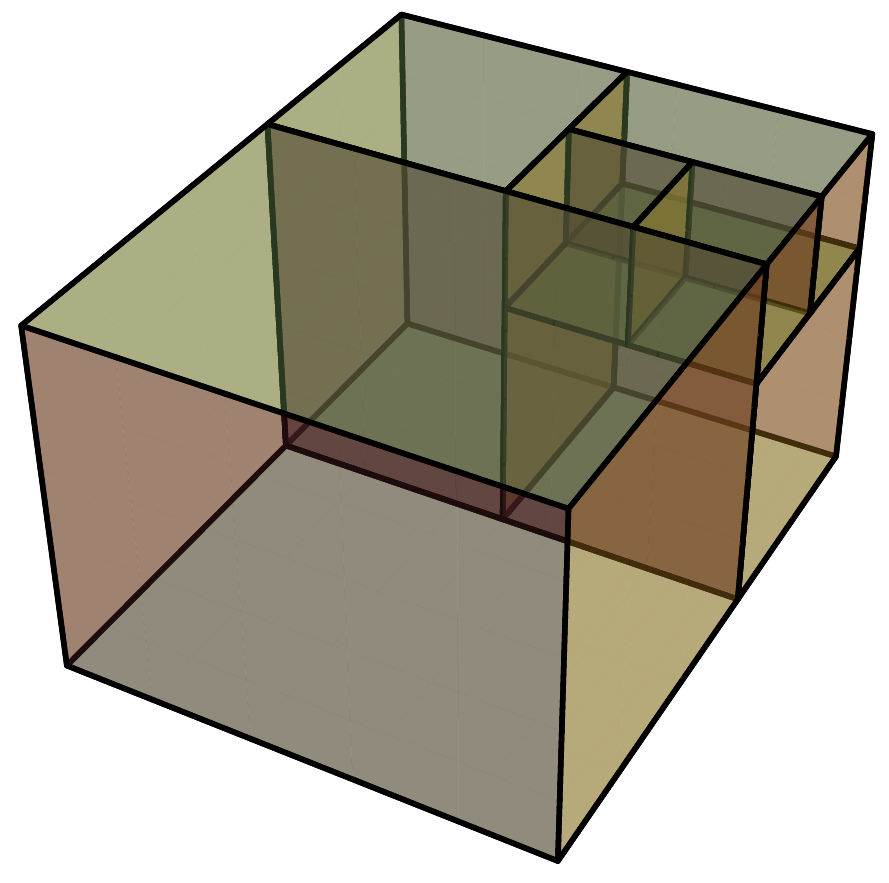}
\includegraphics[scale=.4]{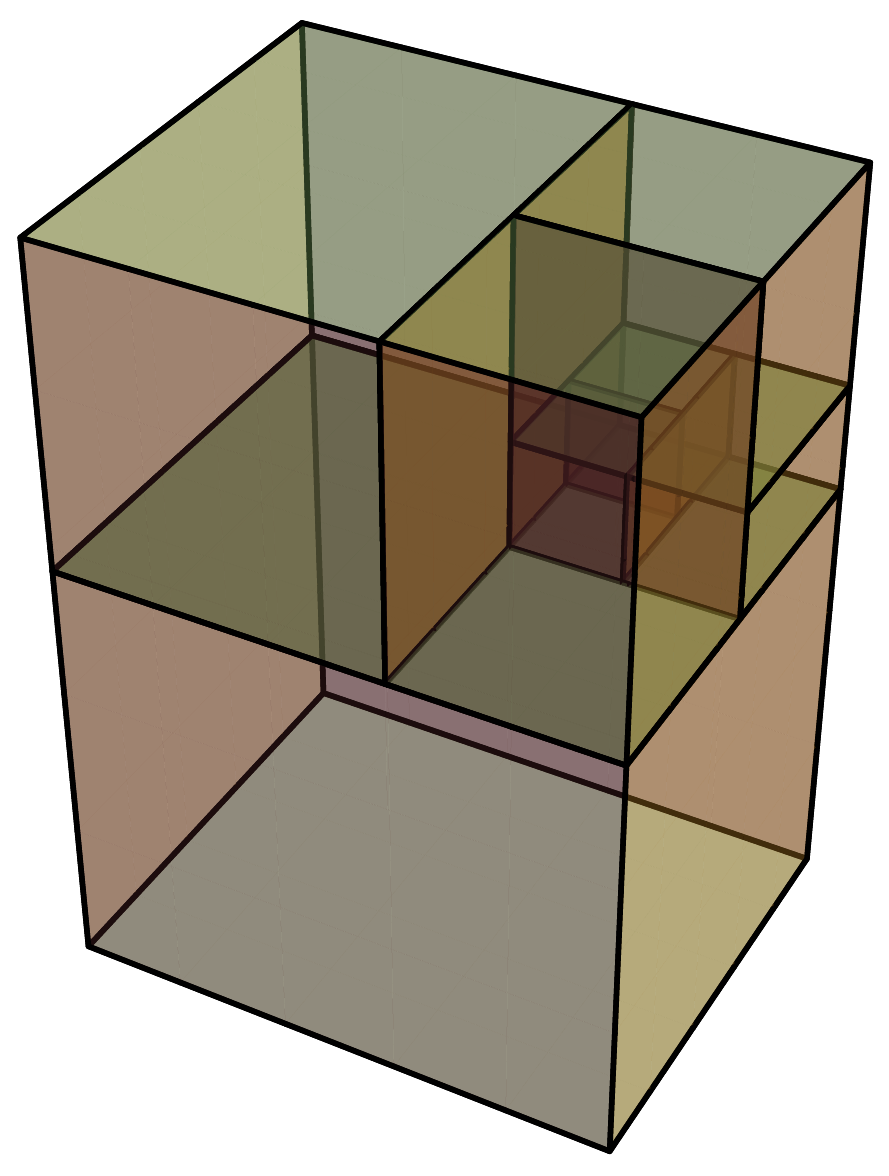}
\caption{\label{fig0a} Cuboid spiral visualizing the Padovan sequence.}
\end{figure}

\section{The Polynomials Arising from Gonchar's Problem}
\label{sec:polynomial.families}

The discussion of Gonchar's problem for positive/negative Riesz-$s$ external fields with interior/exterior point source leads to the introduction of four kinds of functions. These functions reduce to polynomials if the Riesz-$s$ parameter is given by $s = d - 1 - 2m$ for $m$ a non-negative integers. The last two displays in Figure~\ref{fig2} and Figure~\ref{fig2b} illustrate zero patterns of the four families of polynomials when $m = 1, 2$ which already indicate intriguing features. 
For example, one notices that the Gonchar polynomials of the second kind have an isolated group of $m$ zeros (indicated by $\bullet$ in Figures~\ref{fig2} and~\ref{fig2b}) inside of the left crescent-shaped region, whereas the other zeros are gathered near the right circle. Numerics indicate that the $m$ zeros coalesce at $-1$ as $d$ increases and the other zeros go to the right circle. In general, fixing $m$, the zeros of all four polynomials seem to approach the three circles and the two vertical line segments connecting the intersection points of the circles as $d$ increasing (cf. Figure~\ref{fig2} versus Figure~\ref{fig2b}). 

In the following we investigate the four families of Gonchar polynomials $\GoncharA(d,q; z)$, $\GoncharB(d,q; z)$, $\GoncharC(d,q^\prime; z)$, and $\GoncharD(d,q^\prime; z)$ given in \eqref{equation}, \eqref{equation.interior}, \eqref{equation.exterior.neg}, and \eqref{equation.interior.neg}. We shall assume that $q + q^\prime = 0$ and $q > 0$. Aside from the solution to Gonchar's problem, these polynomials are interesting in themselves and their distinctive properties merit further studies. We studied the family of polynomials $\GoncharA(d,q; z)$ for $q = 1$ in \cite{BrDrSa2012}. Two of the conjectures raised there have been answered in \cite{La2013}.
\begin{figure}[ht]
\begin{center}
\begin{minipage}{0.475\linewidth}
\centerline{\includegraphics[scale=.825]{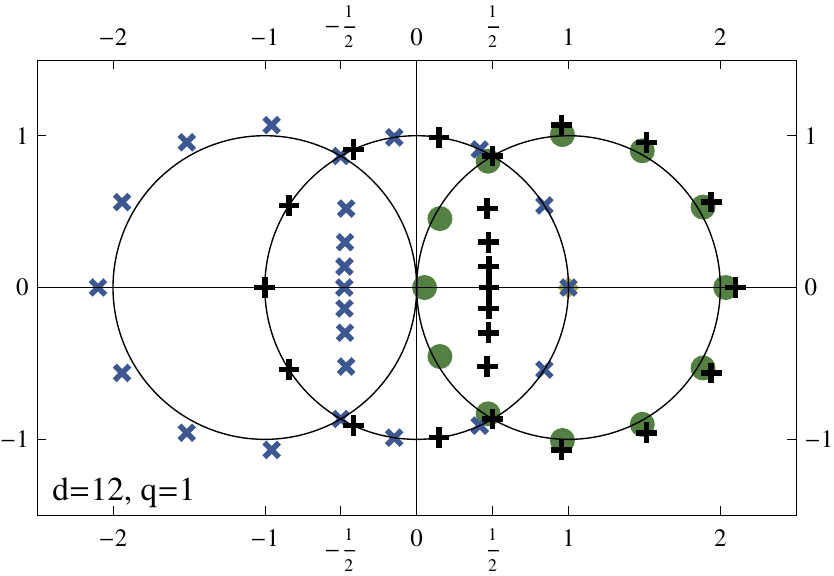}}
\end{minipage}
\begin{minipage}{0.475\linewidth}
\centerline{\includegraphics[scale=.825]{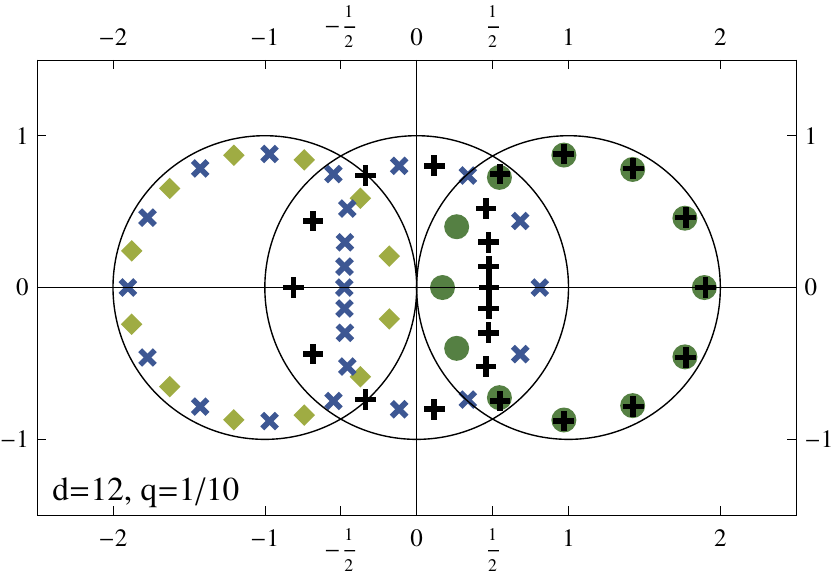}}
\end{minipage}
\begin{minipage}{0.475\linewidth}
\centerline{\includegraphics[scale=.825]{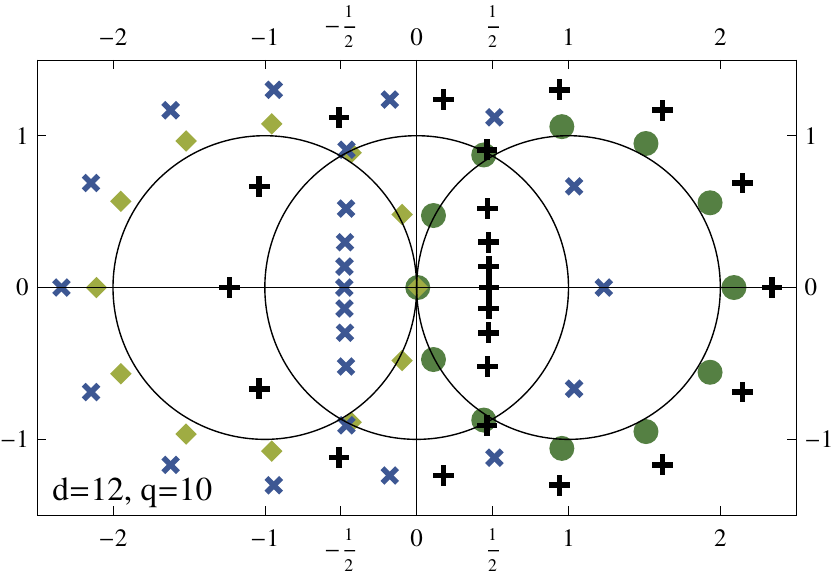}}
\end{minipage}
\begin{minipage}{0.475\linewidth}
\small
\begin{align*}
\left[ \left( z - 1 \right)^d / q - z - 1 \right] z^{d-1} + \left( z - 1 \right)^d &= 0 \;\; (+) \\
\left[ 1 + ( 1 / q ) \right] \left( 1 - z \right)^d - z - 1 &= 0 \;\; (\bullet) \\
\left[ \left( z + 1 \right)^d / q^\prime - z + 1 \right] z^{d-1} + \left( z + 1 \right)^d &= 0 \;\; (\times) \\
\left[ 1 + ( 1 / q^\prime ) \right] \left( 1 + z \right)^d + z - 1 &= 0 \;\; (\blacklozenge) \\
\end{align*}
\end{minipage}
\begin{minipage}{0.475\linewidth}
\centerline{\includegraphics[scale=.825]{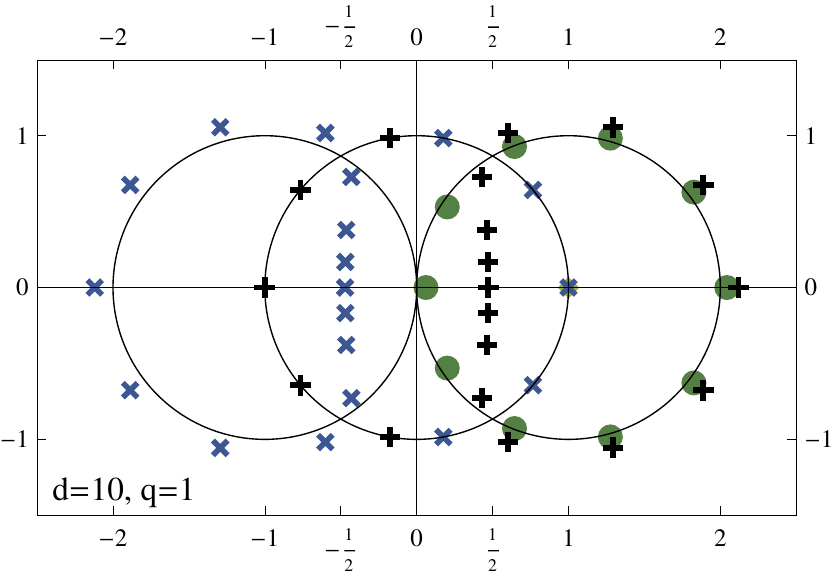}}
\end{minipage}
\begin{minipage}{0.475\linewidth}
\centerline{\includegraphics[scale=.825]{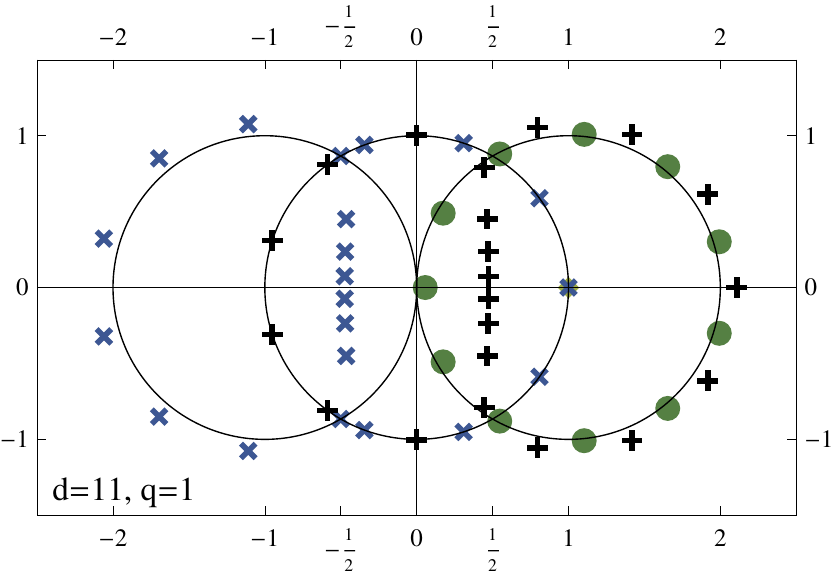}}
\end{minipage}
\begin{minipage}{0.475\linewidth}
\centerline{\includegraphics[scale=.75]{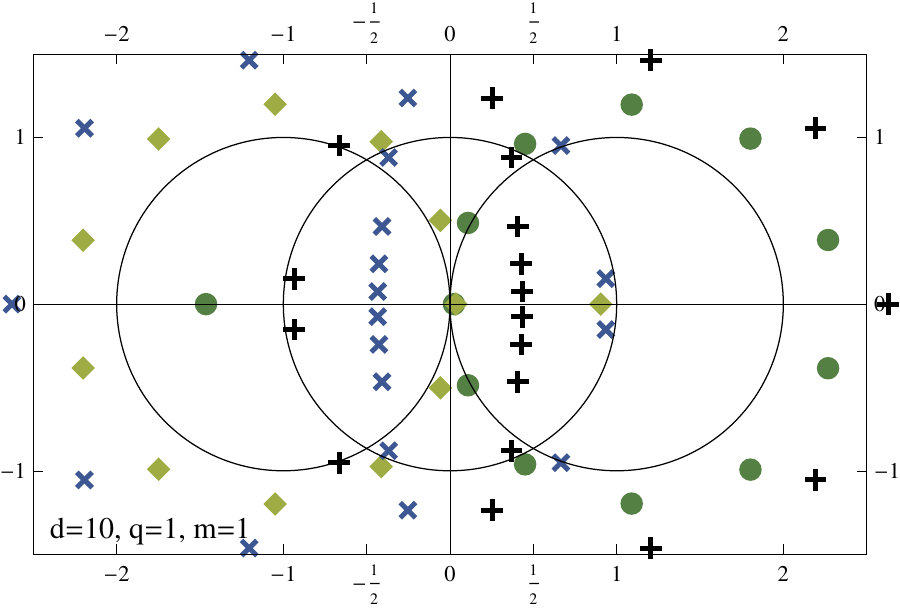}}
\end{minipage}
\begin{minipage}{0.475\linewidth}
\centerline{\includegraphics[scale=.75]{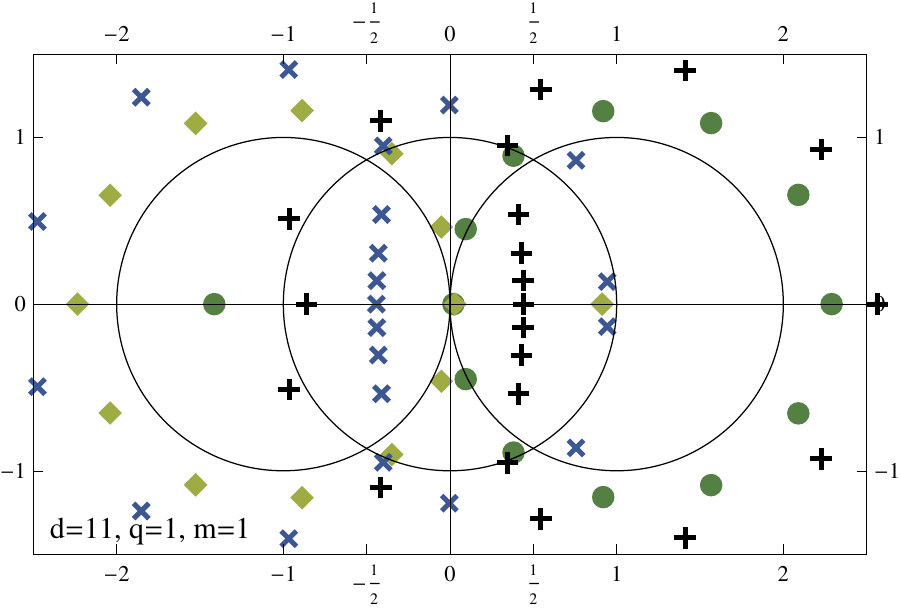}}
\end{minipage}
\end{center}
\caption{\label{fig2} Roots of the given polynomial equations for selected $d$ and $q$ ($q^\prime = - q$). The bottom row shows the roots for the case $s = d - 3$.} 
\end{figure}
\afterpage{\clearpage}
\begin{figure}[ht]
\begin{center}
\includegraphics[scale=0.80]{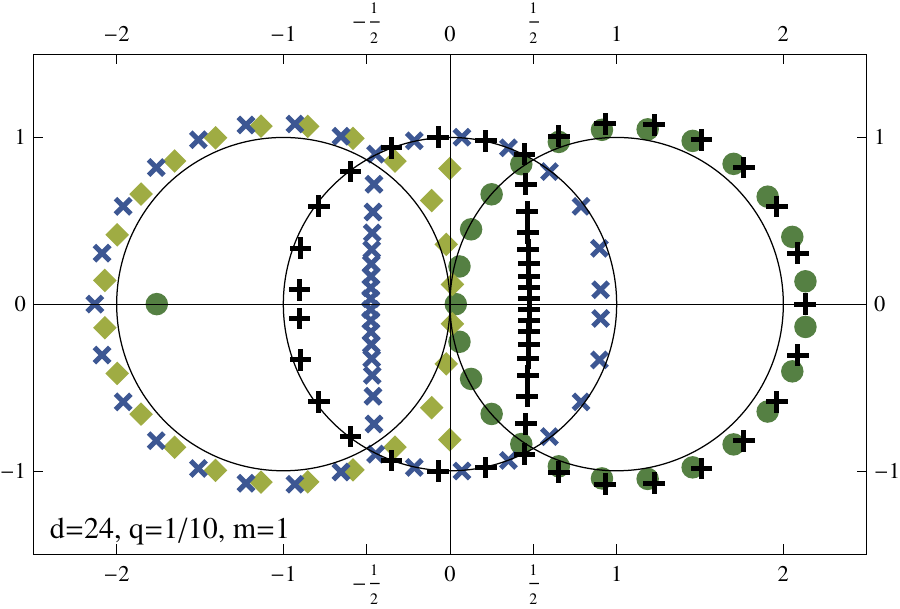} \includegraphics[scale=0.80]{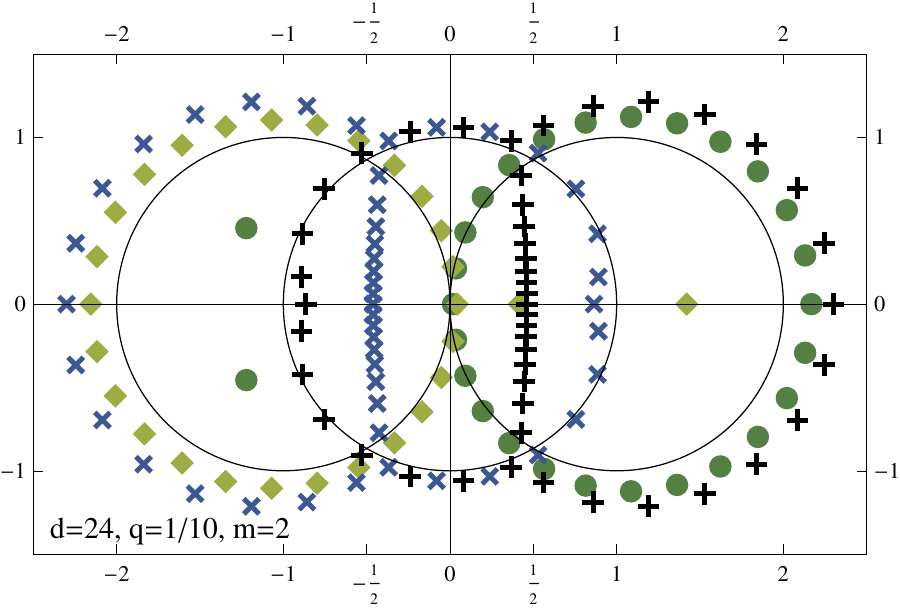} \\
\includegraphics[scale=0.80]{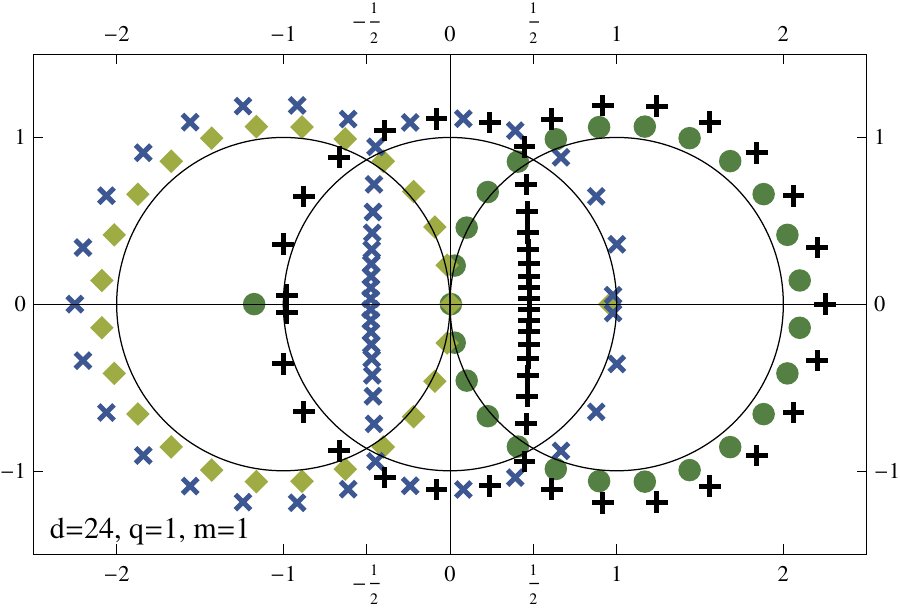} \includegraphics[scale=0.80]{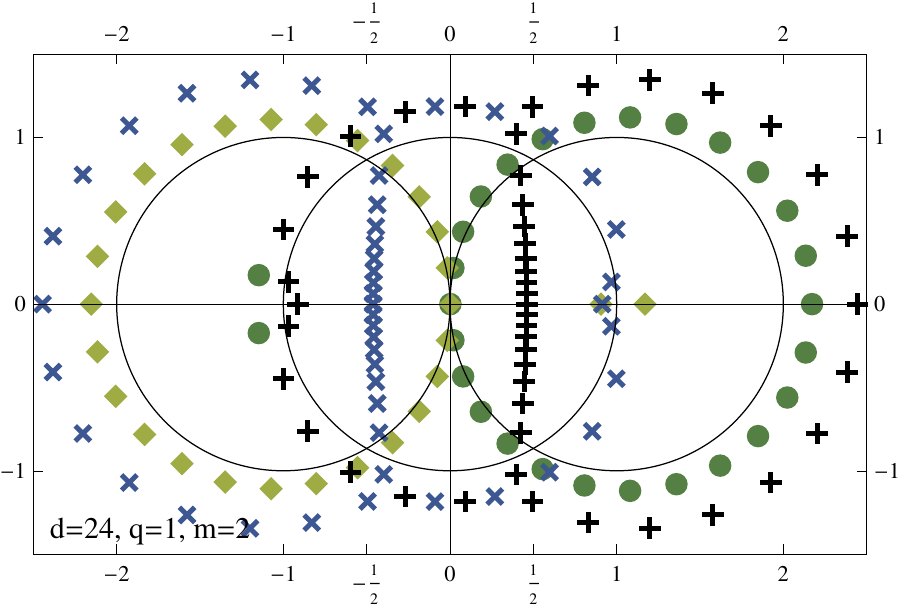} \\
\includegraphics[scale=0.80]{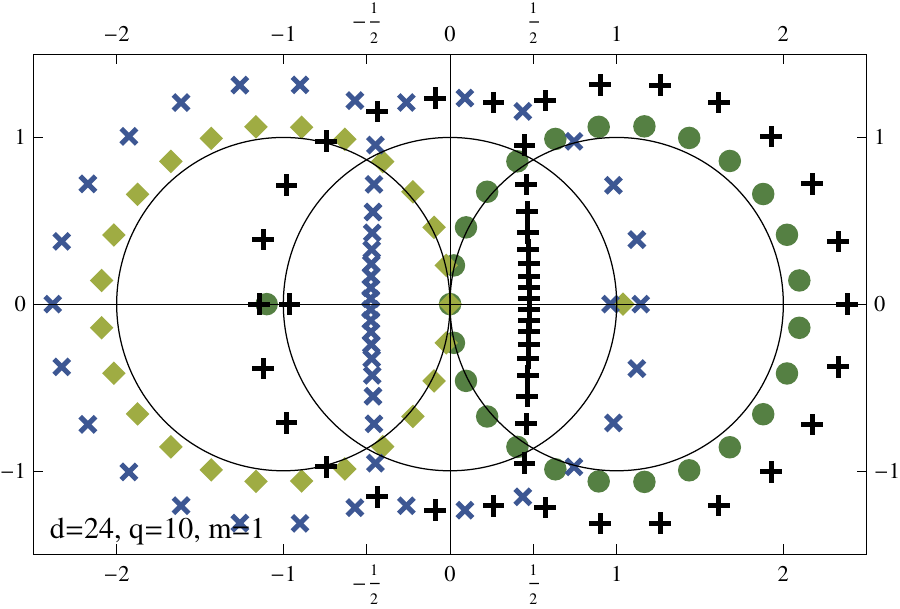} \includegraphics[scale=0.80]{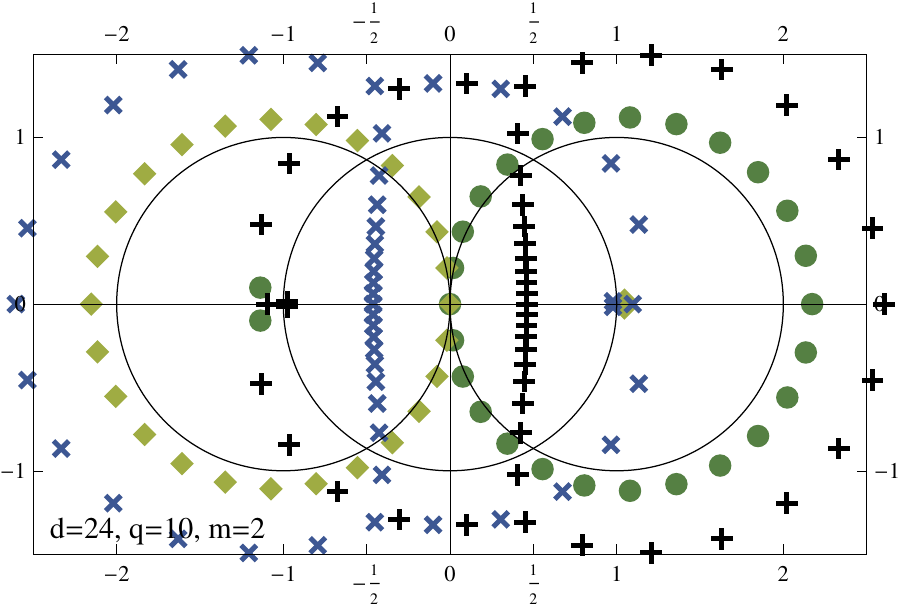}
\end{center}
\caption{\label{fig2b} Zeros of the polynomials analogue to the one given in Figure~\ref{fig2} but for the case $s = d - 1 - 2m$; that is, $s = d - 3$ (left column) and $s = d - 4$ (right column).}
\end{figure}

\subsection*{Interrelations and Self-Reciprocity}

The four kinds of Gonchar polynomials are connected. 
The Gonchar polynomials of the first and second kind are related through
\begin{align}
z^d \, \GoncharA( d, q \left( 1 / z \right)^{d-1}; 1 / z ) &= \GoncharB( d, q; z ), \\
z^d \, \GoncharB( d, q \left( 1 / z \right)^{d-1}; 1 / z ) &= \GoncharA( d, q; z ).
\end{align}
Similarly, the Gonchar polynomials of the third and fourth kind are related through
\begin{align}
z^d \, \GoncharC( d, q^\prime \left( 1 / z \right)^{d-1}; 1 / z ) &= \GoncharD( d, q^\prime; z ), \\
z^d \, \GoncharD( d, q^\prime \left( 1 / z \right)^{d-1}; 1 / z ) &= \GoncharC( d, q^\prime; z ).
\end{align}
The Gonchar polynomials of the second and fourth kind have in common that for $q \neq 1$, 
\begin{equation}
\GoncharB( d, -q; -z ) = \left( 1 + \frac{1}{-q} \right) \left( 1 + z \right)^d + z - 1 = \GoncharD( d, q^\prime; z ).
\end{equation}
For even dimension $d$ and canonical charges $q = 1$ and $q^\prime = -1$, the Gonchar polynomials of first and third kind are connected by means of
\begin{equation} \label{eq:correspondence.A.C}
\left( - z \right)^{2d-1} \GoncharA( d, 1; - 1 / z ) = ( - 1 )^{d-1} \left[ \left( 1 + z \right)^d + z - 1 \right] z^{d-1} + \left( 1 + z \right)^{d} = \GoncharC( d, - 1; z ),
\end{equation}
whereas for odd dimension $d$ and conical charges one has
\begin{equation}
\left( - z \right)^{2d-1} \GoncharA( d, 1; - 1 / z ) + \GoncharC( d, - 1; z ) = 2 \left( 1 + z \right)^{d}.
\end{equation}

A polynomial $P$ with real coefficients is called \emph{self-reciprocal} if its \emph{reciprocal polynomial $P^*( z ) \DEF z^{\deg P} \, P( 1/z )$} coincides with $P$ and it is called \emph{reciprocal} if $P^*( z ) = \pm P( z )$. That means, that the coefficients of $z^k$ and of $z^{\deg P - k}$ in $P(z)$ are the same. The polynomial $\GoncharA( d, 1; z)$ is self-reciprocal for even $d$, since
\begin{equation}
z^{2d-1} \, \GoncharA( d, 1; 1/z ) = \left[ \left( 1 - z \right)^d - z - 1 \right] z^{d-1} + \left( 1 - z \right)^{d}.
\end{equation}
The Gonchar polynomial of third kind is reciprocal for every dimension $d$; i.e.,
\begin{equation}
z^{2d-1} \, \GoncharC( d, -1; 1/z ) = \frac{1}{-1} \left( 1 + z \right)^{d} - z^{d-1} + z^d + z^d \left( 1 + z \right)^{d} = - \GoncharC( d, -1; z ).
\end{equation}
Consequently, if $\zeta$ is a zero of $\GoncharC( d, -1; z)$, then so is $1/\zeta$ for any $d$.

\subsection*{The Gonchar Polynomials of the First Kind} 

We refer the interested reader to~\cite{BrDrSa2012}.

\subsection*{The Gonchar Polynomials of the Third Kind}

The correspondence \eqref{eq:correspondence.A.C} implies that for even dimension $d$, the number $\zeta$ is a zero of $\GoncharC( d, -1; z )$ if and only if $-1/\zeta$ is a zero of $\GoncharA( d, 1; z )$.

\begin{conj}
Let $\Gamma$ be the set consisting of the boundary of the union of the two unit disks centered at $-1$ and $0$ and the line-segment connecting the intersection points 
Then, as $d \to \infty$, all the zeros of $\GoncharC( d, q^\prime; z)$ tend to $\Gamma$, and every point of $\Gamma$ attracts zeros of these polynomials.
\end{conj}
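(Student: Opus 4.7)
The plan is to apply the classical ``balance curve'' method (in the spirit of Erd\H{o}s--Tur\'an and subsequent asymptotic zero-distribution theory) to the three-term decomposition $P_d(z) \DEF \GoncharC(d, q'; z) = A_d(z) + B_d(z) + C_d(z)$, where
\begin{equation*}
A_d(z) \DEF \frac{1}{q'}(z+1)^d z^{d-1}, \quad B_d(z) \DEF (1-z) z^{d-1}, \quad C_d(z) \DEF (z+1)^d.
\end{equation*}
Setting $\psi_A(z) \DEF \log|z+1|+\log|z|$, $\psi_B(z) \DEF \log|z|$, and $\psi_C(z) \DEF \log|z+1|$, one has $(1/d)\log|A_d|\to\psi_A$, $(1/d)\log|B_d|\to\psi_B$, $(1/d)\log|C_d|\to\psi_C$ locally uniformly on $\Cset\setminus\{0,-1\}$. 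Accordingly, zeros of $P_d$ can only accumulate where at least two of $\psi_A,\psi_B,\psi_C$ attain the common maximum $\Phi \DEF \max(\psi_A,\psi_B,\psi_C)$.

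I would first identify the three open ``dominance'' regions: $\psi_A$ dominates iff $|z|>1$ and $|z+1|>1$; $\psi_B$ dominates iff $|z+1|<1$ and $\re z < -1/2$; $\psi_C$ dominates iff $|z|<1$ and $\re z > -1/2$. The three pairwise balance loci are accordingly (i) the arc of $|z+1|=1$ with $\re z\le -1/2$ (where $\psi_A=\psi_B\geq\psi_C$), (ii) the arc of $|z|=1$ with $\re z\ge -1/2$ (where $\psi_A=\psi_C\geq\psi_B$), and (iii) the vertical segment $\{\re z=-1/2\}\cap\{|z|\le 1\}$ (where $\psi_B=\psi_C\geq\psi_A$). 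Their union is exactly the set $\Gamma$ of the conjecture, and the three arcs meet pairwise at the triple points $-\tfrac{1}{2}\pm\tfrac{\sqrt{3}}{2}\ii$.

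Next I would translate these pointwise asymptotics into weak convergence of the normalized zero-counting measures $\nu_d \DEF (2d-1)^{-1}\sum_{P_d(\zeta)=0}\delta_\zeta$. On any compact subset $K$ of $\Cset\setminus\Gamma$ one summand of $P_d$ exceeds the other two by a factor of at least $e^{cd}$ for some $c=c(K)>0$, so $(1/d)\log|P_d|\to\Phi$ pointwise off $\Gamma$. Since $\{(1/d)\log|P_d|\}$ is a sequence of subharmonic functions locally uniformly bounded above, a standard potential-theoretic argument upgrades this to $L^1_{\mathrm{loc}}(\Cset)$ convergence of the full sequence to the continuous subharmonic function $\Phi$. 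Because $\Phi(z) = 2\log|z| + O(1)$ at infinity, $\tfrac{1}{2}\Phi$ is the logarithmic potential of a unit positive Borel measure; applying the distributional Laplacian to the identity $(1/d)\log|P_d(z)| = (\log|1/q'|)/d + ((2d-1)/d)\int\log|z-w|\,\dd\nu_d(w)$ then yields $\nu_d \weakstarto \nu \DEF (4\pi)^{-1}\Delta\Phi$.

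Finally, to conclude that $\mathrm{supp}(\nu)=\Gamma$ and that every point of $\Gamma$ attracts zeros, I would compute $\Delta\Phi$ explicitly. Each $\psi_\ast$ is harmonic off $\{0,-1\}$, so $\Phi$ is piecewise harmonic and $\Delta\Phi$ is a positive measure concentrated on $\Gamma$, with density on each arc equal to the jump in the normal derivative of $\Phi$ across it. A direct differentiation gives density $1$ on each of the two circular arcs and density $4/(1+4y^2)$ on the vertical segment (with $y=\im z$), all strictly positive on the relative interior of each arc; a consistency check against arc length confirms $\nu(\Gamma) = 1$ with mass $\tfrac{1}{3}$ on each arc. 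Weak convergence then forces every open neighborhood of every point of $\Gamma$ to eventually contain zeros of $P_d$, establishing both assertions of the conjecture. The chief technical obstacle I anticipate lies at the two triple points $-\tfrac{1}{2}\pm\tfrac{\sqrt{3}}{2}\ii$, where $\Phi$ is merely continuous and three dominance regions meet; a local Rouch\'e estimate is required there to rule out stray zeros drifting off $\Gamma$ and to verify that $L^1_{\mathrm{loc}}$ convergence is not damaged at the corner singularities.
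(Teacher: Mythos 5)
The statement you address is presented in the paper as a \emph{Conjecture}, with no proof given, so there is no paper argument for comparison; your proposal, if completed, would in fact resolve an open question rather than replicate or vary an existing argument.

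Your approach is the standard potential-theoretic method for the asymptotic zero distribution of a three-term family of polynomials, and the computations check out. Writing $\GoncharC(d,q';z)=A_d+B_d+C_d$ with $A_d=(z+1)^dz^{d-1}/q'$, $B_d=(1-z)z^{d-1}$, $C_d=(z+1)^d$, one indeed gets $\tfrac{1}{d}\log|A_d|\to\log|z|+\log|z+1|$, $\tfrac{1}{d}\log|B_d|\to\log|z|$, $\tfrac{1}{d}\log|C_d|\to\log|z+1|$ locally uniformly off $\{0,-1\}$; the three dominance regions and the identification of the balance locus with $\Gamma$ are correct (in particular the implications $|z+1|<1,\,\re z>-\tfrac12\Rightarrow|z|<1$ and $|z|<1,\,\re z<-\tfrac12\Rightarrow|z+1|<1$, which make the regions genuinely partition $\Cset\setminus\Gamma$); and the arclength densities of $\nu=\tfrac{1}{4\pi}\Delta\Phi$ — constantly $\tfrac{1}{4\pi}$ on the two circular arcs, and $\tfrac{1}{\pi(1+4y^2)}$ on the segment $\re z = -\tfrac12$, $|\im z|\le\tfrac{\sqrt3}{2}$ — integrate to $\tfrac13$ on each arc and sum to $1$, consistent with the normalization forced by $\Phi(z)=2\log|z|+\mathcal{O}(1/|z|)$. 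The leading coefficient of $\GoncharC$ is $1/q'$, so your bookkeeping and the conclusion $\nu_d\weakstarto\nu$ are correct, and the Rouch\'e argument on compact subsets of $\Cset\setminus\Gamma$ — together with $\GoncharC(d,q';0)=1\ne 0$ and $\GoncharC(d,q';-1)=2(-1)^{d-1}\ne 0$ — supplies the ``all zeros tend to $\Gamma$'' half of the assertion.

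Your closing caveat about the triple points $-\tfrac12\pm\tfrac{\sqrt3}{2}\ii$ is not a genuine gap. A compact subset of $\Cset\setminus\Gamma$ that comes arbitrarily close to a triple point still lies entirely inside one of the three open dominance regions, so the Rouch\'e estimate applies there as everywhere else and no separate corner estimate is needed. The $L^1_{\mathrm{loc}}$ convergence of $\tfrac{1}{d}\log|P_d|$ to $\Phi$ is likewise unaffected by the corners: $\Phi=\max(\psi_A,\psi_B,\psi_C)$ is continuous on all of $\Cset$ and locally Lipschitz away from $0$ and $-1$, so $\Delta\Phi$ carries no atoms and the distributional Laplacian passes to the limit without special treatment at the triple points. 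The one step you should pin down with a clean reference is the upgrade from pointwise (in fact locally uniform) convergence on $\Cset\setminus\Gamma$ to $L^1_{\mathrm{loc}}$ convergence on all of $\Cset$: invoke the compactness theorem for locally uniformly bounded-above families of subharmonic functions (e.g.\ Ransford, \emph{Potential Theory in the Plane}, Theorem 4.1.9), identify every subsequential $L^1_{\mathrm{loc}}$ limit with $\Phi$ via a.e.\ agreement and the fact that subharmonic functions coinciding a.e.\ are identical, and conclude convergence of the full sequence by the usual subsequence argument.
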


\subsection*{The Gonchar Polynomials of the Second and Fourth Kind} 

Both polynomials can be derived from the trinomial
\begin{equation} \label{eq:basic.polynomial}
P_d( q; w ) \DEF \left( 1 + \frac{1}{q} \right) w^d +  w - 2
\end{equation}
by means of a linear transformation of the argument; that is,
\begin{equation*}
\GoncharB( d, q; z ) = P_d( q; 1 - z ), \qquad \GoncharD( d, q^\prime; z ) = P_d( q^\prime; 1 + z ).
\end{equation*}
Applying results of the Hungarian mathematician Egerv{\'a}ry on the distribution of zeros of trinomials (delightfully summarized in~\cite{Sz2010} and otherwise difficult to come by in the English speaking literature), we derive several results for the Gonchar polynomials of the second and fourth kind. An interesting observation central to Egerv{\'a}ry's work is that \emph{the zeros of a trinomial polynomial can be characterized as the equilibrium points of an external field problem in the complex plane of unit point charges that are located at the vertices of two regular concentric polygons centered at the origin.} 

\begin{prop}
Let $d \geq 1$ and $q \neq -1$. Then $P_d( q; w )$ has simple zeros.
\end{prop}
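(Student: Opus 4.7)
The plan is to argue by contradiction: assume $w_0$ is a multiple zero of $P_d(q;w) = cw^d + w - 2$, where $c \DEF (q+1)/q$ is nonzero by the hypothesis $q \neq -1$ (and $q \neq 0$, which is needed to write $P_d$ in the first place). Then both $P_d(w_0) = 0$ and $P_d'(w_0) = dc w_0^{d-1} + 1 = 0$. The derivative equation immediately forces $w_0 \neq 0$ and $c w_0^{d-1} = -1/d$, whence $c w_0^d = -w_0/d$. Inserting this into $c w_0^d + w_0 - 2 = 0$ and clearing denominators collapses the system to the single linear identity $(d-1)w_0 = 2d$, so $w_0 = 2d/(d-1)$ whenever $d \geq 2$. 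The base case $d = 1$ is trivial: $P_1(q;w) = (c+1)w - 2$ is either a nonzero constant (when $c = -1$, i.e., $q = -1/2$) or an affine polynomial, and simplicity of zeros is automatic.

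Substituting $w_0 = 2d/(d-1)$ back into $c w_0^{d-1} = -1/d$ pins down the single candidate value
\[
c_d = -\frac{(d-1)^{d-1}}{2^{d-1} d^d},
\]
which is strictly negative for every $d \geq 2$. Hence in every regime where $c > 0$ -- equivalently $q > 0$ or $q < -1$, which already covers the Gonchar polynomials of the second kind ($q > 0$) and the strong-field fourth kind ($q' = -q < -1$) -- the required sign condition $c = c_d < 0$ is inconsistent, the contradiction is immediate, and $P_d(q;w)$ has only simple zeros.

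The main obstacle is the narrow band $q \in (-1, 0)$, in which the unique value $q_d \DEF 1/(c_d - 1) \in (-1, 0)$ realizes $c = c_d$ and produces a genuine double root at $w_0 = 2d/(d-1)$ (for example, $d = 2$ gives $q_d = -8/9$ and $P_2 = -(w-4)^2/8$). To close the statement cleanly as written, one therefore either needs to restrict the hypothesis to the sign regime $c > 0$ dictated by the physical setup of Sections above, or to invoke Egerv{\'a}ry's geometric characterization of trinomial roots surveyed in \cite{Sz2010}, under which the exceptional $q_d$ manifests as a coalescence of two point-charge equilibria that can be excluded by an auxiliary argument. Either way, the bulk of the work is carried by the elementary elimination above, and I would expect the author's proof to take precisely that route.
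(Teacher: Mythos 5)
Your elimination is sound, and it actually uncovers a genuine flaw in both the statement of the Proposition and the paper's proof of it. The paper cites Egerv{\'a}ry's discriminant criterion from~\cite{Sz2010}: $Az^{n+m}+Bz^m+C$ has a multiple root if and only if $(-1)^{n+m}A^mC^n/B^{n+m}=m^mn^n/(n+m)^{n+m}$. With $n=d-1$, $m=1$, $A=1+1/q$, $B=1$, $C=-2$, the left-hand side equals $-2^{d-1}(1+1/q)$, and the paper asserts this is always negative. That holds only when $1+1/q>0$, i.e.\ $q>0$ or $q<-1$. For $q\in(-1,0)$ one has $1+1/q<0$, the left-hand side is positive, and equality is attained at precisely one value of $q$ --- exactly your
\begin{equation*}
q_d=\frac{1}{c_d-1}\in(-1,0),\qquad c_d=-\frac{(d-1)^{d-1}}{2^{d-1}d^d},
\end{equation*}
so the Proposition is false as stated. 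Your concrete example $d=2$, $q=-8/9$, $P_2(-8/9;w)=-(w-4)^2/8$ is a legitimate counterexample, and it also falsifies the ensuing Corollary as applied to fourth-kind Gonchar polynomials with $q'\in(-1,0)$. Your hand elimination ($P_d'(w_0)=0$ forces $cw_0^d=-w_0/d$, whence $w_0=2d/(d-1)$ and $c=c_d$) in effect re-derives Egerv{\'a}ry's criterion for this particular trinomial and, unlike the paper's one-line appeal to the citation, makes the failure mode visible. The fix you suggest is the right one: restrict the hypothesis to $1+1/q>0$ (equivalently $q>0$ or $q<-1$), which suffices for the applications the paper actually needs, or explicitly exclude the exceptional $q_d$. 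There is no auxiliary argument that rescues the case $q=q_d$: the double root is real.
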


\begin{proof}
This is clear for $d = 1$. Let $d \geq 2$. Then, according to \cite[Comment after Theorem~1]{Sz2010}, the polynomial equation $A z^{n+m} + B z^{m} + C = 0$ has a root with higher multiplicity if and only if
\begin{equation*}
(-1)^{n+m} \frac{A^m C^n}{B^{n+m}} = \frac{m^m n^n}{\left( n + m \right)^{n+m}}.
\end{equation*}
Taking $n = d - 1$, $m = 1$, $A = 1 + 1/q$, $B = 1$, and $C = -2$, it follows that the left-hand side is negative and the right-hand side positive. Hence $P_d( q; w)$ has no zero with higher multiplicity.
\end{proof}

\begin{cor}
All the zeros of the Gonchar polynomials of the second and fourth kind are simple.
\end{cor}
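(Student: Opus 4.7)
The plan is to derive the corollary directly from the preceding Proposition by exploiting the two affine reparameterizations exhibited just above, namely
\[
\GoncharB(d,q;z) = P_d(q;1-z), \qquad \GoncharD(d,q^\prime;z) = P_d(q^\prime;1+z).
\]
Since the maps $z \mapsto 1-z$ and $z \mapsto 1+z$ are bijections of $\Cset$ onto itself, they preserve the multiplicity of zeros: if $w_0$ is a zero of $P_d(q;\cdot)$ of multiplicity $k$, then $z_0 = 1 - w_0$ is a zero of $\GoncharB(d,q;\cdot)$ of the same multiplicity $k$, and $z_0 = w_0 - 1$ is a zero of $\GoncharD(d,q^\prime;\cdot)$ of multiplicity $k$.

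First I would record that, under the standing assumption of this section ($q > 0$ and $q^\prime = -q$), the hypothesis $q \neq -1$ of the Proposition is automatically satisfied for $\GoncharB$, so every zero of $\GoncharB(d,q;z)$ is simple. For $\GoncharD$ the Proposition applies whenever $q^\prime \neq -1$, i.e.\ whenever $q \neq 1$. The only edge case is therefore $q^\prime = -1$, which has to be handled separately; this is essentially the only ``obstacle,'' and it is cosmetic: in that case $P_d(-1;w) = w - 2$ degenerates to a linear polynomial, whence $\GoncharD(d,-1;z) = z - 1$ trivially has only one (simple) zero.

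Combining these two observations yields the corollary: every zero of $\GoncharB(d,q;z)$ and of $\GoncharD(d,q^\prime;z)$ is simple. Since the argument rests on a one-line reduction to the Proposition, no further machinery is needed and there is no substantive obstacle beyond the bookkeeping for the degenerate parameter value just mentioned.
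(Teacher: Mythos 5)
Your argument is exactly the one the paper intends: the corollary is an immediate consequence of the preceding proposition via the affine substitutions $\GoncharB(d,q;z)=P_d(q;1-z)$ and $\GoncharD(d,q^\prime;z)=P_d(q^\prime;1+z)$, which preserve multiplicities. Your explicit treatment of the degenerate value $q^\prime=-1$ (where $\GoncharD(d,-1;z)=z-1$ is linear) is a welcome bit of care that the paper leaves implicit.
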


\begin{prop} \label{prop:force.field.interpretation}
Let $d \geq 2$ and $q \neq -1$. Assuming that the force is inversely proportional to the distance, the zeros of $P_d( q; w )$ are the equilibrium points of the force field of (positive) unit point charges at the vertices of two concentric polygons in the complex plane. The vertices are given by
\begin{align*}
\zeta_{1,k} &\DEF \left[ \left( 2 - \frac{1}{d} \right) \frac{1}{\big| 1 + \frac{1}{q} \big|} \right]^{1/(d-1)} e^{i \, (-\alpha + (2k + 1) \pi ) / (d-1)}, \qquad k = 1, \dots, d - 1,
\intertext{and}
\zeta_{2,k} &\DEF \left[ \left( 2 + \frac{1}{d-1} \right) \frac{2}{\big| 1 + \frac{1}{q} \big|} \right]^{1/d} e^{i \, (\pi -\alpha + (2k + 1) \pi ) / d}, \qquad k = 1, \dots, d,
\end{align*}
where $\alpha = 0$ if $q \in ( -\infty, -1) \cup ( 0, \infty )$ and $\alpha = \pi$ if $q \in (-1,0)$.
\end{prop}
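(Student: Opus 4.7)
\medskip

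\noindent\textbf{Proof proposal.}
The plan is to realize $P_d(q;w)$ (up to a scalar) as a factor of the derivative of a carefully chosen product of two cyclotomic-type binomials, and then recover the force-field interpretation via the logarithmic derivative. Recall that for unit positive charges at points $\zeta_1,\dots,\zeta_N$ with force of magnitude $1/|w-\zeta_j|$ directed from $\zeta_j$ to $w$, the total force at $w$ is $\sum_j (w-\zeta_j)/|w-\zeta_j|^2 = \overline{\sum_j 1/(w-\zeta_j)}$, so the equilibrium points are precisely the zeros of the rational function $F(w)=\sum_j 1/(w-\zeta_j) = S'(w)/S(w)$, where $S(w)=\prod_j(w-\zeta_j)$, that are not themselves charges.

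The natural choice is
\[
R(w) \DEF (w^{d-1}-A)(w^d-B),
\]
so that the two polygons of vertices are the $(d-1)$-th roots of $A$ and the $d$-th roots of $B$, concentric at the origin. A direct differentiation yields
\[
R'(w) = (2d-1)w^{2d-2} - dA\, w^{d-1} - (d-1)B\, w^{d-2} = w^{d-2}\bigl[(2d-1)w^d - dA\, w - (d-1)B\bigr].
\]
The next step is to determine $A$ and $B$ so that the bracketed factor is a nonzero multiple $\lambda$ of $P_d(q;w)=(1+1/q)w^d + w - 2$. Matching the coefficients of $w^d$, $w$, and $w^0$ gives
\[
\lambda = \frac{(2d-1)q}{q+1},\qquad A = -\frac{\lambda}{d},\qquad B = \frac{2\lambda}{d-1},
\]
and one reads off $|A| = (2-1/d)/|1+1/q|$ and $|B| = 2(2+1/(d-1))/|1+1/q|$, which are precisely the $(d-1)$-th and $d$-th powers of the moduli announced for $\zeta_{1,k}$ and $\zeta_{2,k}$.

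It remains to locate the arguments. The key is that $A = -(2-1/d)\,q/(q+1)$ and $B = (4-2/(d-1))\,q/(q+1)$ have arguments determined by the sign of the real quantity $q/(q+1)$. In the regime $q \in (-\infty,-1)\cup(0,\infty)$ this ratio is positive, hence $\arg A = \pi$ and $\arg B = 0$, corresponding to $\alpha = 0$; in the regime $q\in(-1,0)$ it is negative, giving $\arg A = 0$, $\arg B = \pi$, corresponding to $\alpha = \pi$. Extracting the $(d-1)$-th roots of $A$ and the $d$-th roots of $B$ (using $\arg A \equiv \pi - \alpha$ and $\arg B \equiv -\alpha$ modulo $2\pi$) yields exactly the formulas for $\zeta_{1,k}$ and $\zeta_{2,k}$ in the statement.

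Finally, interpret the factorization $R'(w) = c\, w^{d-2}\, P_d(q;w)$: the $d-2$ zeros at the origin of $R'$ reflect the trivial equilibrium at the common center of the two polygons (where by symmetry the contribution from each regular polygon with $n$ vertices vanishes to order $n-1$, so the sum vanishes to order $\min(d-2,d-1)=d-2$), and the remaining $d$ zeros---given by the factor $P_d(q;w)$, with simplicity guaranteed by the preceding Proposition---are the non-trivial equilibrium points claimed. The only delicate point is the case analysis for $\arg A$ and $\arg B$, which must be matched precisely against the piecewise definition of $\alpha$; the algebraic matching of coefficients is routine.
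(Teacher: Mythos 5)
Your proof is correct, and in fact it is more informative than the paper's own proof, which simply cites Szab\'o's Theorem~2 (where Egerv\'ary's classical result on trinomials is recorded). You reconstruct that result from scratch via the ansatz $R(w)=(w^{d-1}-A)(w^{d}-B)$, the identity $R'(w)=w^{d-2}\bigl[(2d-1)w^{d}-dA\,w-(d-1)B\bigr]$, and the force-field characterization of zeros of $R'/R$. The coefficient matching $\lambda=\frac{(2d-1)q}{q+1}$, $A=-\lambda/d$, $B=2\lambda/(d-1)$ is exactly right, and it correctly reproduces the moduli $(2-1/d)/|1+1/q|$ and $2(2+\tfrac{1}{d-1})/|1+1/q|$ as well as the $\alpha$-dependent arguments (your sign analysis of $q/(q+1)$ across the three regimes $(-\infty,-1)$, $(-1,0)$, $(0,\infty)$ is the clean way to recover the piecewise definition of $\alpha$). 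Your handling of the spurious factor $w^{d-2}$ as the trivial equilibrium at the common center (with the correct order-of-vanishing count) is also right, and it correctly degenerates for $d=2$, where the ``$1$-gon'' destroys the central symmetry and no such factor appears. The only point you and the paper both leave implicit is the mild nondegeneracy assumption that no $(d-1)$-th root of $A$ coincides with a $d$-th root of $B$ (so that $R$ has simple zeros and zeros of $R'$ are genuinely equilibrium points rather than coincident charge locations); this can fail for isolated values of $q$, but it is a defect of the cited source as much as of your argument, and does not affect the substance. In short: same underlying Egerv\'ary idea, but where the paper delegates to a citation, you supply a self-contained and verifiable derivation of the constants.
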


\begin{proof}
This follows from \cite[Theorem~2]{Sz2010}.
\end{proof}

\begin{cor}
Let $d \geq 2$. If $q > 0$, then the zeros of $\GoncharB( d, q; z )$ are the equilibrium points of the force field of positive unit point charges at the vertices $1 - \zeta_{1,k}$ ($k = 1, \dots, d - 1$) and $1 - \zeta_{2,k}$ ($k = 1, \dots, d$). If $q^\prime < 0$ ($q^\prime \neq -1$), then the zeros of $\GoncharD( d, q^\prime; z )$ are the equilibrium points of the force field of positive unit point charges at the vertices $\zeta_{1,k} - 1$ ($k = 1, \dots, d - 1$) and $\zeta_{2,k} - 1$ ($k = 1, \dots, d$). 
\end{cor}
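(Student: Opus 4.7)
The proof is a direct application of the preceding Proposition, combined with the two identities
$$
\GoncharB(d,q;z) = P_d(q;1-z), \qquad \GoncharD(d,q';z) = P_d(q';1+z)
$$
recorded immediately before the Proposition, and the affine equivariance of equilibrium points of a planar force field of unit charges.

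My plan is as follows. First, I would recall that for unit positive charges located at points $\{a_k\}$ in the complex plane with force inversely proportional to distance, a point $w$ is an equilibrium point of the field if and only if $\sum_k 1/(w - a_k) = 0$ (up to complex conjugation of the field). I would then verify the elementary fact that this equilibrium condition is equivariant under every affine isometry $T(w) = \alpha w + \beta$, $|\alpha|=1$: if the charges are translated/reflected by $T$ to the new locations $\{T(a_k)\}$, the equilibrium locus is translated/reflected by the same $T$. Concretely, for $T(w) = 1-w$ one has
$$
\sum_k \frac{1}{(1-w^*) - (1-a_k)} = \sum_k \frac{1}{a_k - w^*} = -\sum_k \frac{1}{w^* - a_k},
$$
so $w^*$ is an equilibrium for $\{a_k\}$ iff $1-w^*$ is an equilibrium for $\{1-a_k\}$; the same identity with a translation gives the analogous statement for $T(w) = w-1$.

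Now I would assemble the corollary. For $q>0$, the zeros of $\GoncharB(d,q;z)$ are in bijection with the zeros of $P_d(q;w)$ via $z \leftrightarrow 1-w$. By the Proposition, the latter are precisely the equilibrium points of the unit-charge field with sources at $\zeta_{1,k}$ ($k=1,\dots,d-1$) and $\zeta_{2,k}$ ($k=1,\dots,d$). Applying the equivariance step with $T(w)=1-w$ translates both the charges and the equilibrium points by $T$, giving the claimed charges $1-\zeta_{1,k}$ and $1-\zeta_{2,k}$ and identifying their equilibrium points with the zeros of $\GoncharB$. The argument for $\GoncharD$ under $q'<0$, $q'\neq -1$, is identical with $T(w)=w-1$ in place of $T(w)=1-w$; the transformed charges are then at $\zeta_{1,k}-1$ and $\zeta_{2,k}-1$, as required.

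The proof is essentially a bookkeeping exercise and I do not anticipate a real obstacle: the substantive content was already discharged by the Proposition (via Egerv\'ary's characterization of the zeros of a trinomial), and everything that remains is the trivial observation that the relation $\sum_k 1/(w-a_k)=0$ is preserved under replacing $w$ and each $a_k$ by their images under any affine isometry. The only point worth being explicit about is the minus sign arising from the reflection in $T(w)=1-w$, which cancels out and so does not affect the vanishing condition.
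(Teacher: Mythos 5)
Your proposal is correct, and it is exactly the argument the paper leaves implicit: the corollary is stated without a separate proof because it follows immediately from Proposition~\ref{prop:force.field.interpretation} together with the identities $\GoncharB(d,q;z)=P_d(q;1-z)$, $\GoncharD(d,q';z)=P_d(q';1+z)$ and the obvious equivariance of the equilibrium condition $\sum_k (z-a_k)^{-1}=0$ under the affine maps $w\mapsto 1-w$ and $w\mapsto w-1$. Your explicit check of the sign cancellation under the reflection $w\mapsto 1-w$ is the only point worth spelling out, and you have it right.
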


The paper \cite{Sz2010} and the more recent \cite{Me2012} discuss annular sectors as zero inclusion regions. 

\begin{prop} \label{prop:zero.inclusion.regions}
Let $d \geq 3$. Suppose $s$ and $t$ are the unique positive roots of $x^d - \frac{1}{2} \, x - 1 = 0$ and $x^d + \frac{1}{2} \, x - 1 = 0$, respectively. Choose radii $\rho_1$ and $\rho_2$ with $2/3 \leq \rho_1 \leq t < 1 < s \leq \rho_2 \leq (3/2)^{1/(d-1)}$. Then each of the annular sectors
\begin{equation} \label{eq:annular.sectors}
\left\{ z \in \mathbb{C} : \rho_1 \leq | z | \leq \rho_2, \left| \arg( z ) - \frac{2\pi k}{d} \right| \leq \frac{(3/2)^{1/(d-1)}}{2d} \right\}, \qquad k = 0, \dots, d-1,
\end{equation}
contains exactly one of the $d$ zeros of $P_d( 1; w ) = 2 w^d + w - 2$.
\end{prop}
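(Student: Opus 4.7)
The plan is to apply Rouch\'e's theorem on the boundary of each annular sector $S_k$, comparing the trinomial $P_d(1;w) = 2w^d + w - 2$ with the simpler polynomial $R(w) \DEF 2w^d - 2 = 2(w^d - 1)$, whose zeros are precisely the $d$-th roots of unity $\omega_k \DEF e^{2\pi \ii k/d}$, one per sector (since $|\omega_k| = 1 \in [\rho_1,\rho_2]$ and $\arg \omega_k = 2\pi k/d$ is the center of $S_k$). Since $P_d(1;w) = R(w) + w$, showing $|w| < |R(w)|$ everywhere on $\partial S_k$ will give, via Rouch\'e's theorem, that $P_d(1;w)$ has the same number of zeros as $R(w)$ inside $S_k$, namely exactly one. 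Summed over the $d$ disjoint sectors (they are disjoint because $2d\theta = (3/2)^{1/(d-1)} < 2\pi$ for $d \geq 3$) this accounts for all $d$ zeros of $P_d(1;w)$, so each sector must contain exactly one.

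The first step is a containment lemma: any zero $w$ of $P_d(1;w)$ satisfies $w^d = 1 - w/2$, hence $|w|^d = |1 - w/2| \in [1 - |w|/2,\, 1 + |w|/2]$. The extreme cases $|w|^d = 1 + |w|/2$ and $|w|^d = 1 - |w|/2$ are exactly the defining equations of $s$ and $t$; a monotonicity check on $x \mapsto x^d - x/2 - 1$ and $x \mapsto x^d + x/2 - 1$ shows that the zeros of $P_d(1;w)$ lie in the closed annulus $t \leq |w| \leq s$, which is contained in $\rho_1 \leq |w| \leq \rho_2$.

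The second step is the boundary estimate $|w| < |R(w)|$ on $\partial S_k$, separated into the two circular arcs and the two radial segments. On $|w| = \rho_2$ one uses $|R(w)| \geq 2\rho_2^d - 2$; the definition of $s$ together with $\rho_2 \geq s$ yields $\rho_2^d \geq 1 + \rho_2/2$, so $2\rho_2^d - 2 \geq \rho_2$, with the extra room supplied by $\rho_2 \leq (3/2)^{1/(d-1)}$ keeping the arc away from the worst angular position. The argument on $|w| = \rho_1$ is symmetric: $|R(w)| \geq 2 - 2\rho_1^d \geq \rho_1$ by the definition of $t$ and $\rho_1 \geq 2/3$. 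On the radial segments $\arg w = 2\pi k/d \pm \theta$ with $r = |w| \in [\rho_1,\rho_2]$, one has $\arg(w^d) \equiv \pm d\theta = \pm (3/2)^{1/(d-1)}/2 \pmod{2\pi}$, so $|w^d - 1|^2 = r^{2d} - 2r^d \cos(d\theta) + 1$, and the precise angular width $\theta = (3/2)^{1/(d-1)}/(2d)$ is calibrated so that this quantity exceeds $r^2/4$ for all admissible $r$; this is the delicate estimate, and it is exactly the one extracted from the annular-sector inclusion theorems of Egerv\'ary recalled in \cite{Sz2010} and refined in \cite{Me2012}.

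The main obstacle is item (c) above, the uniform lower bound on $|w^d - 1|$ along the radial segments over the entire annulus. A secondary technical nuisance is that when $\rho_1 = t$ or $\rho_2 = s$ the circular-arc inequalities become equalities at isolated points: on $|w| = t$ this happens only at $w = t\omega_k$, and of these only $w = t$ actually is a zero of $P_d(1;w)$ (a direct substitution $2t^d + t - 2 = 2(1 - t/2) + t - 2 = 0$), which lies in the interior of $S_0$; on $|w|=s$ no such point is a zero ($2s^d + s\omega_k - 2 = s + s\omega_k = s(1+\omega_k) \neq 0$ for $k \neq (d-1)/2$, and checked separately otherwise). Hence the Rouch\'e inequality can be made strict after an arbitrarily small inward perturbation of $\rho_1$ and outward perturbation of $\rho_2$ followed by a continuity argument, without losing any zeros from the sectors. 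With that, the counting argument is complete.
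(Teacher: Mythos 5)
Your strategy is genuinely different from the paper's. The paper does not run a Rouch\'e argument at all: it observes that $P_d(q;w)=0$ is equivalent after a monomial substitution to the normalized trinomial $\zeta^d - a\zeta - 1 = 0$, verifies the hypothesis $|a|\,(1+|a|)^{1/(d-1)}<1$, and then cites three results from Melman's paper \cite{Me2012} --- Theorem~3.1 for the annulus $t\le|\zeta|\le s$, Theorem~5.3 for the fact that each sector of half-width $\theta/d$ (where $\sin\theta=|a|(1+|a|)^{1/(d-1)}$) captures exactly one root, and Lemma~2.6 for the outer bounds $2/3$ and $(3/2)^{1/(d-1)}$. For $q=1$ the substitution is the identity and $a=-1/2$. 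Your Step~1 (the annulus containment via $w^d = 1-w/2$ and the monotonicity of $x^d\mp x/2-1$) is a clean, self-contained substitute for the citation of \cite[Thm.~3.1]{Me2012}, and that part is fine.

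The problem is the sector count. First, you openly leave the radial-segment estimate as ``the delicate estimate \dots extracted from'' \cite{Sz2010} and \cite{Me2012}, which is precisely the content the proof is supposed to supply; deferring it back to the same sources makes your Step~2 a restatement rather than a proof. Second, and more seriously, the specific comparison you propose does not work uniformly in $d$. On a radial boundary segment $\arg w = 2\pi k/d \pm \theta_0$ with $\theta_0 = (3/2)^{1/(d-1)}/(2d)$, at $|w|=1$ (which is always inside the admissible annulus since $\rho_1\le t<1<s\le\rho_2$) one has
\begin{equation*}
2\,|w^d-1| \;=\; 2\sqrt{2-2\cos(d\theta_0)} \;=\; 4\,\bigl|\sin(d\theta_0/2)\bigr|, \qquad d\theta_0/2 = \tfrac14(3/2)^{1/(d-1)} \;\longrightarrow\; \tfrac14,
\end{equation*}
so $2|w^d-1|\to 4\sin(1/4) \approx 0.9896 < 1 = |w|$. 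The inequality $|w| < |R(w)|$ with $R(w)=2(w^d-1)$ therefore fails on the radial edges once $d$ is moderately large (around $d\ge 40$, where $(3/2)^{1/(d-1)}<1.0107$), so Rouch\'e cannot be invoked with this choice of dominant part. This is not a perturbation issue of the kind you handle at the end; it is a genuine failure of the chosen decomposition of the trinomial, and it is exactly the reason the Egerv\'ary--Melman sector theorems use a sharper argument (involving $\arcsin$ rather than the raw ratio) instead of the crude split $2w^d+w-2 = 2(w^d-1)+w$. To make a self-contained Rouch\'e proof go through you would need either a cleverer comparison polynomial or the actual content of \cite[Thm.~5.3]{Me2012}, which is what the paper invokes.
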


\begin{proof}
The substitution $w = ( -C / A )^{1/d} \zeta$ for any $d$-th root of $-C/A$ transforms the equation $A w^d + B w + C = 0$ into $\zeta^d - a \zeta - 1 = 0$, where $a = ( B / C ) ( -C / A )^{1/d}$. Hence $P_d( q; w ) = 0$ if and only if 
\begin{equation} \label{eq:Me.transformation}
\zeta^d - a \zeta - 1 = 0, \qquad \text{where $a = - \frac{1}{2} \left( \frac{2}{1+\frac{1}{q}} \right)^{1/d}$ and $w = \left( \frac{2}{1+\frac{1}{q}} \right)^{1/d} \zeta$.}
\end{equation}
Let $s$ and $t$, with $0 < t < 1 < s$, be the unique positive roots of $x^d - | a | \, x - 1 = 0$ and $x^d + | a | \, x - 1 = 0$, respectively. Then, by \cite[Thm.~3.1]{Me2012}, every root of \eqref{eq:Me.transformation} lies in the annulus $\{ \zeta \in \mathbb{C} : t \leq | \zeta | \leq s \}$. It can be readily seen that $| a | ( 1 + | a | )^{1/(d-1)} < 1$ for $d \geq 3$ and $q > 0$. Let $d \geq 3$ and $q > 0$. Then each of the disjunct sectors 
\begin{equation*}
\left\{ \zeta \in \mathbb{C} : \left| \arg( \zeta ) - \frac{2\pi k}{d} \right| \leq \frac{\theta}{d} \right\}, \qquad  k = 0, \dots, d-1,
\end{equation*}
where $\sin \theta = | a | ( 1 + | a | )^{1/(d-1)}$, contains exactly one root of \eqref{eq:annular.sectors} by~\cite[Thm.~5.3]{Me2012}. For $q = 1$, the coefficient $a$ reduces to $-1/2$ and both $P_d( 1; w ) = 2 w^d + w - 2$ and $\zeta^d - a \zeta - 1$ have the same zeros. The result follows. The bounds for $\rho_1$ and $\rho_2$ follow from~\cite[Lem.~2.6]{Me2012}. 
\end{proof}

Proposition~\ref{prop:zero.inclusion.regions} can be obtained for $P_d( q; w )$ with general $q$. Figure~\ref{fig:zeroinclusionregions} illustrates the force field setting and the zero inclusion regions for the canonical case $q = 1$ (and $d = 6$). 
\begin{figure}[htb]
\begin{center}
\includegraphics[scale=1]{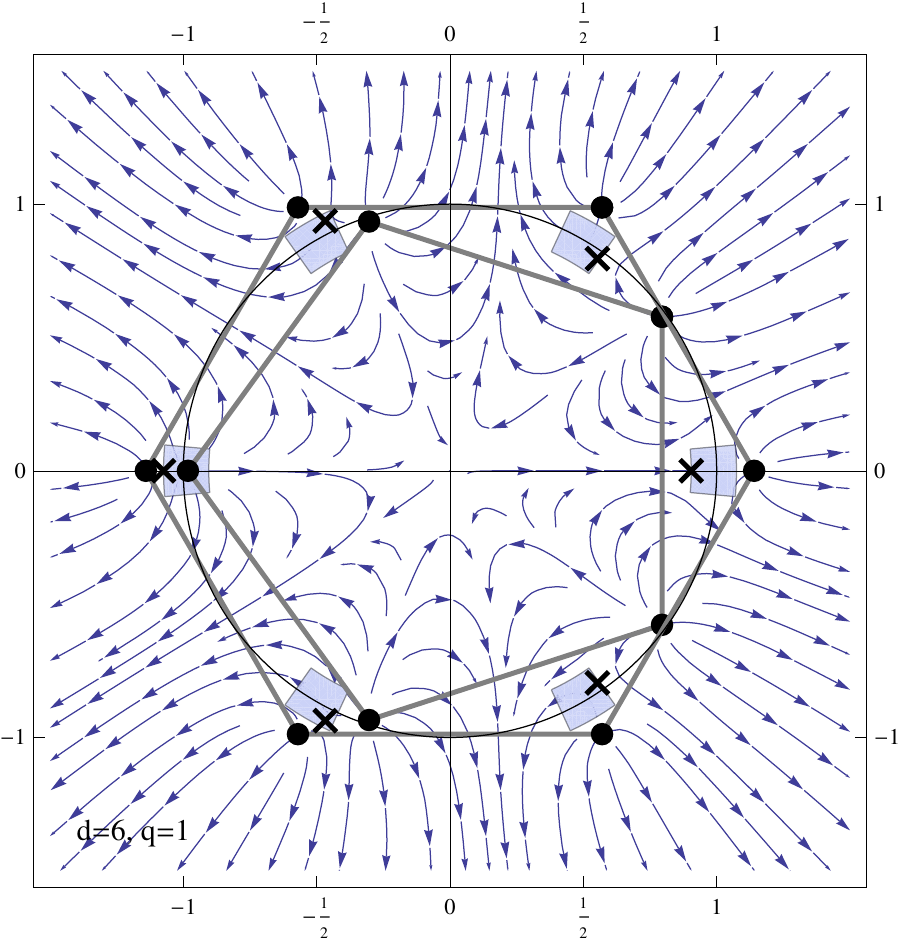}
\caption{\label{fig:zeroinclusionregions} The zeros ($\boldsymbol{\times}$) of $P_6(1; w) = 2 w^6 + w - 2$, lying in the annular sectors of \eqref{eq:annular.sectors} with $\rho_1 = t$ and $\rho_2 = s$, are equilibrium points of the force field due to the vertices of the concentric pentagon and hexagon (cf. Prop.~\ref{prop:force.field.interpretation}). The resultant of the force (stream plot) is $\sum_{k=1}^{5} (\overline{z} - \overline{\zeta_{1,k}})^{-1} + \sum_{k=1}^{6} (\overline{z} - \overline{\zeta_{2,k}})^{-1}$.}
\end{center}
\end{figure}

\pagebreak

Theorem~\ref{thm:Gonchar.B} and Theorem~\ref{thm:Gonchar.D.harmonic}, respectively, imply the following properties of the Gonchar polynomials of the second and fourth kind.

\begin{prop} \label{prop:Gonchar.B.real.zero}
If $q > 0$, then $\GoncharB( d, q; z )$ has a unique real zero in the interval $(0,1)$.
\end{prop}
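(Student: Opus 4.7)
The plan is to argue directly from the explicit polynomial form
\[
\GoncharB( d, q; z ) = \left( 1 + \frac{1}{q} \right) \left( 1 - z \right)^d - \left( 1 + z \right)
\]
recorded in \eqref{equation.interior}, bypassing the stronger Theorem~\ref{thm:Gonchar.B}. The approach is a one-variable intermediate value argument at the endpoints of $[0,1]$, sharpened by a single derivative computation to establish monotonicity.

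For existence, I would evaluate $\GoncharB( d, q; 0 ) = 1/q > 0$ and $\GoncharB( d, q; 1 ) = -2 < 0$, using $q > 0$; continuity then yields at least one real zero in the open interval $(0,1)$. For uniqueness, I would compute
\[
\frac{\partial}{\partial z} \GoncharB( d, q; z ) = - d \left( 1 + \frac{1}{q} \right) \left( 1 - z \right)^{d-1} - 1.
\]
Since $1 + 1/q > 0$ and $(1 - z)^{d-1} \geq 0$ on $[0,1]$ for every $d \geq 1$, this derivative is bounded above by $-1$ on $[0,1]$. Hence $\GoncharB(d, q; \cdot)$ is strictly decreasing on $[0,1]$, forcing the zero supplied by the IVT to be unique.

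There is no real obstacle: both the sign computation at the endpoints and the monotonicity check follow at once from $q > 0$ and the positivity of $(1-z)^{d-1}$ on $[0,1]$. As a sanity check, one could alternatively invoke Theorem~\ref{thm:Gonchar.B} specialized to $s = d - 1$, noting that the Faraday cage effect gives $U_{d-1}^{\sigma_d} \equiv 1$ on the closed unit ball, so that \eqref{eq:Gonchar.function.2nd.kind} collapses into \eqref{equation.interior}; the direct argument above is shorter and additionally yields the quantitative bound $\partial_z \GoncharB(d,q;z) \leq -1$ throughout $[0,1]$, which may be useful when locating the zero numerically.
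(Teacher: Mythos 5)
Your proof is correct and complete, but it takes a deliberately different route from the paper. The paper derives this proposition as an immediate corollary of the potential-theoretic Theorem~\ref{thm:Gonchar.B}: the existence and uniqueness of the critical distance $R_q$ in $(0,1)$ follows from the monotonicity of the right-hand side of the characteristic inequality~\eqref{eqsigned}, which is established via a hypergeometric-function representation and the differentiation formula~\cite[Eq.~15.5.1]{NIST:DLMF}, and then $R_q$ is identified as the zero of the Gonchar function $\GoncharB(d,s,q;\cdot)$; specializing to $s=d-1$, where $U_{d-1}^{\sigma_d}\equiv 1$ inside the ball, collapses the Gonchar function to the polynomial \eqref{equation.interior}. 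Your argument bypasses all of this: you work directly with the explicit trinomial $(1+1/q)(1-z)^d-(1+z)$, check the signs $\GoncharB(d,q;0)=1/q>0$ and $\GoncharB(d,q;1)=-2<0$, and show $\partial_z\GoncharB(d,q;z)=-d(1+1/q)(1-z)^{d-1}-1\le -1$ on $[0,1]$. This is shorter, fully self-contained, yields the quantitative slope bound that the paper's route does not, and avoids any reliance on the electrostatics machinery. What it does not buy you is a uniform treatment across the Riesz parameter $s$: the paper's argument establishes the analogous statement for the non-polynomial Gonchar functions $\GoncharB(d,s,q;\cdot)$ with arbitrary $0<s<d$ at the same time, whereas the direct algebraic approach is special to the harmonic case $s=d-1$ where the formula is a trinomial.
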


\begin{prop} \label{prop:Gonchar.D.real.zero}
If $q^\prime \in (-1,0)$, then $\GoncharD( d, q^\prime; z )$ has no zero in $( 0, 1 )$. 
If~$q^\prime = -1$, then $\GoncharD( d, q^\prime; z ) = z - 1$. If $q^\prime < -1$, then $\GoncharD( d, q^\prime; z )$ has a unique real zero in the interval~$(0,1)$.
\end{prop}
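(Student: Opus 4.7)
The plan is to introduce the shorthand $c := 1 + 1/q'$ and rewrite
\begin{equation*}
\GoncharD(d, q'; z) = c(1+z)^d + (z - 1).
\end{equation*}
The three cases of the proposition correspond cleanly to the three possibilities $c = 0$, $c < 0$, and $c \in (0, 1)$, which in turn are determined by $q' = -1$, $q' \in (-1, 0)$, and $q' < -1$ respectively (note that $q' \mapsto 1/q'$ carries $(-1, 0)$ onto $(-\infty, -1)$ and $(-\infty, -1)$ onto $(-1, 0)$). So the strategy is: reduce to a one-parameter family, read off which of the three regimes $c$ falls into, and dispatch each regime with a sign argument or an IVT/monotonicity argument.

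The middle case ($q' = -1$) is immediate from $c = 0$, which reduces $\GoncharD$ to $z - 1$. For $q' \in (-1, 0)$ one has $c < 0$, so both summands $c(1+z)^d$ and $z - 1$ are strictly negative throughout $(0, 1)$, ruling out any zero there. These two cases are essentially free.

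For $q' < -1$ one has $c \in (0, 1)$, and I would evaluate at the endpoints: $\GoncharD(d, q'; 0) = c - 1 < 0$ while $\GoncharD(d, q'; 1) = c \cdot 2^d > 0$, so the intermediate value theorem produces a zero in $(0, 1)$. Uniqueness follows from strict monotonicity, since $\partial_z \GoncharD(d, q'; z) = c d (1+z)^{d-1} + 1 > 0$ on $(0, 1)$ whenever $c > 0$. There is essentially no obstacle; the whole argument is sign tracking plus a one-line derivative check. Alternatively, one could simply read off the three statements from Theorem~\ref{thm:Gonchar.D.harmonic}, whose case analysis for $q \in [-1, 0)$ versus $q < -1$ already encodes exactly the zero-count claimed here, with the real zero $R_q$ in case (ii) of that theorem playing the role of the unique root in $(0,1)$.
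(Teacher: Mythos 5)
Your argument is correct, and it takes a genuinely different route from the paper. The paper does not give a self-contained computational proof; the sentence immediately before the proposition states that Theorem~\ref{thm:Gonchar.D.harmonic} (the potential-theoretic result classifying critical distances for a negative interior source in the harmonic case $s=d-1$) \emph{implies} the zero count claimed here. That is, the paper deduces the proposition as a corollary of the equilibrium-measure analysis, and in particular the identification of the critical distance $R_q$ as a zero of $\GoncharD(d,q';z)$ in $(0,1)$. Your approach instead works directly with the trinomial $\GoncharD(d,q';z) = c(1+z)^d + (z-1)$, $c := 1+1/q'$, and disposes of the three regimes of $c$ by sign inspection, the intermediate value theorem, and the observation that $c\,d\,(1+z)^{d-1} + 1 > 0$ when $c>0$. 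This is more elementary and fully self-contained, at the cost of not explaining why the root is the potential-theoretically meaningful critical distance; the paper's derivation makes that identification automatic. Both are valid, and you correctly note the paper's route as an alternative at the end.

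One small remark: your case $q'\in(-1,0)$ actually shows $\GoncharD < 0$ on all of $[0,1)$, and it vanishes at $z=1$ only when $c=0$; that is consistent with the claim (no zero in the open interval $(0,1)$), but worth keeping in mind that at $z=1$ the value is $c\cdot 2^d$, which is negative here, not zero. Also note that the same monotonicity computation shows that for $q'<-1$ the zero is not merely unique in $(0,1)$ but is the only real zero of $\GoncharD$ in $(-1,\infty)$; the proposition claims less, so this is fine.
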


From the fact that $P_d(q; \zeta) = 0$ implies 
\begin{equation*}
\left| 2 - \zeta \right| = \left| 1 + \frac{1}{q} \right| \left| \zeta \right|^d, 
\end{equation*}
where the right-hand side is changing exponentially fast as $d \to \infty$ when the zeros avoid an $\varepsilon$-neighborhood of the unit circle, we get that the zeros of $P_d(q; w)$ approach the unit circle as $d \to \infty$. This in turn implies that the zeros of $\GoncharB( d, q; z )$ (for $q > 0$) approach the circle $\Gamma_1 \DEF \{ z \in \mathbb{C} : | z - 1 | = 1 \}$ and the zeros of $\GoncharD( d, q^\prime; z )$ (for $q^\prime < 0$ and $q^\prime \neq -1$) approach the circle $\Gamma_{-1} \DEF \{ z \in \mathbb{C} : | z + 1 | = 1 \}$ as $d \to \infty$.


\section{Negatively Charged External Fields -- Signed Equilibrium on Spherical Caps}
\label{sec:negatively.charged.external field}

We are interested in the external field due to a negative charge below the South Pole,
\begin{equation} \label{eq:neg.external.field}
Q_{\PT{b},s}( \PT{x} ) = \frac{q}{| \PT{x} - \PT{b} |^{s}}, \quad \PT{x} \in \mathbb{S}^d, \qquad \PT{b} = - R \PT{p}, \quad R > 1, \quad q < 0, 
\end{equation}
that is sufficiently strong to give rise to an $s$-extremal measure on $\mathbb{S}^d$ that is {\bf not} supported on all of the sphere.
In \cite{BrDrSa2014} we outline how to derive the signed equilibrium on spherical caps centered at the South Pole and, ultimately, the s-extremal (positive) measure on $\mathbb{S}^d$ associated with the external field \eqref{eq:neg.external.field}. Here we present the details.

For the statement of the results we need to recall the following instrumental facts: We assume throughout this section that $s \geq d - 2$. \footnote{When $d = 2$, $s = d - 2$ is understood as the logarithmic case $s = \log$.}
The signed $s$-equilibrium on a spherical cap $\Sigma_t \DEF \{ \PT{x} \in \mathbb{S}^d : \PT{p} \cdot \PT{x} \leq 1 \}$ associated with $Q_{\PT{b},s}$ can be represented as the difference 
\begin{equation} \label{eq:eta} 
\eta_t = \frac{\Phi_s(t)}{W_s( \mathbb{S}^d )} \, \nu_t - q \varepsilon_t,
\end{equation}
in terms of the $s$-balayage measure\footnote{Given a measure $\nu$ and a compact set $K$ (of the sphere $\mathbb{S}^d$), the balayage measure $\hat{\nu}:=\bal_s(\nu,K)$ preserves the Riesz $s$-potential of $\nu$ onto the set $K$ and diminishes it elsewhere (on the sphere $\mathbb{S}^d$).} onto $\Sigma_t$ of the positive unit point charge at $\PT{b}$ and the uniform measure $\sigma_d$ on $\mathbb{S}^d$ given by
\begin{equation} \label{bal}
\epsilon_t = \epsilon_{t,s} \DEF \bal_s(\delta_{\PT{b}},\Sigma_t), \qquad \nu_t = \nu_{t,s} \DEF \bal_s(\sigma_d,\Sigma_t).
\end{equation}
Furthermore, the function 
\begin{equation} \label{eq:Phi}
\Phi_s(t) \DEF \frac{W_s(\mathbb{S}^d) \left( 1 + q \left\|\varepsilon_t\right\| \right)}{\left\|\nu_t\right\|}, \qquad d-2 < s < d, 
\end{equation}
where $\|\epsilon_t\| = \int_{\mathbb{S}^d} \dd \epsilon_t$ and $\|\nu_t\| = \int_{\mathbb{S}^d} \dd \nu_t$, plays an important role in the determination of the support of the $s$-extremal measure on~$\mathbb{S}^d$. In particular, one has
\begin{equation} \label{eq:weighted.potential}
U_s^{\eta_t}( \PT{x} ) + Q_{\PT{b},s}( \PT{x} ) = G_{\Sigma_t,Q_{\PT{b},s},s} = \Phi_s(t) = \mathcal{F}_s( \Sigma_t ).
\end{equation}
Here, $\mathcal{F}_s$ is the functional of \eqref{eq:F.s.functional} for the field $Q_{\PT{b},s}$.

First, we provide an extended version of \cite[Theorem~19]{BrDrSa2014} that includes asymptotic formulas for density and weighted potential valid near the boundary of the spherical cap.

\begin{prop} \label{prop:SignEq.general} Let $d-2<s<d$. The signed $s$-equilibrium $\eta_t$
on the spherical cap $\Sigma_t \subset \mathbb{S}^d$, $-1 < t < 1$, associated with $Q_{\PT{b},s}$ in \eqref{eq:neg.external.field} is
given by \eqref{eq:eta}.
It is absolutely continuous in the sense that for $\PT{x} = ( \sqrt{1-u^2} \, \overline{\PT{x}}, u) \in \Sigma_t$,
\begin{equation} \label{eq:eta.t}
\dd \eta_{t}(\PT{x}) = \eta_{t}^{\prime}(u) \frac{\omega_{d-1}}{\omega_{d}} \left( 1 - u^2 \right)^{d/2-1} \dd u \dd\sigma_{d-1}(\overline{\PT{x}}),
\end{equation}
where (with $R=|\PT{b}|$ and $r = \sqrt{R^2 + 2 R t + 1}$)
\begin{equation}
\begin{split} \label{eta.t.prime.1st.result}
\eta_{t}^{\prime}(u) &= \frac{1}{W_s(\mathbb{S}^d)} \frac{\gammafcn(d/2)}{\gammafcn(d-s/2)}
\left( \frac{1-t}{1-u} \right)^{d/2} \left( \frac{t-u}{1-t} \right)^{(s-d)/2} \\
&\phantom{=\times}\times \Bigg\{ \Phi_s (t)
\HypergeomReg{2}{1}{1,d/2}{1-(d-s)/2}{\frac{t-u}{1-u}}  \\
&\phantom{=\times\pm}-  \frac{q\left( R - 1 \right)^{d-s}}{r^{d}}
\HypergeomReg{2}{1}{1,d/2}{1-(d-s)/2}{\frac{\left(R+1\right)^{2}}{r^{2}}
\, \frac{t-u}{1-u}}  \Bigg\}.
\end{split}
\end{equation}
The density $\eta_{t}^{\prime}$ is expressed in terms of regularized Gauss hypergeometric functions. 
As $u$ approaches $t$ from below, we get
\begin{equation}
\begin{split} \label{eq:weighted.eta.density.near.t}
\eta_{t}^{\prime}(u) 
&= \frac{1}{W_s( \mathbb{S}^d )} \, \frac{\gammafcn(d/2)}{\gammafcn(d-s/2) \gammafcn( 1 - (d-s)/2 )} \left( \frac{t - u}{1 - t} \right)^{(s-d)/2} \\
&\phantom{=\pm}\times \left\{ \Phi_s( t ) - q \frac{\left( R - 1 \right)^{d-s}}{r^d} + \mathcal{R}(t ) \left( t - u \right) + \mathcal{O}( \left( t - u \right)^2 ) \right\} \quad \text{as $u \to t^-$,}
\end{split}
\end{equation}
where
\begin{equation} \label{eq:weighted.eta.density.near.t.middle.term}
\mathcal{R}( t ) = \frac{d}{2} \frac{\frac{d-s}{2}}{1 - \frac{d-s}{2}} \, \frac{\Phi_s( t ) - q \frac{\left( R - 1 \right)^{d-s}}{r^d}}{1 - t} - \frac{\frac{d}{2}}{1 - \frac{d-s}{2}} \, q \frac{\left( R - 1 \right)^{d-s}}{r^d} \, \frac{2 R}{r^2}.
\end{equation}

Furthermore, if $\PT{z} = ( \sqrt{1-\xi^2}\; \overline{\PT{z}},
\xi)\in \mathbb{S}^d$, the weighted $s$-potential is given by
\begin{align}
U_s^{\eta_t}(\PT{z})+Q_{\PT{b},s}(\PT{z}) &= \Phi_s(t), \qquad \PT{z} \in \Sigma_t, \label{eq:weighted.inside} \\
\begin{split}
U_s^{\eta_t}(\PT{z})+Q_{\PT{b},s}(\PT{z}) &= \Phi_s(t) + \frac{q}{\rho^s} \, \mathrm{I}\Big(\frac{(R-1)^2}{r^2} \frac{\xi-t}{1+\xi};
\frac{d-s}{2}, \frac{s}{2} \Big) \\
&\phantom{=\pm}- \Phi_s(t) \, \mathrm{I}\Big(\frac{\xi-t}{1+\xi};
\frac{d-s}{2}, \frac{s}{2}\Big), \qquad \PT{z} \in \mathbb{S}^d \setminus \Sigma_t, \label{eq:weighted.outside}
\end{split}
\end{align}
where $\rho=\sqrt{R^2+2R\xi+1}$ and $\mathrm{I}(x;a,b)$ is the regularized incomplete Beta function. 

As $\xi$ approaches $t$ from above, we get
\begin{equation}
\begin{split} \label{eq:weighted.near.t.plus}
U_s^{\eta_t}(\PT{z})+Q_{\PT{b},s}(\PT{z}) 
&= \Phi_s(t) + \left( \frac{\xi - t}{ 1 + t} \right)^{(d-s)/2} \Bigg\{ \frac{\gammafcn(d/2)}{\gammafcn( 1 + (d-s)/2 ) \, \gammafcn(s/2)} \\
&\phantom{=\pm}\times \left[ q \frac{\left( R - 1 \right)^{d-s}}{r^d} - \Phi_s(t) \right] + \mathcal{O}( \xi - t ) \Bigg\} \qquad \text{as $\xi \to t^+$.}
\end{split}
\end{equation}
\end{prop}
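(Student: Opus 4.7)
The plan is to start from the representation \eqref{eq:eta} and verify it really is the signed $s$-equilibrium on $\Sigma_t$, then read off the density and the potential formulas from the known expressions for the two balayage measures $\nu_t$ and $\varepsilon_t$.

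First I would verify \eqref{eq:weighted.inside}. By the defining property of $s$-balayage onto $\Sigma_t$, one has $U_s^{\nu_t}(\PT{x}) = U_s^{\sigma_d}(\PT{x}) = W_s(\mathbb{S}^d)$ and $U_s^{\varepsilon_t}(\PT{x}) = U_s^{\delta_{\PT{b}}}(\PT{x}) = |\PT{x}-\PT{b}|^{-s}$ for every $\PT{x}\in\Sigma_t$. Inserting these into \eqref{eq:eta} and combining with $Q_{\PT{b},s}$, the weighted potential collapses to the constant $\Phi_s(t)$. The specific choice of coefficient in \eqref{eq:Phi} is the unique one that enforces $\eta_t(\mathbb{S}^d)=1$, using $\|\nu_t\|$ and $\|\varepsilon_t\|$. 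Uniqueness of the signed equilibrium then identifies $\eta_t$.

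Next, to obtain \eqref{eta.t.prime.1st.result}, I would import the explicit balayage densities of $\bal_s(\sigma_d,\Sigma_t)$ and $\bal_s(\delta_{\PT{b}},\Sigma_t)$. These were derived in \cite{BrDrSa2009} for the geometrically mirrored situation (positive source above the North Pole and cap $\{\PT{x}\cdot\PT{p}\geq t\}$) by Kelvin inversion reducing the defining integral equation on the cap to an Abel-type equation on the complementary region; the present configuration is obtained verbatim via the antipodal map $\PT{x}\mapsto-\PT{x}$. Both densities involve the common factor $(1-t)^{d/2}(1-u)^{-d/2}(t-u)^{(s-d)/2}$ and a regularized Gauss hypergeometric $\HypergeomReg{2}{1}{1,d/2}{1-(d-s)/2}{\cdot}$ whose argument encodes the Euclidean distance from $\PT{x}$ to the source (for $\varepsilon_t$ the argument is $[(R+1)^2/r^2]\,(t-u)/(1-u)$, for $\nu_t$ it is $(t-u)/(1-u)$). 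Forming the combination \eqref{eq:eta} yields \eqref{eta.t.prime.1st.result}. The boundary asymptotic \eqref{eq:weighted.eta.density.near.t} with remainder \eqref{eq:weighted.eta.density.near.t.middle.term} then follows by writing out the first two Taylor coefficients of $\HypergeomReg{2}{1}{1,d/2}{1-(d-s)/2}{z}$ at $z=0$, and of the composition $r\mapsto (R-1)^{d-s}/r^d$ at the boundary value $r^2=R^2+2Rt+1$. The factor $\gammafcn(d/2)/[\gammafcn(d-s/2)\gammafcn(1-(d-s)/2)]$ is exactly the leading constant of the regularized $\HypergeomReg{2}{1}$ at the origin.

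For the off-cap potential \eqref{eq:weighted.outside}, I would use the companion formula expressing $U_s^{\bal_s(\mu,\Sigma_t)}(\PT{z})$ for $\PT{z}\in\mathbb{S}^d\setminus\Sigma_t$ through the regularized incomplete Beta function in the variable $(d_{\mathrm{source}})^{-2}\cdot(\xi-t)/(1+\xi)$, again established in \cite{BrDrSa2009}. Applied to $\mu=\sigma_d$ and $\mu=\delta_{\PT{b}}$ separately, weighted by the coefficients in \eqref{eq:eta} and combined with $Q_{\PT{b},s}(\PT{z})=q/\rho^s$, the two $\mathrm{I}(\,\cdot\,;(d-s)/2,s/2)$ pieces assemble into \eqref{eq:weighted.outside}. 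The boundary expansion \eqref{eq:weighted.near.t.plus} follows from the small-argument relation $\mathrm{I}(x;a,b) = x^a/[a\,\betafcn(a,b)] \cdot (1+\mathcal{O}(x))$ applied to both terms with $a=(d-s)/2$, $b=s/2$, extracting the common factor $((\xi-t)/(1+t))^{(d-s)/2}$ and using $r^2\to (R+1)^2(1-t)/(1+t)$-type limits to combine the prefactors.

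The main obstacle is the transcription of the balayage densities and their off-support potentials from the positive-charge/upper-cap setting of \cite{BrDrSa2009}; the antipodal symmetry makes this a clean substitution, but one must track which geometric quantities depend on the sign of $\PT{b}\cdot\PT{p}$ (only $r^2=R^2+2Rt+1$ changes, with the cross-term sign) so that the arguments of the $\HypergeomReg{2}{1}$ and $\mathrm{I}$ are correct. Once those densities are in hand, the remainder of the proof is a matter of algebraic combination and standard expansions of hypergeometric and incomplete Beta series.
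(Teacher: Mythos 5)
Your overall plan (substitute the explicit balayage densities of $\nu_t$ and $\epsilon_t$ into the representation $\eta_t = \frac{\Phi_s(t)}{W_s(\mathbb{S}^d)}\nu_t - q\varepsilon_t$, then read off the density, the off-cap potential, and the two boundary expansions by Taylor/Beta series) matches the paper's route, and your treatment of $\nu_t = \bal_s(\sigma_d,\Sigma_t)$ and of the asymptotic expansions is fine. The gap is in how you obtain $\epsilon_t = \bal_s(\delta_{\PT{b}},\Sigma_t)$.

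You assert that the present configuration is ``obtained verbatim via the antipodal map'' from the one in \cite{BrDrSa2009}. It is not. In \cite{BrDrSa2009} the point source sits at $\PT{a}=R\PT{p}$ above the North Pole and the cap $\Sigma_t=\{u\le t\}$ is around the South Pole — source and cap on \emph{opposite} sides. Here the source sits at $\PT{b}=-R\PT{p}$ below the South Pole while the cap is still $\Sigma_t=\{u\le t\}$ around the South Pole — source and cap on the \emph{same} side. The antipodal map $\PT{x}\mapsto-\PT{x}$ applied to the \cite{BrDrSa2009} data produces a source below the South Pole together with a cap around the \emph{North} Pole, which is a different object; it does not give $\bal_s(\delta_{\PT{b}},\Sigma_t)$ for the present $\Sigma_t$. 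Since balayage is extremely sensitive to the relative position of the source and the cap (the argument of the hypergeometric in the density depends on whether the source is inside the cone over the cap or not), this is not a cosmetic transcription issue: your plan, as written, would import the wrong formula.

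What the paper actually does (Lemma~\ref{lem:epsilon}) is apply a Kelvin transformation $\kelvin_R$ centered at the source $\PT{b}$ with radius $\sqrt{R^2-1}$. This inversion maps $\mathbb{S}^d$ to itself, fixes the boundary circle at $u=-1/R$, sends $\Sigma_t$ to a mirrored cap $\Sigma_t^*$ on the other side, and — crucially — turns $|\PT{x}-\PT{b}|^{-s}$ into a constant on the image. Consequently
\[
\epsilon_t = \left(R^2-1\right)^{-s/2}\kelvinMEAS_{\PT{b},s}\!\bigl(\bal_s(\sigma_d,\Sigma_t^*)/W_s(\mathbb{S}^d)\bigr),
\]
so the point-mass balayage is pulled back from the already-known uniform-measure balayage on a cap, using the transformation rules \eqref{eq:u_rel}, \eqref{eq:sigma.d.transformation}, \eqref{eq:KelPot}. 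That source-centered Kelvin transform is the missing ingredient; once you have it, the algebra in \eqref{eta.t.prime.1st.result} and the norm \eqref{eq:NormEpsB} fall out, and the rest of your argument (the balayage constancy for \eqref{eq:weighted.inside}, the incomplete-Beta form of the off-cap potentials, and the two one-sided expansions) proceeds essentially as you describe.
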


The next remarks, leading up to Proposition~\ref{prop:s.equilibrium.measure}, emphasize the special role of $\Phi_s( t )$.

\begin{rmk}
The behavior of the density $\eta_t^\prime$ near the boundary of $\Sigma_t$ \emph{inside} $\Sigma_t$ determines if $\eta_t$ is a positive measure; namely, the signed equilibrium $\eta_t$ on $\Sigma_t$ associated with $Q_{\PT{b},s}$ is a positive measure with support $\Sigma_t$ if and only if
\begin{equation} \label{eq:weighted.neccessary.density}
\Phi_s( t ) \geq q \frac{\left( R - 1 \right)^{d-s}}{\left( R^2 + 2 R t + 1 \right)^{d/2}}.
\end{equation}
Indeed, relations \eqref{eq:weighted.eta.density.near.t} and \eqref{eq:weighted.eta.density.near.t.middle.term} show that \eqref{eq:weighted.neccessary.density} is necessary and sufficient for $\eta_t^\prime( u ) > 0$ in a sufficiently small neighborhood $(t-\eps, t)$, $\eps > 0$, and this inequality extends to all of $[-1,t)$ as shown after the proof of Proposition~\ref{prop:SignEq.general}. Note that the density $\eta_t^\prime(u)$ has a singularity at $u = t$ if $\Phi_s(t) - q ( R - 1 )^{d-s} / ( R^2 + 2 R t + 1 )^{d/2} \neq 0$ and $\eta_t^\prime(u)$ approaches $0$ as $u \to t^-$ when equality holds in \eqref{eq:weighted.neccessary.density}. In that case, however, 
\begin{equation*}
\frac{\dd \eta_{t}^{\prime}}{\dd u} = \frac{\gammafcn(1+d/2)}{\gammafcn(d-s/2) \gammafcn( 1 - (d-s)/2 )} \left( \frac{t - u}{1 - t} \right)^{(s-d)/2} \left\{ \frac{\Phi_s(t)}{W_s( \mathbb{S}^d )}  + \mathcal{O}( t - u ) \right\}.
\end{equation*}
\end{rmk}

\begin{rmk}
The weighted $s$-potential of the signed equilibrium $\eta_t$ on $\Sigma_t$ associated with $Q_{\PT{b},s}$ exceeds the value $\Phi_s(t)$ assumed on $\Sigma_t$ strictly \emph{outside} of $\Sigma_t$ (but on $\mathbb{S}^d$) if and only if 
\begin{equation} \label{eq:weighted.neccessary}
\Phi_s( t ) \leq q \frac{\left( R - 1 \right)^{d-s}}{\left( R^2 + 2 R t + 1 \right)^{d/2}}.
\end{equation}
%
Indeed, the expansion \eqref{eq:weighted.near.t.plus} shows that $U_s^{\eta_t}(\PT{z})+Q_{\PT{b},s}(\PT{z}) > \Phi_s(t)$ in a small neighborhood of the boundary of $\mathbb{S}^d \setminus \Sigma_t$ if and only if \eqref{eq:weighted.neccessary} holds. In addition, if $\Phi_s(t)$ satisfies \eqref{eq:weighted.neccessary}, then 
\begin{equation} \label{eq:weighted.larger.than}
U_s^{\eta_t}(\PT{z})+Q_{\PT{b},s}(\PT{z}) > \Phi_s(t) \qquad \text{everywhere on $\mathbb{S}^d \setminus \Sigma_t$.}
\end{equation}
(This inequality is shown after the proof of Proposition~\ref{prop:SignEq.general}.) The weighted $s$-potential of $\eta_t$ tends to $\Phi_s( t )$ when the boundary of $\mathbb{S}^d \setminus \Sigma_t$ is approached from the outside ($\xi \to t^+$). There is a vertical tangent if $\Phi_s(t) - q ( R - 1 )^{d-s} / ( R^2 + 2 R t + 1 )^{d/2} \neq 0$ which turns into a horizontal one at a $t$ for which equality holds in \eqref{eq:weighted.larger.than}. In such a case
\begin{equation*}
\frac{\dd}{\dd \xi} \Big\{ U_s^{\eta_t}(\PT{z})+Q_{\PT{b},s}(\PT{z}) \Big\} = \left( \frac{\xi - t}{1 + t} \right)^{(d-s)/2} \left\{ \frac{\gammafcn( 1 + d/2 )}{\gammafcn( 1 + ( d - s ) / 2 ) \gammafcn( s/2 )} \frac{2R \, \Phi_s( t )}{R^2 + 2 R t + 1} + \mathcal{O}( \xi - t )  \right\}.
\end{equation*}
\end{rmk}

\begin{figure}[ht]
\begin{center}
\begin{minipage}{0.5\linewidth}
\centerline{\includegraphics[scale=.925]{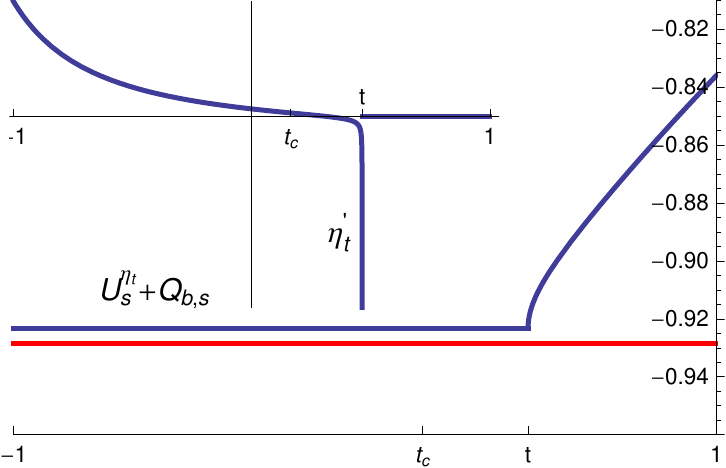}}
\end{minipage}
\begin{minipage}{0.4\linewidth}
\small
\begin{align*}
t &> t_c, \\
U_s^{\eta_t}(\PT{z}) + Q_{\PT{b},s}(\PT{z}) &\geq \mathcal{F}_s( \Sigma_t ) \quad \text{on $\mathbb{S}^d\setminus\Sigma_t$,} \\
U_s^{\eta_t}(\PT{z}) + Q_{\PT{b},s}(\PT{z}) &= \mathcal{F}_s( \Sigma_t ) \quad \text{on $\Sigma_t$,} \\
\eta_t^\prime &\ngeq 0 \quad \text{on $\Sigma_t$.} 
\end{align*}
\end{minipage}
\begin{minipage}{0.5\linewidth}
\centerline{\includegraphics[scale=.925]{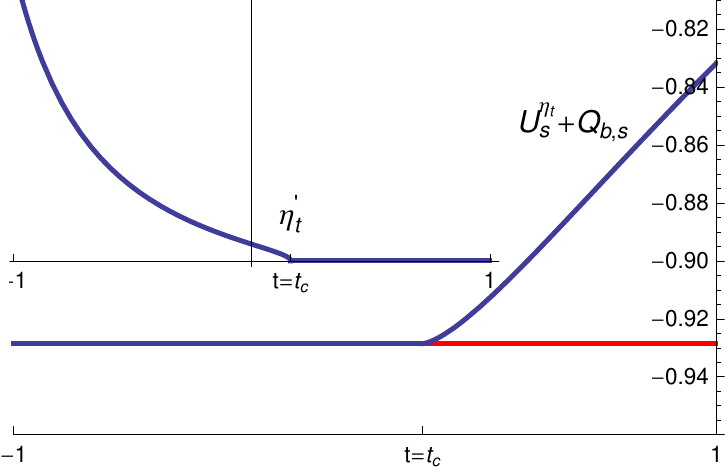}}
\end{minipage}
\begin{minipage}{0.4\linewidth}
\small
\begin{align*}
t &= t_c, \\
U_s^{\eta_t}(\PT{z}) + Q_{\PT{b},s}(\PT{z}) &\geq \mathcal{F}_s( \Sigma_t ) \quad \text{on $\mathbb{S}^d\setminus\Sigma_t$,} \\
U_s^{\eta_t}(\PT{z}) + Q_{\PT{b},s}(\PT{z}) &= \mathcal{F}_s( \Sigma_t ) \quad \text{on $\Sigma_t$,} \\
\eta_t^\prime &\geq 0 \quad \text{on $\Sigma_t$.} 
\end{align*}
\end{minipage}
\begin{minipage}{0.5\linewidth}
\centerline{\includegraphics[scale=.925]{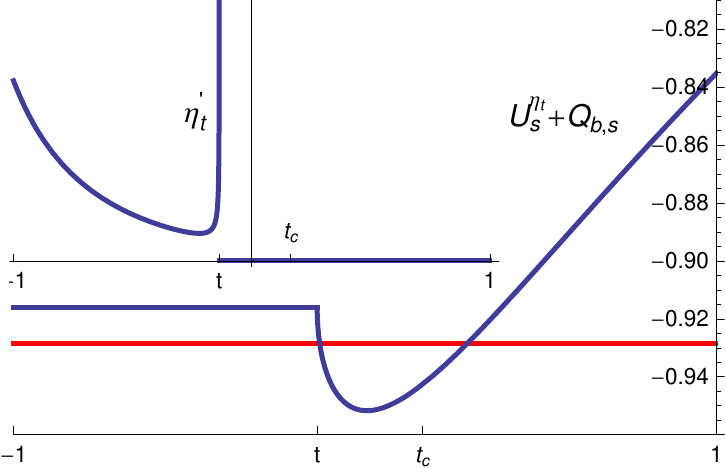}} 
\end{minipage}
\begin{minipage}{0.4\linewidth}
\small
\begin{align*}
t &< t_c, \\
U_s^{\eta_t}(\PT{z}) + Q_{\PT{b},s}(\PT{z}) &\ngeq \mathcal{F}_s( \Sigma_t ) \quad \text{on $\mathbb{S}^d\setminus\Sigma_t$,} \\
U_s^{\eta_t}(\PT{z}) + Q_{\PT{b},s}(\PT{z}) &= \mathcal{F}_s( \Sigma_t ) \quad \text{on $\Sigma_t$,} \\
\eta_t^\prime &\geq 0 \quad \text{on $\Sigma_t$.} 
\end{align*}
\end{minipage}
\caption{\label{fig5} The weighted $s$-potential of $\eta_t$ for $t>t_c$, $t=t_c$, and $t<t_c$ versus altitude $\xi$ of $\PT{z} = ( \sqrt{1-\xi^2} \, \overline{z}, \xi) \in \mathbb{S}^d$ for $d=2$, $s=1$, $q=-5$, and $R=1+\phi$ ($\phi$ the Golden ratio), cf. Propositions~\ref{prop:SignEq.general} and \ref{prop:s.equilibrium.measure}. Insets show the respective density $\eta_t^\prime$. The horizontal line (Red in colored version) indicates $\Phi_s( t_c ) = \mathcal{F}_s( \Sigma_{t_c} ) = G_{Q_{\PT{b},s},s}$. Observe the vertical tangent at the graph of $\eta_t^\prime$ as $t \to t_c^-$ in the middle display (see first remark after Proposition~\ref{prop:SignEq.general}).}
\end{center}
\end{figure}

For $\eta_t$ to coincide with the $s$-extremal measure on $\mathbb{S}^d$ associated with $Q_{\PT{b},s}$ with support $\Sigma_t$ both \eqref{eq:weighted.neccessary} and \eqref{eq:weighted.neccessary.density} have to hold. The next result is \cite[Theorem~20]{BrDrSa2014}.

\begin{prop} \label{prop:s.equilibrium.measure}
Let $d - 2 < s < d$. For the external field \eqref{eq:neg.external.field} the function $\Phi_s (t)$ given in \eqref{eq:Phi} has precisely one global minimum $t_c\in (-1,1]$. This minimum is either the unique solution $t_c\in (-1,1)$ of the equation 
\begin{equation*}
\Phi_s (t) = q \left( R - 1 \right)^{d-s} \big/ \left( R^2 + 2 R t + 1 \right)^{d/2},
\end{equation*}
or $t_c=1$ when such a solution does not exist. In addition, $\Phi_s(t)$ is greater than the right-hand side above if $t \in (-1,t_c)$ and is less than if $t \in (t_c,1)$.  Moreover, $t_c = \max \{ t : \eta_t \geq 0 \}$.
The extremal measure $\mu_{Q_{\PT{b},s}}$ on $\mathbb{S}^d$ is given by $\eta_{t_c}$ (see \eqref{eq:eta.t}), and $\supp( \mu_{Q_{\PT{b},s}} ) = \Sigma_{t_c}$. 
\end{prop}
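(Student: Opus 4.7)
The plan combines three ingredients: (a) the two remarks after Proposition~\ref{prop:SignEq.general}, which show that $\eta_t$ satisfies both Gauss variational inequalities on $\mathbb{S}^d$ precisely when $\Phi_s(t)=g(t)$, where
\[
g(t)\DEF q(R-1)^{d-s}/(R^2+2Rt+1)^{d/2};
\]
(b) the convexity argument of \cite[Thm.~10]{BrDrSa2009}, asserting that $\supp(\mu_{Q_{\PT{b},s}})$ is a spherical cap that minimizes the Mhaskar--Saff-type functional $\mathcal{F}_s(\Sigma_t)=\Phi_s(t)$; and (c) an elementary endpoint analysis of $h(t)\DEF\Phi_s(t)-g(t)$.

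First I would carry out the \emph{identification} step. If $\Phi_s(t^\ast)=g(t^\ast)$ for some $t^\ast\in(-1,1)$, the first remark gives $\eta_{t^\ast}\geq 0$ on $\Sigma_{t^\ast}$, and the second together with \eqref{eq:weighted.inside} gives $U_s^{\eta_{t^\ast}}+Q_{\PT{b},s}\geq\Phi_s(t^\ast)$ on $\mathbb{S}^d$ with equality on $\Sigma_{t^\ast}$; thus $\eta_{t^\ast}$ satisfies \eqref{geqineq}--\eqref{leqineq}, and uniqueness forces $\eta_{t^\ast}=\mu_{Q_{\PT{b},s}}$ with support $\Sigma_{t^\ast}$. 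Conversely, $\supp(\mu_{Q_{\PT{b},s}})=\Sigma_{t_0}$ by \cite[Thm.~10]{BrDrSa2009}; when $t_0<1$, uniqueness of the signed equilibrium on $\Sigma_{t_0}$ gives $\mu_{Q_{\PT{b},s}}=\eta_{t_0}$, and the two remarks then squeeze $\Phi_s(t_0)\geq g(t_0)$ and $\Phi_s(t_0)\leq g(t_0)$, hence $\Phi_s(t_0)=g(t_0)$. By uniqueness of the support, $h=0$ has at most one solution in $(-1,1)$.

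Next I would analyze the endpoint behavior of $h$. Since $\Phi_s(t)=\mathcal{F}_s(\Sigma_t)=W_s(\Sigma_t)+\int Q_{\PT{b},s}\,d\mu_{\Sigma_t}$ with $W_s(\Sigma_t)=1/\CAP_s(\Sigma_t)\to+\infty$ as $\Sigma_t$ shrinks to $\{-\PT{p}\}$, and with the weighted integral uniformly bounded (because $Q_{\PT{b},s}$ is continuous on $\Sigma_1=\mathbb{S}^d$), one has $\Phi_s(t)\to+\infty$ and hence $h(-1^+)=+\infty$. At $t=1$, $\Phi_s(1)=W_s(\mathbb{S}^d)+qU_s^{\sigma_d}(R)$ by Theorem~\ref{thm:signed.equilibrium.sphere}, and a direct rearrangement shows $h(1)<0$ is equivalent to $W_s(\mathbb{S}^d)/q>f(R)$, which by Proposition~\ref{prop}\textbf{(b)} is exactly the standing hypothesis that $\mu_{Q_{\PT{b},s}}$ is not supported on all of $\mathbb{S}^d$. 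In that case, the intermediate value theorem together with the earlier uniqueness produces a unique interior $t_c\in(-1,1)$ with $h(t_c)=0$; otherwise $h\geq 0$ on $(-1,1]$ and one sets $t_c=1$. Continuity of $h$, the endpoint values, and the presence of exactly one zero then force $h>0$ on $(-1,t_c)$ and $h<0$ on $(t_c,1)$.

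Finally, $\Phi_s=\mathcal{F}_s\circ\Sigma_{\cdot}$ together with the minimality of $\Sigma_{t_c}=\supp(\mu_{Q_{\PT{b},s}})$ for $\mathcal{F}_s$ gives that $\Phi_s$ attains its unique global minimum at $t_c$. The first remark and the sign pattern of $h$ yield $\{t:\eta_t\geq 0\}=\{t:h(t)\geq 0\}=(-1,t_c]$, so $t_c=\max\{t:\eta_t\geq 0\}$, and by the identification step $\mu_{Q_{\PT{b},s}}=\eta_{t_c}$ with $\supp(\mu_{Q_{\PT{b},s}})=\Sigma_{t_c}$. I anticipate the main technical hurdle to be the rigorous justification of $\Phi_s(t)\to+\infty$ as $t\to-1^+$: this rests on the Riesz-$s$ capacity asymptotic $\CAP_s(\Sigma_t)\to 0$ for vanishing spherical caps in the range $d-2<s<d$, which must be controlled at a rate dominating the bounded weighted integral contribution.
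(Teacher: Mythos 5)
Your proof takes a genuinely different route from the paper's. The paper derives the explicit formula
\[
\Phi_s'(t) = -\frac{\|\nu_t\|'}{\|\nu_t\|}\,\Delta(t),
\qquad
\Delta(t) \DEF \Phi_s(t) - q\,(R-1)^{d-s}\big/\left(R^2+2Rt+1\right)^{d/2},
\]
directly from the closed-form expressions \eqref{NuNormB} and \eqref{eq:NormEpsB} for the balayage norms, observes that $\Phi_s\to+\infty$ as $t\to -1^+$ is immediate from $\|\nu_t\|\to 0$ and $1+q\|\epsilon_t\|\to 1$ in \eqref{eq:Phi}, and then uses a second-derivative computation to show every interior zero of $\Phi_s'$ is a strict local minimum, so there is at most one. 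You instead work backward from the Mhaskar--Saff $\mathcal{F}_s$-machinery and the convexity result of \cite[Thm.~10]{BrDrSa2009} to identify the support as a cap, then run IVT on $h(t)=\Phi_s(t)-g(t)$; this correctly establishes the identification $\eta_{t_c}=\mu_{Q_{\PT{b},s}}$, the sign pattern of $h$, and $t_c=\max\{t:\eta_t\geq 0\}$.

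The gap lies in your final sentence. $\mathcal{F}_s$-minimality of $\Sigma_{t_c}$ only gives that $t_c$ is \emph{a} global minimizer of $\Phi_s$; it does not give uniqueness. Without something tying the sign of $\Phi_s'$ to the sign of $h$ — precisely the derivative relation above — you cannot rule out another $t'$ with $\Phi_s(t')=\Phi_s(t_c)$, nor can you conclude strict monotonicity on $(-1,1)$ in the $t_c=1$ case. Once that relation is in hand the argument closes, but then you are essentially reproducing the paper's proof; it also lets you dispense with the citation of \cite[Thm.~10]{BrDrSa2009}, which was proved for $q>0$ above the sphere and would strictly speaking require re-verification for the negative field $q<0$ below the South Pole considered here. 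Your anticipated hurdle ($\Phi_s\to+\infty$ at $-1^+$) is in fact the easy part once you use the explicit form of $\Phi_s$ rather than a capacity estimate.
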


Figure~\ref{fig5} illustrates the typical behavior of the signed equilibrium (using density and weighted potential) on spherical caps that are too large, too small, and have "just" the "right" size. The right column shows which conditions are violated when the spherical cap is too small or too large. This figure should be compared with \cite[Fig.~1]{BrDrSa2009}.

In the limiting case $s = d - 2$ with $s > 0$ it can be shown that the $s$-balayage measures
\begin{equation} \label{barBal}
\overline{\epsilon}_t \DEF \epsilon_{t,d-2} = \bal_{d-2}(\delta_{\PT{b}},\Sigma_t), \qquad  \overline{\nu}_t \DEF \nu_{t,d-2} = \bal_{d-2}(\sigma,\Sigma_t)
\end{equation}
exist and both have a component that is uniformly distributed on the boundary of $\Sigma_t$. 
Moreover, unlike the case $d-2 < s < d$, the density for $\mu_{Q_{\PT{b},s}}$, where $s = d - 2$, does not vanish on the boundary of its support.
We introduce the measure 
\begin{equation*}
\beta_t( \PT{x} ) \DEF \delta_t (u) \cdot \sigma_{d-1}(\overline{\PT{x}}), \qquad \PT{x}=(\sqrt{1-u^2} \,\overline{\PT{x}}, u) \in \mathbb{S}^d.
\end{equation*}
Using similar methods as in \cite{BrDrSa2009}, one can show the following.

\begin{thm} \label{ExcepThm} Let $d \geq 3$. The signed $s$-equilibrium $\overline{\eta}_t$ on the spherical cap $\Sigma_t$ associated with $\overline{Q}_{\PT{b},d-2}(\PT{x}) = q \, | \PT{x} - \PT{b} |^{2-d}$, $q < 0$ and $\PT{b} = ( \PT{0}, -R )$ ($R > 1$), is given by
\begin{equation*}
\overline{\eta}_t = \left[ \overline{\Phi}_{d-2}(t) / W_{d-2}(\mathbb{S}^d) \right] \overline{\nu}_t - q \overline{\epsilon}_t, \qquad \overline{\Phi}_{d-2}(t) \DEF  W_{d-2}(\mathbb{S}^d) \left( 1 + q \left\|\overline{\epsilon}_t\right\| \right) / \left\|\overline{\nu}_t\right\|,
\end{equation*}
where $\overline{\nu}_t$ and $\overline{\epsilon}_t$ are given in \eqref{barBal}. More explicitly, for $\PT{x} = ( \sqrt{1 - u^2} \, \overline{\PT{x}}, u ) \in \mathbb{S}^d$,
\begin{equation} \label{etabar}
\dd \overline{\eta}_t(\PT{x}) = \overline{\eta}_t^\prime(u) \, \dd \sigma_{d}\big|_{\Sigma_t}(\PT{x}) + \overline{q}_t \, \dd \beta_t(\PT{x}),
\end{equation}
where the density with respect to $\sigma_d$ restricted to $\Sigma_t$ takes the form
\begin{equation*}
\overline{\eta}_t^\prime(u) = \frac{\overline{\Phi}_{d-2}(t)}{W_{d-2}(\mathbb{S}^d)} - \frac{q}{W_{d-2}(\mathbb{S}^d)} \, \frac{\left( R^2 - 1 \right)^2}{\left( R^2 + 2 R u + 1 \right)^{d/2+1}}
\end{equation*}
and the boundary charge uniformly distributed over the boundary of $\Sigma_t$ is given by
\begin{equation*}
\overline{q}_t = \frac{1-t}{2} \left( 1 - t^2 \right)^{d/2-1} \left[ \overline{\Phi}_{d-2}(t) - \frac{q \left( R - 1 \right)^2}{\left( R^2 + 2 R t + 1 \right)^{d/2}} \right].
\end{equation*}

Furthermore, for any fixed $t\in(-1,1)$, the following weak$^*$ convergence holds:
\begin{equation} \label{weak.star.conv.exceptional}
\nu_{t,s}\stackrel{*}{\longrightarrow} \overline{\nu}_t , \qquad \epsilon_{t,s} \stackrel{*}{\longrightarrow}
\overline{\epsilon}_t, \qquad \text{as $s\to(d-2)^+$.}
\end{equation}

The function $\overline{\Phi}_{d-2}(t)$ has precisely one global minimum
$t_c\in (-1,1]$. This minimum is either the unique solution $t_c\in (-1,1)$ of the equation
\begin{equation*}
\overline{\Phi}_{d-2}(t) = q \left( R - 1 \right)^2 \big/ \left( R^2 + 2 R t + 1 \right)^{d/2},
\end{equation*}
or $t_c=1$ when such a solution does not exist. Moreover, $t_c=\max\{ t: \overline{\eta}_t \geq 0 \}$.

The extremal measure $\mu_{\overline{Q}_{\PT{b},d-2}}$ on $\mathbb{S}^d$ with $\supp(\mu_{\overline{Q}_{\PT{b},d-2}}) = \Sigma_{t_c}$
is given by
\begin{equation} \label{etabarzero.exceptional}
\dd \mu_{\overline{Q}_{\PT{b},d-2}}(\PT{x}) = \dd \overline{\eta}_{t_0}(\PT{x}) = \frac{\overline{\Phi}_{d-2}(t_0)}{W_{d-2}(\mathbb{S}^d)} \left[ 1 - \frac{\left( R + 1 \right)^2 \left( R^2 + 2 R t_c + 1 \right)^{d/2}}{\left( R^2 + 2 R u + 1 \right)^{d/2+1}}  \right] \dd \sigma_{d}\big|_{\Sigma_{t_0}}(\PT{x}).
\end{equation}

Furthermore, if $\PT{z} = ( \sqrt{1-\xi^2}\; \overline{\PT{z}}, \xi) \in \mathbb{S}^d$, the weighted $(d-2)$-potential is given by
\begin{align}
U_{d-2}^{\overline{\eta}_t}(\PT{z}) &+\overline{Q}_{\PT{b},d-2}(\PT{z}) = \overline{\Phi}_{d-2}(t), \quad \PT{z} \in \Sigma_t, \label{eq:weighted.inside.d-2} \\
\begin{split} \label{eq:weighted.outside.exceptional}
U_{d-2}^{\overline{\eta}_t}(\PT{z}) &+\overline{Q}_{\PT{b},d-2}(\PT{z}) = \overline{\Phi}_{d-2}(t) + \frac{q}{\rho^{d-2}} \left[ 1 - \left( 1 - \frac{\xi - t}{1 + \xi} \right)^{d/2-1} \right] \\
&\phantom{}- \overline{\Phi}_{d-2}(t) \, \left[ 1 - \left( 1 - \frac{( R - 1 )^2}{R^2 + 2 R t + 1} \, \frac{\xi - t}{1 + \xi} \right)^{d/2-1} \right], \quad \PT{z} \in \mathbb{S}^d \setminus \Sigma_t,
\end{split}
\end{align}
where $\rho=\sqrt{R^2+2R\xi+1}$. As $\xi$ approaches $t$ from above, we get 
\begin{equation} \label{eq:weighted.near.t.plus.exceptional}
U_{d-2}^{\overline{\eta}_t}(\PT{z})+\overline{Q}_{\PT{b},d-2}(\PT{z}) = \overline{\Phi}_{d-2}(t) + \frac{\xi - t}{ 1 + t} \Bigg\{ \frac{d-2}{2} \left[ q \frac{\left( R - 1 \right)^{2}}{r^d} - \overline{\Phi}_{d-2}(t) \right] + \mathcal{O}( \xi - t ) \Bigg\}.
\end{equation}
\end{thm}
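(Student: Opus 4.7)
The plan is to pass to the limit $s \to (d-2)^+$ in the formulas of Proposition~\ref{prop:SignEq.general} and then invoke the uniqueness of signed equilibrium (cf.~\cite[Lemma~23]{BrDrSa2009}) to identify the limiting object with $\overline{\eta}_t$. First I would construct $\overline{\nu}_t$ and $\overline{\epsilon}_t$ as the weak-$*$ limits of $\nu_{t,s}$ and $\epsilon_{t,s}$ as $s \to (d-2)^+$. The critical observation is that the absolutely continuous balayage densities for $d-2 < s < d$ behave like $(t-u)^{(s-d)/2}$ near the boundary of $\Sigma_t$, and at $s=d-2$ this becomes $(t-u)^{-1}$, which fails to be integrable; consequently the mass accumulating near $\partial\Sigma_t$ cannot be absorbed into an absolutely continuous density and, by rotational symmetry about the polar axis, is deposited uniformly on $\partial\Sigma_t$, producing the boundary component $\overline{q}_t\,\dd\beta_t$.

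For the bulk density in \eqref{etabar} I would invoke the closed-form identity
\begin{equation*}
\HypergeomReg{2}{1}{1,d/2}{0}{z} = \frac{d}{2}\, z\,(1-z)^{-d/2-1},
\end{equation*}
which is valid because the regularized hypergeometric series starts at $n=1$ when the lower parameter vanishes. Substituting this into \eqref{eta.t.prime.1st.result}, together with $\gammafcn(d/2)/\gammafcn(d/2+1) = 2/d$ and the elementary algebraic identity $r^2(1-u) - (R+1)^2(t-u) = (1-t)(R^2+2Ru+1)$, collapses the bracketed expression to $\overline{\Phi}_{d-2}(t) - q(R^2-1)^2/(R^2+2Ru+1)^{d/2+1}$ divided by $W_{d-2}(\mathbb{S}^d)$, matching the stated formula for $\overline{\eta}_t^{\prime}$. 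The boundary coefficient $\overline{q}_t$ is then pinned down by the requirement that $\overline{\eta}_t(\mathbb{S}^d) = 1$.

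The weighted potential formulas \eqref{eq:weighted.inside.d-2} and \eqref{eq:weighted.outside.exceptional} are verified by direct integration; at $s = d-2$ the incomplete beta integrals $\betafcn(\cdot;(d-s)/2,s/2)$ reduce to elementary functions (one parameter equals $1$), which eliminates the hypergeometric functions present in the general case and yields the stated closed forms. The expansion \eqref{eq:weighted.near.t.plus.exceptional} then follows from a one-term Taylor expansion of \eqref{eq:weighted.outside.exceptional} at $\xi = t$. The analysis of $\overline{\Phi}_{d-2}(t)$ and the identification of $t_c$ and the extremal measure $\mu_{\overline{Q}_{\PT{b},d-2}} = \overline{\eta}_{t_c}$ proceed in complete analogy with Proposition~\ref{prop:s.equilibrium.measure}, and formula~\eqref{etabarzero.exceptional} arises by substituting $\overline{\Phi}_{d-2}(t_c) = q(R-1)^2/(R^2+2Rt_c+1)^{d/2}$ into the density.

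The main obstacle is the rigorous emergence of the boundary measure in the limit: one must show that $\gammafcn(1-(d-s)/2)^{-1}(t-u)^{(s-d)/2}$, viewed as a distribution in $u$, converges as $s \to (d-2)^+$ to a Dirac delta at $u=t$ (which follows from $\gammafcn(\epsilon)\sim 1/\epsilon$ combined with $\int_0^{\delta} v^{-1+\epsilon}\phi(v)\,\dd v \sim \phi(0)/\epsilon$ as $\epsilon\to 0^+$), and then pair this against the finite coefficient extracted from \eqref{eta.t.prime.1st.result} to obtain exactly $(1-t)(1-t^2)^{d/2-1}/2$ times the expected bracket. The uniformity in the tangential directions is automatic from $SO(d)$-symmetry about the polar axis, but the precise normalization requires careful matching of the vanishing and diverging Gamma-function prefactors.
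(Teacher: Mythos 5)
Your approach --- pass to the limit $s\to(d-2)^+$ in Proposition~\ref{prop:SignEq.general} and read off the bulk density, boundary concentration, and weighted potentials --- is sound, and it is the natural realization of the paper's terse remark that the result follows ``using similar methods as in \cite{BrDrSa2009}''; the paper does not spell out a proof. Your key computational ingredients check out: the closed form $\HypergeomReg{2}{1}{1,d/2}{0}{z}=\tfrac{d}{2}z(1-z)^{-d/2-1}$ is correct (the $n=0$ term is annihilated by $1/\Gamma(0)$); the algebraic identity $r^2(1-u)-(R+1)^2(t-u)=(1-t)(R^2+2Ru+1)$ is correct; and carrying these through \eqref{eta.t.prime.1st.result} at $s=d-2$, with $\gammafcn(d/2)/\gammafcn(d-s/2)=2/d$, does indeed collapse the bulk density to the stated $\overline{\eta}^\prime_t$. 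Your argument for the boundary term is also correct: the normalization $1/\gammafcn(1-(d-s)/2)\sim(s-(d-2))/2$ pairs with the non-integrable singularity $(t-u)^{(s-d)/2}\to(t-u)^{-1}$ to produce a Dirac delta, and the $SO(d)$-symmetry distributes it uniformly over $\partial\Sigma_t$; if you track the constant (using the duplication formula to verify $W_{d-2}(\mathbb{S}^d)=4\omega_{d-1}/(d\,\omega_d)$) you recover exactly $\overline{q}_t$. So your sketch is more explicit than what the paper offers.

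One concrete caveat: your claim that substituting $s=d-2$ into \eqref{eq:weighted.outside} ``yields the stated closed forms'' does not quite hold. Setting $s=d-2$ in \eqref{eq:weighted.outside} gives $\mathrm{I}(x;1,d/2-1)=1-(1-x)^{d/2-1}$, with the $q/\rho^{d-2}$ term carrying the argument $\frac{(R-1)^2}{r^2}\frac{\xi-t}{1+\xi}$ and the $\overline{\Phi}_{d-2}(t)$ term carrying $\frac{\xi-t}{1+\xi}$ --- that is, the two arguments are \emph{swapped} relative to the printed \eqref{eq:weighted.outside.exceptional}. Expanding either version near $\xi=t^+$ confirms that only the swapped (i.e.\ your) version reproduces the printed expansion \eqref{eq:weighted.near.t.plus.exceptional}: one gets $\tfrac{d-2}{2}\frac{\xi-t}{1+t}\bigl[q(R-1)^2/r^d-\overline{\Phi}_{d-2}(t)\bigr]$, whereas the printed \eqref{eq:weighted.outside.exceptional} would instead give $\tfrac{d-2}{2}\frac{\xi-t}{1+t}\bigl[q/r^{d-2}-\overline{\Phi}_{d-2}(t)(R-1)^2/r^2\bigr]$. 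So your method is correct and would actually surface a transcription error in the stated \eqref{eq:weighted.outside.exceptional}; you should not assert agreement with that display without re-deriving it. Apart from this, the remaining steps (the analysis of $\overline{\Phi}_{d-2}$, the identification of $t_c$, and the substitution giving \eqref{etabarzero.exceptional} with $\overline{q}_{t_c}=0$) go through exactly as you describe, mirroring Proposition~\ref{prop:s.equilibrium.measure}.
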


A similar result holds for the logarithmic case $s = \log$ on $\mathbb{S}^2$.

\section{Proofs}
\label{sec:proofs}

\subsection{Proofs and Discussions for Section~\ref{sec:Gonchar.s.Question}}

First, we show the result for the signed $s$-equilibrium on $\mathbb{S}^d$. 

\begin{proof}[Proof of Theorem~\ref{thm:signed.equilibrium.sphere}]
For $R > 1$ this result has been proven in \cite{BrDrSa2009} (cf. \cite{BrDrSa2012} for the harmonic case). Let $0 \leq R < 1$. 
By linearity of the $s$-potential, we can write
\begin{equation*}
U_s^{\eta_{Q}}(\PT{x}) = \left[ 1 + \frac{q U_s^{\sigma_d} (\PT{a})}{W_s(\mathbb{S}^d)} \right] U_s^{\sigma_d}(\PT{x}) - \frac{q}{W_s(\mathbb{S}^d)} \int \frac{\left| \PT{y} - \PT{a} \right|^s}{\left( 1 - R^2 \right)^s \left| \PT{y} - \PT{x}\right|^{s}} \,  \frac{\left( 1 - R^2 \right)^d}{\left|\PT{y}-\PT{a}\right|^{d}} \frac{\dd \sigma_d( \PT{y} )}{\left| \PT{y} - \PT{a} \right|^d}.
\end{equation*}
Using Imaginary inversion (i.e., utilizing $| \PT{y} - \PT{a} | \, | \PT{y}^* - \PT{a} | = 1 - R^2$ and $\dd \sigma_d( \PT{y} ) / | \PT{y} - \PT{a} |^d = \dd \sigma_d( \PT{y}^* ) / | \PT{y}^* - \PT{a} |^d$), the integral reduces to $W_s(\mathbb{S}^d) / | \PT{x} - \PT{a} |^s$. Since $U_s^{\sigma_d}(\PT{x}) = W_s(\mathbb{S}^d)$ on $\mathbb{S}^d$, one gets that the weighted potential of $\eta_{Q}$ is constant,
\begin{equation*}
U_s^{\eta_{Q}}(\PT{x}) + \frac{q}{\left| \PT{x} - \PT{a} \right|^s} = W_s(\mathbb{S}^d) + q \, U_s^{\sigma_d} (\PT{a}) \qquad \text{everywehre on $\mathbb{S}^d$.}
\end{equation*}
This shows \eqref{eq:G.sphere.Q.s}. In a similar way,
\begin{equation*}
\int \dd \eta_{Q} = 1 + \frac{q U_s^{\sigma_d} (\PT{a})}{W_s(\mathbb{S}^d)} - \frac{q}{W_s(\mathbb{S}^d)} \int \frac{\left| \PT{y} - \PT{a} \right|^s}{\left( 1 - R^2 \right)^s} \,  \frac{\left( 1 - R^2 \right)^d}{\left|\PT{y}-\PT{a}\right|^{d}} \frac{\dd \sigma_d( \PT{y} )}{\left| \PT{y} - \PT{a} \right|^d} = 1,
\end{equation*}
where the integral reduces to $U_s^{\sigma_d}( a )$ under Imaginary inversion. Thus $\eta_{Q}$, indeed, satisfies Definition~\ref{def:signed.equilibrium} for $A = \mathbb{S}^d$. Furthermore, the signed measure $\eta_{Q}$ is absolutely continuous with respect to $\sigma_d$. The result follows.
\end{proof}

Next, we provide the technical details for the discussion in the two remarks after Proposition~\ref{prop}.

\begin{proof}[First remark after Proposition~\ref{prop}] 
For $R > 1$, we can use term-wise differentiation in the following series expansion of the right-hand side of \eqref{eqsigned} to show monotonicity,
\begin{equation*}
\sum_{k=0}^\infty \left[ 1 - \frac{\Pochhsymb{s/2}{k}}{\Pochhsymb{d}{k}} \right] \frac{\Pochhsymb{d/2}{k}}{k!} \frac{\left( 4 R \right)^k}{\left(R+1\right)^{s+2k}}.
\end{equation*}
For $R \in (0,1)$, we show that the following representation of the right-hand side of \eqref{eqsigned}
\begin{equation*}
\frac{\left(1+R\right)^{d-s}}{\left(1-R\right)^d} \left[ 1 - \Hypergeom{2}{1}{d-s/2,d/2}{d}{-\frac{4R}{(1-R)^2}} \right],
\end{equation*}
obtained by applying the linear transformation \cite[last of Eq.s~15.8.1]{NIST:DLMF} to~\eqref{eq:s.potential}, is strictly monotonically increasing on $(0,1)$ for each $s \in (0,d)$. It is easy to see that the ratio $(1+R)^{d-s} / (1-R)^d$ has this property for each $s \in (0,d)$ and using the differentiation formula~\cite[Eq.~15.5.1]{NIST:DLMF}, the square-bracketed expression above has a positive derivative on $(0,1)$ for each $s \in (0,d)$. The result follows.
\end{proof}

\begin{proof}[Second remark after Proposition~\ref{prop}]
The continuity of $f$ is evident by continuity of the $s$-potential $U_s^{\sigma_d}(\PT{a})$, which is a radial function depending on $R = | \PT{a} |$, in the potential-theoretical case $0 < s < d$. Note that, since $\sigma_d$ is the $s$-equilibrium measure on $\mathbb{S}^d$,
\begin{equation*}
f( 1 ) = - U_s^{\sigma_d}( 1 ) = - W_s(\mathbb{S}^d) < 0.
\end{equation*}
The negativity of $f$ in $( 0, 1 ) \cup ( 1, \infty )$ follows from
\begin{equation*}
U_s^{\sigma_d}( R ) = \frac{\big[ ( R - 1 )^2 \big]^{(d-s)/2}}{\left( R + 1 \right)^d} \Hypergeom{2}{1}{d-s/2,d/2}{d}{\frac{4R}{( R + 1 )^2}} > \frac{\left| R - 1 \right|^{d-s}}{\left( R + 1 \right)^d},
\end{equation*}
where the right-hand side is derived from \eqref{eq:s.potential} using the last transformation in \cite[Eq.~15.8.1]{NIST:DLMF}.
Clearly, $f(R) \geq - U_s^{\sigma_d}( R )$. From \eqref{eq:s.potential} it follows that~$U_s^{\sigma_d}( R )$ is strictly monotonically decreasing on $( 1, \infty )$. 
Since $\sigma_d$ is the $s$-equilibrium measure on $\mathbb{S}^d$, 
\begin{equation*}
f(R) \geq - U_s^{\sigma_d}( 1 ) = - W_s(\mathbb{S}^d) = f( 1 ) \qquad \text{on $[1, \infty)$.}
\end{equation*}
Since the continuous function $f$ is bounded on the compact interval $[0,1]$, $f$ is bounded from below on $[0, \infty)$.  
Verification of monotonicity of~$f$ on $(1,\infty)$ by direct calculation of~$f^\prime$ seems to be futile, but by using \eqref{eq:s.potential} and the differentiation formula \cite[Eq.~15.5.1]{NIST:DLMF}, we get
\begin{align*}
&s \left( R + 1 \right)^{s-1} f(R) + \left( R + 1 \right)^s f^\prime(R) = \left\{ \left( R + 1 \right)^s f(R) \right\}^{\prime} \\
&\phantom{equ}= \left( d - s \right) \left( \frac{R-1}{R+1} \right)^{d-1-s} \frac{2}{\left(R + 1 \right)^2} + s \frac{R-1}{(R+1)^3} \Hypergeom{2}{1}{1+s/2,1+d/2}{1+d}{\frac{4R}{(R+1)^2}} > 0.
\end{align*}
Since it has already been established that $f(R)<0$, it follows that ${f^\prime(R) > 0}$ on $(1, \infty)$.
It is easy to see that $f(R) \to 0$ as $R \to \infty$, thus $f$ has a horizontal asymptote at level~$0$. 
\end{proof}

Next, we prove results for Gonchar's problem for {\bf negative external fields}. The following proof concerns in particular Theorem~\ref{thm:Gonchar.D.superharmonic}.

\begin{proof}
We consider interior sources, that is $0 \leq R < 1$. The right-hand side in \eqref{neg.eqsigned} is
\begin{equation*}
f(R) \DEF g(R) - U_s^{\sigma_d}( R ), \qquad \text{where} \quad g(R) \DEF \left( 1 - R \right)^{d-s} \left( 1 + R \right)^{-d}.
\end{equation*}
We show that the equation $f^\prime(R)=0$, or equivalently, $g^\prime(R) / R = \{ U_s^{\sigma_d}(R) \}^\prime / R$ has only one solution in the interval $(0,1)$ in the superharmonic regime $0< s < d - 1$. Expanding the last equation using formula~\eqref{eq:s.potential.B} for $U_s^{\sigma_d}(R)$, we get
\begin{equation*}
\left[ 2d - s \left( 1 + R \right) \right] \frac{( 1 - R )^{d-s-1}}{( 1 + R )^{d+1}} \frac{1}{R} = \frac{s ( d - 1 - s )}{d + 1} \Hypergeom{2}{1}{1-(d-1-s)/2,1+s/2}{1+(d+1)/2}{R^2}.
\end{equation*}
The function $h(R) \DEF - g^\prime(R) / R$ at the left-hand side satisfies $h(R) \to +\infty$ as $R \to 0$ and $h(1) = 0$. Since $\{ R ( 1 + R )^{d+1} h(R) \}^\prime = \{ \left[ 2d - s \left( 1 + R \right) \right] ( 1 - R )^{d-s-1} \}^\prime < 0$,
it follows that $h^\prime(R) < 0$. The right-hand side of the equation assumes the positive value $s ( d - 1 - s ) / ( d + 1 )$ at $R = 0$ and is strictly decreasing if $0 < s < d - 3$, constant if $s = d - 3$, or strictly increasing if $d - 3 < s < d - 1$. In either case for $0< s < d - 1$ there is exactly one solution in the interval $(0,1)$. That is, $f(R)$ has a single minimum in the interval $(0,1)$ because $f^\prime(0) = s - 2d < 0$ and (\cite[Eq.~15.4.20]{NIST:DLMF})
\begin{align*}
f^\prime(1^-) 
&= \frac{s ( d - 1 - s )}{d + 1} \Hypergeom{2}{1}{1-(d-1-s)/2,1+s/2}{1+(d+1)/2}{1} \\
&= \frac{s ( d - 1 - s )}{d + 1} \frac{\gammafcn(1+(d+1)/2)\gammafcn(d-1-s)}{\gammafcn(d-s/2) \gammafcn((d+1-s)/2)} \\
&= \frac{s}{2} \, \frac{\gammafcn((d+1)/2) \gammafcn(d-s)}{\gammafcn(d-s/2) \gammafcn((d+1-s)/2)} \\
&= \frac{s}{2} \, W_s(\mathbb{S}^d) > 0, \qquad 0 < s < d - 1.
\end{align*}
The last step follows by applying the duplication formula for the gamma function and \eqref{eq:W.s.S.d}. In particular, the minimum from above is strictly less than $- f(1^-) = - W_s(\mathbb{S}^d)$. 

The function $g(R)$ is strictly monotonically decreasing on $(0,1)$ for all $0 < s < d$. From
\begin{equation*}
g^{\prime\prime}(R) = \frac{(1 - R)^{d-s-2}}{(R + 1)^{d+2}} \left\{ \left[ 2 ( d + 1 ) - ( s + 2 ) ( R + 1 ) \right] \left[ 2 d - s ( R + 1) \right] + s ( 1 - R^2 ) \right\}
\end{equation*}
one infers that $g(R)$ changes from convex to concave as $R \to 1$ and $g^{\prime\prime}(R) \to - \infty$ as $R \to 1^-$ if $d - 1 < s < d$. Since the $s$-potential $U_s^{\sigma_d}(R)$ is a strictly increasing and convex function on $(0,1)$ in the subharmonic range $d - 1 < s < d$, 
it follows that $f(R)$ is strictly decreasing on $(0,1)$ and neither convex nor concave on all of $(0,1)$ for $d - 1 < s < d$. Moreover, $f^{\prime\prime}(0)>0$ and $f^{\prime\prime}(R) \to - \infty$ as $R \to 1^-$ for $d - 1 < s < d$.
\end{proof}

\subsection{Proofs and Discussions for Section~\ref{sec:negatively.charged.external field}}
\label{subsec:proofs.discussions.neg.external.field}

In the following we make use of a Kelvin transformation (spherical inversion) of points and measures that maps $\mathbb{S}^d$ to $\mathbb{S}^d$. 
Let $\kelvin_R$ denote the Kelvin transformation (stereographic projection) with center $\PT{b}=(\PT{0},-R)$ and radius $\sqrt{R^2-1}$; that is, for any point $\PT{x}\in \mathbb{R}^{d+1}$ the image $\PT{x}^*\DEF\kelvin_R(\PT{x})$ lies on a ray stemming from $\PT{b}$, and passing through $\PT{x}$ such that 
\begin{equation} \label{eq:KelTr}
\left| \PT{x} - \PT{b} \right| \cdot \left| \PT{x}^* - \PT{b} \right| = R^2 - 1.
\end{equation}
\begin{figure}[htb]
\includegraphics[scale=.075]{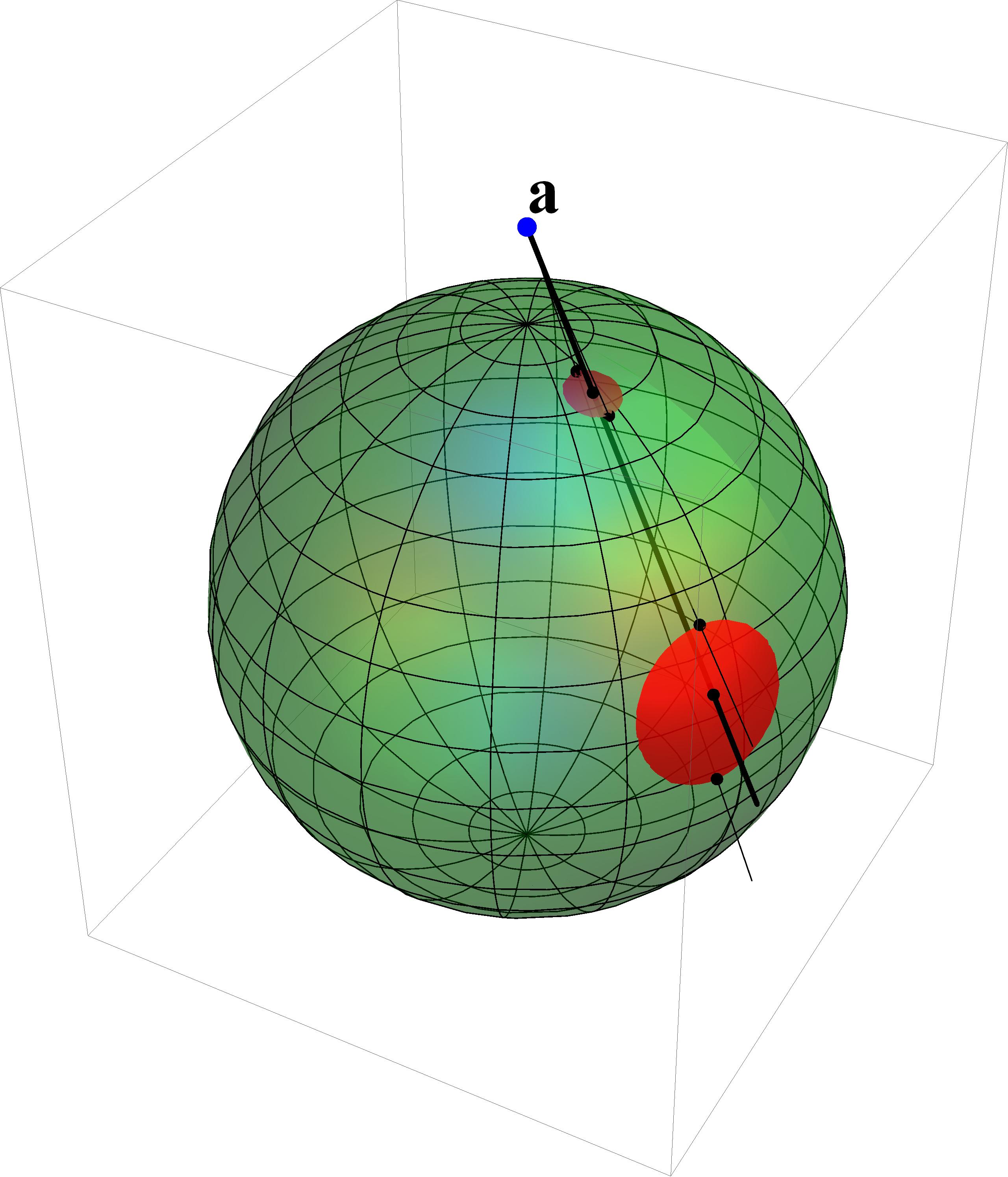}
\caption{\label{fig:inversion} Kelvin transform with center at $\PT{a}$ and radius $\sqrt{R^2 - 1}$.}
\end{figure}
The image of $\PT{x} = ( \sqrt{1-u^2} \, \overline{\PT{x}}, u ) \in \mathbb{S}^d$ is again a point $\PT{x}^* = ( \sqrt{1-({u^*})^2} \,\overline{\PT{x}}, u^* ) \in \mathbb{S}^d$, where the formulas
\begin{equation} \label{eq:u_rel}
1 + u^* = \frac{\left( R - 1 \right)^2}{R^2 + 2 R u + 1} \left( 1 - u \right), \qquad 1 - u^* = \frac{\left( R + 1 \right)^2}{R^2 + 2 R u + 1} \left( 1 + u \right),
\end{equation}
relating the heights $u$ and $u^*$, follow from similar triangle proportions. From this and the formula $| \PT{x} - \PT{y} |^2 = 2 - 2 \, \PT{x} \cdot \PT{y}$ for $\PT{x}, \PT{y} \in \mathbb{S}^d$ it follows that the Euclidean distance of two points on the sphere transforms like
\begin{equation} \label{eq:KelTrDist}
\left| \PT{x}^{*} - \PT{y}^{*} \right| = \left( R^{2} - 1 \right)\frac{\left| \PT{x} - \PT{y} \right|}{\left| \PT{x} - \PT{a} \right| \left| \PT{y} - \PT{a} \right|}, \qquad \PT{x},\PT{y} \in \mathbb{S}^{d}.
\end{equation}
Geometric properties include that the Kelvin transformation maps the North Pole $\PT{p}$ to the South Pole $\PT{q}$ and vice versa, $\kelvin_R (\mathbb{S}^d) =\mathbb{S}^d$, and $\kelvin_R$ sends the spherical cap $A_R \DEF \{ (\sqrt{1-u^2} \, \overline{\PT{x}}, u ) : -1/R \leq u \leq 1, \overline{\PT{x}} \in \mathbb{S}^{d-1} \}$ to $B_R \DEF \{ ( \sqrt{1-u^2} \, \overline{\PT{x}}, u ) : -1 \leq u \leq -1/R, \overline{\PT{x}} \in \mathbb{S}^{d-1} \}$ and vice versa, with the points on the boundary being fixed. 

We note that the uniform measure $\sigma_d$ on $\mathbb{S}^d$ transforms like
\begin{equation} \label{eq:sigma.d.transformation}
\left| \PT{x}^* - \PT{b} \right|^{-d} \dd \sigma_d( \PT{x}^* ) = \left| \PT{x} - \PT{b} \right|^{-d} \dd \sigma_d( \PT{x} ).
\end{equation} 
Further, given a measure $\lambda$ with no point mass at $\PT{b}$, its Kelvin transformation (associated with a fixed $s$)
$\lambda^*=\kelvinMEAS_{\PT{b},s}(\lambda)$ is a measure defined by
\begin{equation} \label{eq:KelMeas}
\dd\lambda^* (\PT{x}^*) \DEF \left(R^2-1\right)^{s/2} \left|\PT{x}-\PT{b}\right|^{-s} \dd\lambda(\PT{x}),
\end{equation}
where the $s$-potentials of the two measures are related as follows (e.g. \cite[Eq. (5.1)]{DrSa2007})
\begin{equation} \label{eq:KelPot}
U_s^{\lambda^*}(\PT{x}^*) = \int \frac{\dd\lambda^*(\PT{y}^*)}{\left|\PT{x}^*-\PT{y}^*\right|^s} = \int \frac{\left|\PT{x}-\PT{b}\right|^s \dd\lambda(\PT{y})}{\left(R^2-1\right)^{s/2} \left|\PT{x}-\PT{y} \right|^s} = \frac{\left|\PT{x}-\PT{b}\right|^s} {\left(R^2-1\right)^{s/2}} \, U_s^\lambda(\PT{x}).
\end{equation}
(The Kelvin transformation has the duality property $\kelvinMEAS_{\PT{b},s} (\lambda^* (\PT{x}^*))=\lambda(\PT{x})$.)

For convenience, we recall the specific form of the $s$-balayage onto $\Sigma_t$ of the uniform measure (see \cite[Lemma~24]{BrDrSa2009}) and its norm (see \cite[Lemma~30]{BrDrSa2009}).
\begin{prop} \label{prop:nu}
Let $d-2 < s < d$. The measure $\nu_t = \bal_s( \sigma_d, \Sigma_t )$ is given by
\begin{equation} \label{eq:NuR}
\dd\nu_{t}(\PT{x}) = \nu_{t}^{\prime}(u) \frac{\omega_{d-1}}{\omega_{d}} \left( 1 - u^{2} \right)^{d/2-1} \dd u \dd \sigma_{d-1}(\overline{\PT{x}}), \qquad \PT{x} \in \Sigma_t,
\end{equation}
where the density $\nu_{t}^{\prime}(u)$ is given by
\begin{equation} \label{eq:nu.dens}
\nu_{t}^{\prime}(u) \DEF \frac{\gammafcn(d/2)}{\gammafcn(d-s/2)} \left( \frac{1-t}{1-u} \right)^{d/2} \left( \frac{t-u}{1-t} \right)^{(s-d)/2} \HypergeomReg{2}{1}{1,d/2}{1-(d-s)/2}{\frac{t-u}{1-u}}.
\end{equation}

Furthermore,
\begin{align}
\left\|\nu_{t}\right\|
&= \frac{2^{1-d} \gammafcn(d)}{\gammafcn(d-s/2)\gammafcn(s/2)} \int_{-1}^t \left( 1 + u \right)^{s/2-1} \left( 1 - u \right)^{d-s/2-1} \dd u  \label{NuNormB} \\
&= 1 - \mathrm{I}\left((1-t)/2;d-s/2,s/2\right). 
\label{NuNormA} 
\end{align}
\end{prop}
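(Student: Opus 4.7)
Both assertions are proved in our earlier paper \cite{BrDrSa2009} (Lemmas 24 and 30 there); here is the route I would follow. The argument splits naturally into three stages.

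\emph{Stage 1: Rotation-symmetric reduction.} Because $\sigma_d$ is the $s$-equilibrium measure on $\mathbb{S}^d$, its $s$-potential is the constant $W_s(\mathbb{S}^d)$ on the whole sphere. By the characterizing property of balayage, $\nu_t=\bal_s(\sigma_d,\Sigma_t)$ is the unique positive measure supported on $\Sigma_t$ whose $s$-potential equals $W_s(\mathbb{S}^d)$ quasi-everywhere on $\Sigma_t$. Both $\Sigma_t$ and the Riesz kernel are invariant under rotations preserving the polar axis, so by uniqueness $\nu_t$ inherits this symmetry and therefore admits the representation \eqref{eq:NuR} with a density $\nu_t^\prime$ depending only on the height $u$.

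\emph{Stage 2: Identification of the density.} The efficient route is via the Kelvin (spherical) inversion $\kelvin_R$ recorded in \eqref{eq:KelTr}--\eqref{eq:KelPot}. For a judicious choice of center depending on $t$, $\kelvin_R$ turns $\Sigma_t$ into the hemispherical cap complementary to a simple reference cap, and the transformation formulas \eqref{eq:sigma.d.transformation} and \eqref{eq:KelPot} convert the balayage problem for $\sigma_d$ onto $\Sigma_t$ into a problem for (a constant multiple of) the Kelvin image measure onto the complementary cap. By linearity this further reduces to computing the $s$-balayage of a single Dirac mass onto a spherical cap, whose density admits a classical closed form (essentially an Abel-type inversion of the Riesz kernel). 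Converting that closed form back under $\kelvin_R$ produces the hypergeometric expression \eqref{eq:nu.dens}. One then verifies as a consistency check that the resulting measure has constant $s$-potential on $\Sigma_t$; the constancy follows from a Chu--Vandermonde/Gauss summation of the relevant Gauss hypergeometric function at unit argument.

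\emph{Stage 3: The two forms of $\|\nu_t\|$.} From \eqref{eq:NuR},
\begin{equation*}
\|\nu_t\|=\frac{\omega_{d-1}}{\omega_{d}}\int_{-1}^{t}\nu_t^\prime(u)\left(1-u^2\right)^{d/2-1}\,\dd u.
\end{equation*}
Expanding the regularized Gauss hypergeometric factor in the density as its power series and integrating term-by-term yields Beta integrals that either telescope or sum via a standard hypergeometric identity, leaving \eqref{NuNormB} after using the identity $\omega_{d-1}/\omega_d=\gammafcn((d+1)/2)/(\sqrt{\pi}\,\gammafcn(d/2))$ and the duplication formula for $\gammafcn$. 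The change of variables $v=(1-u)/2$ in \eqref{NuNormB} then turns the integral into an incomplete Beta integral whose normalizing constant is precisely $\betafcn(d-s/2,s/2)$, and \eqref{NuNormA} follows.

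The principal obstacle is Stage~2: producing the explicit density in closed form and verifying the constancy of its $s$-potential on $\Sigma_t$. Once that is secured, Stage~3 is essentially bookkeeping with Beta and Gamma functions.
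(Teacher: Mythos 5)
The paper itself does not prove this proposition: it explicitly states, immediately before the statement, that it is \emph{recalled} from \cite{BrDrSa2009} (Lemma~24 for the density, Lemma~30 for the norm), so there is no proof in the present paper to compare against. You correctly identify the citation, which is all the paper does.

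Your three-stage sketch is a plausible independent route, but two points deserve care. First, in Stage~2 the phrase ``by linearity this further reduces to computing the $s$-balayage of a single Dirac mass'' is not the right justification: the actual mechanism is that the Kelvin image (with center $\PT{b}$) of $\sigma_d$ has $s$-potential on $\mathbb{S}^d$ equal to a constant multiple of $U_s^{\delta_{\PT{b}}}$, because $U_s^{\sigma_d}\equiv W_s(\mathbb{S}^d)$ on $\mathbb{S}^d$ and \eqref{eq:KelPot} converts that constant into $|\PT{x}-\PT{b}|^s$. Since balayage onto a set $K\subset\mathbb{S}^d$ is uniquely determined by the potential on $K$ (via the restricted domination principle valid for $d-2<s<d$), the two measures have the same balayage; that is the reduction, not linearity. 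Second, your proposed direction is opposite to what this paper actually does downstream: here Lemma~\ref{lem:epsilon} \emph{derives} $\epsilon_t=\bal_s(\delta_{\PT{b}},\Sigma_t)$ from $\nu_t=\bal_s(\sigma_d,\Sigma_t^*)$ via Kelvin inversion, taking Proposition~\ref{prop:nu} as primitive. Deriving $\nu_t$ from a Dirac-mass balayage is legitimate, but then you would need to establish the point-mass balayage density on a cap from scratch (the ``Abel-type inversion'' you mention), which is not less work; \cite{BrDrSa2009} instead obtains the density for $\nu_t$ directly and then passes to $\epsilon_t$. Stage~3 (norm via the Euler-type Beta integral and the change of variable $v=(1-u)/2$) is sound and is essentially the same bookkeeping used in the proof of Lemma~\ref{lem:epsilon} here.
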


Suppose 
\begin{equation*}
\epsilon_t = ( R^2 - 1 )^{-s/2} \kelvinMEAS_{\PT{b},s}( \lambda^*( \PT{x}^* ) ),
\end{equation*}
where $\lambda^*( \PT{x}^*)$ is the $s$-extremal measure on the image $\Sigma_t^* = \kelvin_R( \Sigma_t )$. Then by \eqref{eq:KelPot}
\begin{equation*}
U_s^{\epsilon_t}( \PT{x} ) = \left( R^2 - 1 \right)^{-s/2} U_s^{\kelvinMEAS_{\PT{b},s}( \lambda^*( \PT{x}^* ) )}( \PT{x} ) = \frac{1}{\left| \PT{x} - \PT{b} \right|^s} \, U_s^{\lambda^*}( \PT{x}^* ) = \frac{1}{\left| \PT{x} - \PT{b} \right|^s}, \qquad \PT{x} \in \Sigma_t.
\end{equation*}
With this idea in mind we can easily prove the analogous of \cite[Lemma~25 and Lemma~29]{BrDrSa2009}.

\begin{lem} \label{lem:epsilon} Let $d - 2 < s < d$. The measure $\epsilon_t = \bal_s( \delta_{\PT{b}}, \Sigma_t )$ is given by
\begin{equation} \label{eq:dd.eps.t}
\dd\epsilon_{t}(\PT{x}) = \epsilon_{t}^{\prime}(u) \frac{\omega_{d-1}}{\omega_{d}} \left( 1 - u^2 \right)^{d/2-1} \dd u \dd\sigma_{d-1}(\overline{\PT{x}}), \qquad \PT{x} \in \Sigma_t,
\end{equation}
and setting $r^2 \DEF R^2 + 2 R t + 1$, the density is given by
\begin{equation}
\begin{split} \label{eq:eps.dens} 
\epsilon_{t}^{\prime}(u) &\DEF \frac{1}{W_{s}(\mathbb{S}^{d})} \, \frac{\gammafcn(d/2)}{\gammafcn(d-s/2)} \frac{\left( R - 1 \right)^{d-s}}{r^{d}} \left(  \frac{1-t}{1-u} \right)^{d/2} \\
&\phantom{=\times}\times \left( \frac{t-u}{1-t} \right)^{(s-d)/2} \HypergeomReg{2}{1}{1,d/2}{1-(d-s)/2}{\frac{\left(R+1\right)^{2}}{r^{2}} \, \frac{t-u}{1-u}}.
\end{split}
\end{equation}

The norm of $\epsilon_t$ is given by
\begin{equation}
\begin{split} \label{eq:NormEpsB}
\left\| \epsilon_{t} \right\|
&= \frac{2^{1-d} \gammafcn(d)}{\gammafcn(d-s/2)\gammafcn(s/2)} \frac{\left( R - 1 \right)^{d-s}}{W_s(\mathbb{S}^d)} \int_{-1}^{t} \frac{ \left( 1 + u \right)^{s/2-1} \left( 1 - u \right)^{d-s/2-1} }{ \left( R^2 + 2 R u + 1 \right)^{d/2} } \dd u.
\end{split}
\end{equation}
\end{lem}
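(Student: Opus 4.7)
The plan is to reduce the claim to the known formula \eqref{eq:nu.dens} for $\nu_t = \bal_s(\sigma_d, \Sigma_t)$ in Proposition~\ref{prop:nu}, mirroring the strategy of Lemmas~25 and~29 in \cite{BrDrSa2009}. The vehicle is the Kelvin inversion $\kelvin_R$ centered at $\PT{b}$ with radius $\sqrt{R^2-1}$, which maps $\mathbb{S}^d$ bijectively onto itself and intertwines the Riesz $s$-potential of the point mass $\delta_{\PT{b}}$ with that of a uniform-like measure via the transformation identity \eqref{eq:KelPot}.

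First, from \eqref{eq:u_rel} one checks that $\kelvin_R$ sends $\Sigma_t = \{u \leq t\}$ bijectively onto the cap
\[
\Sigma_t^{\sharp} \DEF \{\PT{x}^*\in\mathbb{S}^d : \PT{p}\cdot\PT{x}^* \geq t^*\}, \qquad 1 + t^* = \frac{(R-1)^2(1-t)}{r^2},
\]
centered at the \emph{North} Pole. By the reflection symmetry of the sphere, Proposition~\ref{prop:nu} immediately supplies an explicit formula for $\widetilde\nu \DEF \bal_s(\sigma_d, \Sigma_t^{\sharp})$, obtained from \eqref{eq:nu.dens} by the substitution $(1-t,1-u,t-u)\mapsto(1+t^*,1+u^*,u^*-t^*)$. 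Then I set
\[
\tilde\epsilon_t \DEF \frac{1}{W_s(\mathbb{S}^d)\,(R^2-1)^{s/2}}\,\kelvinMEAS_{\PT{b},s}(\widetilde\nu),
\]
which is automatically supported on $\Sigma_t$. Applying \eqref{eq:KelPot} with $\lambda=\kelvinMEAS_{\PT{b},s}(\widetilde\nu)$, together with $|\PT{x}-\PT{b}|\,|\PT{x}^*-\PT{b}|=R^2-1$ and the characteristic property $U_s^{\widetilde\nu}\equiv W_s(\mathbb{S}^d)$ on $\Sigma_t^{\sharp}$, a one-line calculation gives $U_s^{\tilde\epsilon_t}(\PT{x})=|\PT{x}-\PT{b}|^{-s}$ for every $\PT{x}\in\Sigma_t$. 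Uniqueness of the $s$-balayage then forces $\epsilon_t = \tilde\epsilon_t$.

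To extract the explicit density \eqref{eq:eps.dens}, I substitute the Kelvin measure formula \eqref{eq:KelMeas} into $\kelvinMEAS_{\PT{b},s}(\widetilde\nu)$ and use the identities (all immediate from \eqref{eq:u_rel})
\[
\frac{u^*-t^*}{1+u^*}=\frac{(R+1)^2(t-u)}{r^2(1-u)}, \qquad \frac{1+t^*}{1+u^*}=\frac{(1-t)(R^2+2Ru+1)}{r^2(1-u)},
\]
\[
(1-u^{*2})^{d/2-1}\,|du^*|=\frac{(R^2-1)^d(1-u^2)^{d/2-1}}{(R^2+2Ru+1)^d}\,du,
\]
together with $|\PT{x}-\PT{b}|^2=R^2+2Ru+1$. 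All powers of $(R^2+2Ru+1)$ cancel, and the identity $(R^2-1)^{d-s}(R+1)^{s-d}=(R-1)^{d-s}$ collapses the prefactor to $(R-1)^{d-s}/r^d$, reproducing \eqref{eq:eps.dens}. For the norm \eqref{eq:NormEpsB}, the cleanest route is to push the integral through the Kelvin transformation,
\[
\|\epsilon_t\|=\frac{1}{W_s(\mathbb{S}^d)}\int_{\Sigma_t^{\sharp}}\frac{d\widetilde\nu(\PT{x}^*)}{|\PT{x}^*-\PT{b}|^s},
\]
then insert the Euler integral representation of the regularized hypergeometric function appearing in $\widetilde\nu$ and swap the order of integration to recognize the resulting one-dimensional integral as the stated beta-type expression. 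The main technical obstacle is precisely this book-keeping: tracking the many algebraic factors through the Kelvin change of variable and verifying that they conspire to produce the compact form of \eqref{eq:eps.dens}; the conceptual content is entirely encoded in the Kelvin intertwining.
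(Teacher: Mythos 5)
Your argument is essentially identical to the paper's: Kelvin inversion centered at $\PT{b}$ with radius $\sqrt{R^2-1}$, transfer of $\bal_s(\sigma_d,\Sigma_t^*)/W_s(\mathbb{S}^d)$ back via $\kelvinMEAS_{\PT{b},s}$, identification with $\epsilon_t$ through the potential identity \eqref{eq:KelPot} and uniqueness of balayage, and then the same chain of algebraic identities from \eqref{eq:u_rel} to arrive at \eqref{eq:eps.dens}. The only departure is the norm: you rewrite $\|\epsilon_t\|$ as $U_s^{\widetilde\nu}(\PT{b})/W_s(\mathbb{S}^d)$ and propose to finish via the Euler integral representation, whereas the paper integrates $\epsilon_t'(u)(1-u^2)^{d/2-1}$ directly and invokes \cite[Lemma~A.1]{BDSarXiv}; these are related by the change of variable $u \leftrightarrow u^*$ and both require the same product-of-hypergeometric bookkeeping, which you leave to the reader.
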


\begin{proof}
For $\lambda^* = \bal_s( \sigma_d, \Sigma_t^* ) / W_s( \mathbb{S}^d )$ (see \cite[Eq.~(40) and following equation]{BrDrSa2009})
\begin{equation*} 
\dd\lambda^*(\PT{x}^*) = (\lambda^*)^{\prime}(u^*) \frac{\omega_{d-1}}{\omega_{d}} \left[ 1 - (u^*)^{2} \right]^{d/2-1} \dd u^* \dd\sigma_{d-1} (\overline{\PT{x}}^*), 
\end{equation*}
where the density is given by
\begin{equation*}
\begin{split}
(\lambda^*)^{\prime}(u^*) 
&\DEF \frac{\gammafcn(d/2) / W_{s}(\mathbb{S}^{d})}{\gammafcn(d-s/2)} \left( \frac{1+t^*}{1+u^*} \right)^{d/2} \left( \frac{u^*-t^*}{1+t^*} \right)^{(s-d)/2} \HypergeomReg{2}{1}{1,d/2}{1-(d-s)/2}{\frac{u^*-t^*}{1+u^*}}.
\end{split}
\end{equation*}
From \eqref{eq:u_rel} we obtain
\begin{equation*}
\frac{1+t^*}{1+u^*} = \frac{R^2 + 2 R u + 1}{R^2 + 2 R t + 1} \, \frac{1-t}{1-u} 
\end{equation*}
and 
\begin{equation*}
u^* - t^* = \left( R^2 - 1 \right)^2 \frac{t-u}{\left( R^2 + 2 R u + 1 \right)\left( R^2 + 2 R t + 1 \right)}. 
\end{equation*}
From the latter we get
\begin{equation*}
\frac{u^*-t^*}{1+t^*} = \frac{\left( R + 1 \right)^2}{R^2 + 2 R u + 1} \, \frac{t-u}{1-t}, \qquad \frac{u^*-t^*}{1+u^*} = \frac{\left( R + 1 \right)^2}{R^2 + 2 R t + 1} \, \frac{t-u}{1-u}.
\end{equation*}
Furthermore, by \eqref{eq:sigma.d.transformation}
\begin{equation*}
\begin{split}
\frac{\omega_{d-1}}{\omega_{d}} \left[ 1 - (u^*)^{2} \right]^{d/2-1} \dd u^* \dd\sigma_{d-1} (\overline{\PT{x}}^*) 
&= \dd \sigma_d\big|_{\Sigma_t^*}( \PT{x}^* ) = \frac{\left| \PT{x}^* - \PT{b} \right|^d}{\left| \PT{x} - \PT{b} \right|^d} \dd \sigma_d\big|_{\Sigma_t}( \PT{x} ) \\
&= \frac{\left( R^2 - 1 \right)^d}{\left( R^2 + 2 R u + 1 \right)^{2d}} \, \frac{\omega_{d-1}}{\omega_{d}} \left( 1 - u^{2} \right)^{d/2-1} \dd u \dd\sigma_{d-1} (\overline{\PT{x}}).
\end{split}
\end{equation*}
Hence, substituting these relations into
\begin{equation*}
\dd \epsilon_t( \PT{x} ) = \left( R^2 - 1 \right)^{-s/2} \dd \kelvinMEAS_{\PT{b},s}( \lambda^*( \PT{x}^* ) ) = \left( R^2 - 1 \right)^{-s/2} \frac{\left| \PT{x} - \PT{b} \right|^s}{\left( R^2 - 1 \right)^{s/2}} \dd \lambda^*( \PT{x}^* )
\end{equation*}
we arrive after some simplifications at the desired results \eqref{eq:dd.eps.t} and \eqref{eq:eps.dens}. 

Proceeding as in the proof of Lemma~29, we substitute \eqref{eq:dd.eps.t} and \eqref{eq:eps.dens} into
\begin{equation*}
\left\| \epsilon_t \right\| = \int_{\mathbb{S}^d} \dd \epsilon_t = \frac{\omega_{d-1}}{\omega_d} \int_{-1}^t \epsilon_{t}^{\prime}(u) \left( 1 - u^2 \right)^{d/2-1} \dd u.
\end{equation*}
Applying \cite[Lemma~A.1]{BDSarXiv} (which is also valid for $| x y | < 1$), we get ($r^2 = R^2 + 2 R t + 1$)
\begin{equation*}
\begin{split}
\left\| \epsilon_{t} \right\| 
&= 2^{(d-s)/2-1} \frac{\gammafcn(d/2)}{\gammafcn(d-s/2)} \frac{\gammafcn(d/2)}{\gammafcn(s/2)} \frac{\omega_{d-1}}{\omega_{d}} \frac{\left( R - 1 \right)^{d-s}}{W_{s}(\mathbb{S}^{d}) r^{d}} \left( 1 - t \right)^{d/2} \left( 1 + t \right)^{s/2} \\
&\phantom{=\pm}\times \left( 1 - x y \right)^{-d/2} \int_0^1 v^{s/2-1} \left( 1 - x v \right)^{d-s/2-1} \left( 1 - \frac{x \left( 1 - y \right)}{1 - x y} v \right)^{-d/2} \dd v,
\end{split}
\end{equation*}
where
\begin{equation*}
x = \frac{1+t}{2}, \qquad y = \frac{\left( R + 1 \right)^2}{R^2 + 2 R t + 1}
\end{equation*}
and thus
\begin{equation*}
1 - x y = \frac{\left( R - 1 \right)^2}{R^2 + 2 R t + 1} \, \frac{1 - t}{2}, \qquad \frac{x \left( 1 - y \right)}{1 - x y} = - \frac{4 R}{\left( R - 1 \right)^2} \, \frac{1 + t}{2}.
\end{equation*}
Simplification gives the Euler-type integral of an Appell function \cite[Eq.~16.15.1]{NIST:DLMF}
\begin{equation*}
\begin{split}
\left\| \epsilon_{t} \right\| 
&= \frac{2^{-s/2} \gammafcn(d)}{\gammafcn(d-s/2)\gammafcn(s/2)} \frac{1}{W_s(\mathbb{S}^d)} \left( R - 1 \right)^{-s} \left( 1 + t \right)^{s/2} \\
&\phantom{=\pm}\times \int_{0}^{1} u^{s/2-1} \left( 1 - \frac{1+t}{2} u \right)^{d-s/2-1} \left( 1 + \frac{4R}{\left(R-1\right)^{2}} \frac{1+t}{2} u \right)^{-d/2} \dd u.
\end{split}
\end{equation*}
A change of variable $1+v=(1+t) u$ yields \eqref{eq:NormEpsB}.
\end{proof}

With this preparations we are able to prove Proposition~\ref{prop:SignEq.general}. 

\begin{proof}[Proof of Proposition~\ref{prop:SignEq.general}]
Let $-1 < t < 1$. The representation of the signed equilibrium $\eta_t$ follows by substituting the representations of $\nu_t$ (Proposition~\ref{prop:nu}) and $\epsilon_t$ (Lemma~\ref{lem:epsilon}) into \eqref{eq:eta}. 
For the analysis of the behavior of the density $\eta_t^\prime$ near $t^-$ we write \eqref{eta.t.prime.1st.result} as
\begin{equation*}
\eta_{t}^{\prime}(u) = \frac{1}{W_s(\mathbb{S}^d)} \frac{\gammafcn(d/2)}{\gammafcn(d-s/2) \gammafcn( 1 - (d-s)/2 )} \left( \frac{t-u}{1-t} \right)^{(s-d)/2} f( u ),
\end{equation*}
where the function 
\begin{equation*}
\begin{split}
f( u ) \DEF \left( \frac{1-t}{1-u} \right)^{d/2} &\Bigg\{ \Phi_s (t) \Hypergeom{2}{1}{1,d/2}{1-(d-s)/2}{\frac{t-u}{1-u}} \\
&\phantom{=\times\pm}- \frac{q\left( R - 1 \right)^{d-s}}{r^{d}} \Hypergeom{2}{1}{1,d/2}{1-(d-s)/2}{\frac{\left(R+1\right)^{2}}{r^{2}} \, \frac{t-u}{1-u}} \Bigg\}
\end{split}
\end{equation*}
is analytic at $u = t$. (Note that the argument of either hypergeometric function is in the interval $(0,1-\eps)$ if $t \in (-1, 1-\eps]$.) We consider the Taylor expansion 
\begin{equation*}
f( u ) = f( t ) - f^\prime( t ) \left( t - u \right) + \frac{f^{\prime\prime}( \tau )}{1!} \left( t - u \right)^2  
\end{equation*}
for some $u < \tau < t$ whenever $-1 < u < t$. As the Gauss hypergeometric functions evaluate to $1$ at $u = t$, we get
\begin{equation*}
f( t ) = \Phi_s (t) - \frac{q\left( R - 1 \right)^{d-s}}{r^{d}}, 
\end{equation*}
the differentiation formula for Gauss hypergeometric functions (\cite[Eq.~15.5.1]{NIST:DLMF}) gives
\begin{equation*}
\begin{split}
f^\prime( t ) 
&= \frac{d}{2} \frac{1}{1-t} \left\{ \Phi_s (t) - \frac{q\left( R - 1 \right)^{d-s}}{r^{d}} \right\} + \Bigg\{ \Phi_s (t) \frac{\frac{d}{2}}{1 - \frac{d-s}{2}}  \Hypergeom{2}{1}{2,1+d/2}{2-(d-s)/2}{\frac{t-u}{1-u}} \frac{t - 1}{\left( 1 - u \right)^2} \\
&\phantom{=\pm}- \frac{q\left( R - 1 \right)^{d-s}}{r^{d}} \frac{\frac{d}{2}}{1 - \frac{d-s}{2}} \Hypergeom{2}{1}{2,1+d/2}{2-(d-s)/2}{\frac{\left(R+1\right)^{2}}{r^{2}} \, \frac{t-u}{1-u}} \frac{\left(R+1\right)^{2}}{r^{2}} \, \frac{t - 1}{\left( 1 - u \right)^2} \Bigg\}\Bigg|_{u = t}
\end{split}
\end{equation*}
and one can verify that $| f^{\prime\prime}( u ) |$ is uniformly bounded on $[-1,t]$. In particular
\begin{align*}
\mathcal{R}( t ) 
&= - f^\prime( t ) = - \frac{d}{2} \, \frac{f(t)}{1-t} + \frac{\frac{d}{2}}{1 - \frac{d-s}{2}} \frac{1}{1-t} \left\{ \Phi_s (t) - \frac{q\left( R - 1 \right)^{d-s}}{r^{d}} \frac{\left(R+1\right)^{2}}{r^{2}} \right\} \\
&= - \frac{d}{2} \left( 1 - \frac{1}{1 - \frac{d-s}{2}} \right) \frac{f(t)}{1-t} - \frac{\frac{d}{2}}{1 - \frac{d-s}{2}} \frac{1}{1-t} \left( \frac{R^2 + 2 R + 1}{R^2 + 2 R t + 1} - 1 \right).
\end{align*}
Putting everything together and simplification gives \eqref{eq:weighted.eta.density.near.t} and \eqref{eq:weighted.eta.density.near.t.middle.term}.

The constance of the weighted potential on $\Sigma_t$ follows from \eqref{eq:weighted.potential}. It remains to show \eqref{eq:weighted.outside}. We can proceed as in \cite[Section~5]{BrDrSa2009} but using $r^2 = R^2 + 2 R t + 1$ and $\rho^2 = R^2 + 2 R \xi + 1$ and
\begin{equation*}
c_t^2 = \frac{(R+1)^2}{r^2}, \qquad C = \frac{1}{W_s( \mathbb{S}^d )} \frac{\gammafcn( d/2 )}{\gammafcn( d - s/2 )} \, \frac{\left( R - 1 \right)^{d-2}}{r^d}.
\end{equation*}
This leads to the relations (cf. \cite[Section~5]{BrDrSa2009})
\begin{equation*}
1 - z = 1 - c_t^2 \frac{1+t}{2} = \frac{\left( R - 1 \right)^2}{r^2} \, \frac{1 - t}{2}, \qquad w + z - w z = \frac{\rho^2}{r^2} \, \frac{1 + t}{1 + \xi}
\end{equation*}
and, subsequently, to the desired result \eqref{eq:weighted.outside}.

With the help of {\sc MATHEMATICA} we derive the representation for the weighted potential near $t^+$.
\end{proof}

\begin{proof}[Positivity of $\eta_t^\prime$ (Remark after Proposition~\ref{prop:SignEq.general})]
We substitute the series expansion of the regularized hypergeometric function into \eqref{eta.t.prime.1st.result} to obtain
\begin{equation*}
\begin{split} 
\eta_{t}^{\prime}(u) &= \frac{1}{W_s(\mathbb{S}^d)} \frac{\gammafcn(d/2)}{\gammafcn(d-s/2)}
\left( \frac{1-t}{1-u} \right)^{d/2} \left( \frac{t-u}{1-t} \right)^{(s-d)/2} \\
&\phantom{=\times}\times \sum_{n=0}^\infty \frac{\Pochhsymb{d/2}{n}}{\gammafcn(n + 1 - (d-s)/2)} \left( \frac{t-u}{1-u} \right)^n \Bigg\{ \Phi_s (t)
- \frac{q\left( R - 1 \right)^{d-s}}{r^{d}} \left[ \frac{\left(R+1\right)^{2}}{r^{2}} \right]^n  \Bigg\}.
\end{split}
\end{equation*}
Assuming \eqref{eq:weighted.neccessary.density}, it can be readily verified that the expression in braces above is $0$ if $n = 0$ and postive if $n \geq 1$. Hence $\eta_{t}^{\prime}(u) > 0$ for $u \in [-1,t)$.
\end{proof}

\begin{proof}[Proof of Relation~\eqref{eq:weighted.larger.than}]
The (series) expansion
\begin{equation*}
\mathrm{I}(z; a, b) = \left[ \gammafcn(a+b) / \gammafcn(b) \right] z^a \left(
1 - z \right)^b \HypergeomReg{2}{1}{1,a+b}{a+1}{z},
\end{equation*}
applied to \eqref{eq:weighted.outside} yields for $\xi>t>-1$
\begin{equation*}
\begin{split}
&U_s^{\eta_t}(\PT{z}) + Q_{\PT{b},s}(\PT{z}) = \Phi_s(t) + \frac{\gammafcn(d/2)}{\gammafcn(s/2)} \left( \frac{\xi-t}{1+\xi} \right)^{(d-s)/2} \left( \frac{1+t}{1+\xi} \right)^{s/2} \\
&\phantom{=\times}\times \sum_{n=0}^\infty \frac{\Pochhsymb{d/2}{n}}{\gammafcn(n+1+(d-s)/2)} \left( \frac{\xi-t}{1+\xi} \right)^n \left\{ \frac{q \left( R - 1 \right)^{d-s}}{r^d} \left[ \frac{R^2 - 2 R + 1}{R^2 + 2 R t + 1} \right]^n - \Phi_s(t) \right\}.
\end{split}
\end{equation*}
If $q (R+1)^{d-s}/r^d \geq \Phi_s(t)$, then the infinite series above is positive for ever $\xi \in (t, 1]$. 
\end{proof}

\begin{proof}[Proof of Proposition~\ref{prop:s.equilibrium.measure}]
Set $\Delta( t ) \DEF \Phi_s( t ) - q ( R - 1 )^{d-s} / r^d$, where $r = r(t) = \sqrt{R^2 + 2 R t + 1}$. Proceeding as in the Proof \cite[Theorem~13]{BrDrSa2009}, we show that $\Delta(t)$ having a unique solution in $(-1,1]$ is intimately connected with $\Phi_s$ having a unique minimum in $(-1, 1]$. 

Note that $\Phi_s$ and therefore $\Delta( t )$ tend to $+\infty$ as $t \to -1^+$. Hence there is a largest $t_c \in (-1,1]$ such that $\Delta( t ) > 0$ on $(-1, t_c)$ (and $\Delta( t_c ) = 0$ by continuity if $t_c < 1$). From 
\begin{equation} \label{eq:Phi.s.derivative}
\frac{\dd \Phi_{s}}{\dd t} = - \frac{\left\| \nu_t \right\|^\prime}{\left\| \nu_t \right\|} \left[ \Phi_{s}(t) - q \, W_s(\mathbb{S}^d) \frac{\left\| \epsilon_t \right\|^\prime}{\left\| \nu_t \right\|^\prime} \right] = - \frac{\left\| \nu_t \right\|^\prime}{\left\| \nu_t \right\|} \Delta(t),
\end{equation}
where $\| \nu_t \|^\prime / \| \nu_t \| > 0$ on $(-1,1)$, we see that $\Phi_s$ is strictly decreasing on $(-1,t_c)$ (and thus on all of $(-1,1)$ if $t_c = 1$). Hence, in the case $t_c = 1$, the function $\Phi_s$ attains its unique minimum at $1$. (Then $\Delta( t ) \geq 0$ is equivalent with the condition \eqref{neg.eqsigned} with $\PT{a}$ changed to~$\PT{b}$.) Suppose $t_c < 1$. Then any zero $\tau$ of $\Phi_s^\prime$ in $(-1,1)$ is a minimum of $\Phi_s$ as the in $(-1,1)$ twice continuously differentiable function $\Phi_{s}$ satisfies
\begin{equation*}
\frac{\dd^2 \Phi_{s}}{\dd t^2} (\tau) = - \frac{\left\| \nu_t \right\|^\prime}{\left\| \nu_t \right\|}  \frac{d\,q \left( R - 1 \right)^{d-s} R}{r^{d+2}} \Bigg|_{t=\tau} > 0.
\end{equation*}
This implies that $\Phi_s$ has a unique minimum in $(-1,1)$. Relation \eqref{eq:Phi.s.derivative} also implies that $\Delta(t) > 0$ on $(-1,t_c)$ and $\Delta(t) < 0$ on $(t_c, 1)$. Hence, by the first remark after Proposition~\ref{prop:SignEq.general}, $t_c = \max\{ t : \eta_t \geq 0 \}$.

By the remarks after Proposition~\ref{prop:SignEq.general}, the signed measure $\eta_{t_c}$ is a positive probability measure with support $\Sigma_{t_c}$ that has constant weighted $s$-potential on $\supp( \eta_{t_c} )$ and exceeds this constant (away from the support) on $\mathbb{S}^d \setminus \supp( \eta_{t_c} )$. By the variational inequalities (cf. \eqref{geqineq} and \eqref{leqineq}), $\eta_{t_c}$ is the $s$-extremal measure on $\mathbb{S}^d$ associated with $Q_{\PT{b}, s}$. 
\end{proof}

\appendix

\section{The $s$-Potential of the $s$-Extremal Measure $\sigma_d$} 
\label{sec:s.potential}

The $s$-potential $U_s^{\sigma_d}( \PT{a} ) = \int \frac{1}{\left| \PT{x} - \PT{a} \right|^s} \, \dd \sigma_d( \PT{x} )$ is well-defined for all $\PT{a} \in \mathbb{R}^{d+1}$ and all $s \in \mathbb{R}$. The following discussion will be restricted to the potential-theoretical regime $0 < s < d$.\footnote{In the hyper-singular regime $s \geq d$, $U_s^{\sigma_d}$ assumes the value $+\infty$ on $\mathbb{S}^d$ and is finite in~$\mathbb{R}^{d+1} \setminus \mathbb{S}^d$.} 
\hspace{1em}The $s$-potential of $\sigma_d$ is then continuous and uniformly bounded in $\mathbb{R}^{d+1}$. As we shall see, it attains the maximum value $W_s( \mathbb{S}^d ) > 1$ on $\mathbb{S}^d$ in the strictly subharmonic case $d - 1 < s < d$, equals~$1$ on the closed unit ball and is diminishing outside $\mathbb{S}^d$ in the harmonic case $s = d - 1$, and assumes the maximal value $1$ at the center of the sphere in the strictly superharmonic case $0 < s < d - 1$. This is one example of how the potential-theoretical regime governs the behavior of $U_s^{\sigma_d}$. Figure~\ref{fig:typical.s.potentials} illustrates the typical form of $U_s^{\sigma_d}$ utilizing that $U_s^{\sigma_d}( \PT{a} )$ is a radial function depending on~$R = | \PT{a} |$ only. (By abuse of notation we shall write $U_s^{\sigma_d}(R)$.) This dependence on $R$ can be easily seen from the integral representation
\begin{equation} \label{eq:s.potential.integral}
U_s^{\sigma_d}( R ) = \frac{\omega_{d-1}}{\omega_d} \, \int_{-1}^1 \frac{( 1 - t^2 )^{d/2-1}}{( R^2 - 2 R \, t + 1 )^{s/2}} \, \dd t, \qquad R \geq 0,
\end{equation}
which is obtained by using the identity 
\begin{equation*}
| \PT{x} - \PT{a} |^2 = | \PT{a} |^2 - 2 \, \PT{a} \cdot \PT{x} + 1, \qquad \PT{x} \in \mathbb{S}^d, \quad \PT{a} \in \mathbb{R}^{d+1},
\end{equation*}
and the Funk-Hecke formula (cf. \cite{Mu1966}). The standard substitution ${2 u = 1 + t}$ gives the symmetric representation \eqref{eq:s.potential}, valid for $R \geq 0$, in terms of a Gauss hypergeometric function. The quadratic transformation for such functions \cite[Eq.~15.3.17]{AbSt1992} yields the formulas \eqref{eq:s.potential.A} and \eqref{eq:s.potential.B} suitable for the domain $[0,1]$ and~$[1,\infty)$, respectively. The first upper parameter $-( d - 1 - s) / 2$ ``measures'' how far $s$ is away from the harmonic case $d - 1$. If this parameter is a negative integer (that is, $d - s$ is an odd positive integer), then the series expansion of the Gauss hypergeometric function reduces to a polynomial. 
For $d - s$ is an even integer, the $s$-potential $U_s^{\sigma_d}$ reduces to a linear combination of complete elliptic integrals of the first and second kind with coefficients that are rational functions of $R$ if $d$ is odd, whereas for even sphere dimension $d$, the $s$-potential is a sum of a rational function in $R$ and another rational function in $R$ times a logarithmic term in $R$. For $( d - s ) / 2$ not an integer and $d$ even, the $s$-potential is a linear combination of $(R + 1)^{d-2}$ and $| R - 1 |^{d-s}$ with coefficients that are rational functions in $R$. The derivation of these representations of $U_s^{\sigma_d}$ are sketched out after Theorem~\ref{thm:Gonchar.A}. 

We shall assume that $d \geq 2$.

\subsection*{Monotonicity Properties} 
The $s$-potential $U_s^{\sigma_d}(R)$ is strictly monotonically decreasing on the interval $(1,\infty)$ for every $s \in (0,d)$ as can be seen by differentiating~\eqref{eq:s.potential.integral}, also cf. \eqref{eq:differentiation.formula.B} below. Next, we consider $U_s^{\sigma_d}(R)$ on $(0,1)$. Differentiating \eqref{eq:s.potential.B} using \cite[Eq.~15.5.1]{NIST:DLMF}, we obtain
\begin{equation*}
\frac{\dd U_s^{\sigma_d}(R)}{\dd R} = - \frac{( d - 1 - s ) s}{d + 1} \, R \, \Hypergeom{2}{1}{1-(d-1-s)/2,1+s/2}{1+(d+1)/2}{R^2}.
\end{equation*}
The hypergeometric function is positive for every $s \in (0,d)$ which follows, e.g., from its integral representation (cf. \cite[Eq.~15.6.1]{NIST:DLMF}). Therefore, $U_s^{\sigma_d}(R)$ is strictly monotonically decreasing on $(0,1)$ in the strictly superharmonic case, constant on $(0,1)$ in the harmonic case, and strictly monotonically increasing on $(0,1)$ in the strictly subharmonic case. We further infer that $U_s^{\sigma_d}(R)$ has a unique maximum at $R = 0$ with value $1$ when $0 < s < d - 1$, assumes the maximum value $1$ everywhere on $[0,1]$ if $s = d - 1$, and has a unique maximum at $R = 1$ with value $W_s( \mathbb{S}^d ) > 1$ (since $W_s( \mathbb{S}^d ) = U_s^{\sigma_d}( 1 ) > U_s^{\sigma_d}( 0 ) = 1$) if~$d - 1 < s < d$.

\subsection*{The Critical Point $R = 1$} 
In the strictly subharmonic case, $U_s^{\sigma_d}(R)$ has a cusp at $R = 1$ with
\begin{equation*}
\frac{\dd U_s^{\sigma_d}(R)}{\dd R} \to +\infty \quad \text{as $R \to 1^-$}, \qquad \frac{\dd U_s^{\sigma_d}( R )}{\dd R} \to -\infty \quad \text{as $R \to 1^+$.}
\end{equation*}
(Informally, this makes the whole sphere $\mathbb{S}^d$ into a ``cusp'' if $d - 1 < s < d$.) This can be seen from the following differentiation formula (derived from \eqref{eq:s.potential} using \cite[Eq.~15.5.1 and Eq.~15.8.1]{NIST:DLMF}) 
\begin{equation} \label{eq:differentiation.formula.A}
\begin{split}
\frac{\dd U_s^{\sigma_d}(R)}{\dd R} 
&= -s \frac{U_s^{\sigma_d}(R)}{R+1} - s \frac{( R - 1 ) | R - 1 |^{d-s-2}}{( R + 1 )^{d+1}} \\
&\phantom{=\pm}\times \Hypergeom{2}{1}{d-s/2,d/2}{1+d}{\frac{4R}{(R+1)^2}},
\end{split}
\end{equation}
where the hypergeometric functions remains finite for all $R \geq 0$ if $d - 2 < s < d$. In the harmonic case one clearly has
\begin{equation*}
\frac{\dd U_s^{\sigma_d}(R)}{\dd R} \to 0 \quad \text{as $R \to 1^-$}, \qquad \frac{\dd U_s^{\sigma_d}( R )}{\dd R} \to 1 - d \quad \text{as $R \to 1^+$.}
\end{equation*}
In the strictly superharmonic case one has 
\begin{equation} \label{eq:critical.R.EQ.1.limit}
\lim_{R \to 1} \frac{\dd U_s^{\sigma_d}(R)}{\dd R} = - \frac{s}{2} \, W_s( \mathbb{S}^d ),
\end{equation}
as can be seen from the differentiation formula (using only \cite[Eq.~15.5.1]{NIST:DLMF} on~\eqref{eq:s.potential})
\begin{equation} \label{eq:differentiation.formula.B}
\begin{split}
\frac{\dd U_s^{\sigma_d}(R)}{\dd R} 
&= -s \frac{U_s^{\sigma_d}(R)}{R+1} - s \frac{R - 1}{( R + 1 )^{s+3}} \\
&\phantom{=\pm}\times \Hypergeom{2}{1}{1+d/2,1+s/2}{1+d}{\frac{4R}{(R+1)^2}},
\end{split}
\end{equation}
where the hypergeometric functions remains finite for all $R \geq 0$ if $0 < s < d - 2$. For $s = d - 2$, both hypergeometric functions, above and in \eqref{eq:differentiation.formula.A}, are zero-balanced (that is, the sum of the upper parameters equals the lower parameter). After application of the linear transformation \cite[Eq.~15.8.10]{NIST:DLMF} this leads to a term $( R - 1) \, \log [ ( R - 1 )^2 / ( R + 1 )^2 ]$ which goes to $0$ as $R \to 1$.
Next, we consider the behavior of the second derivative of $U_s^{\sigma_d}(R)$ as $R \to 1$ in the strictly superharmonic case. From \eqref{eq:critical.R.EQ.1.limit} and \eqref{eq:differentiation.formula.B} we obtain that
\begin{equation*}
\begin{split}
\frac{\frac{\dd U_s^{\sigma_d}(R)}{\dd R} - ( - \frac{s}{2} \, W_s( \mathbb{S}^d ) )}{R - 1} 
&= \frac{-\frac{s}{R+1} \, U_s^{\sigma_d}(R) + \frac{s}{2} \, W_s( \mathbb{S}^d )}{R-1} \\
&\phantom{=}- \frac{s}{( R + 1 )^{s+3}} \, \Hypergeom{2}{1}{1+d/2,1+s/2}{1+d}{\frac{4R}{(R+1)^2}}.
\end{split}
\end{equation*}
For $s \geq d - 2$, the hypergeometric function goes to $+\infty$ as $R \to 1$ (using, e.g., integral formula) and for $0 < s < d - 2$ it assumes the following value (after applying \cite[Eq.~15.4.20]{NIST:DLMF} and \eqref{eq:W.s.S.d}) 
\begin{equation*}
\Hypergeom{2}{1}{1+d/2,1+s/2}{1+d}{1} = \frac{\gammafcn( 1 + d ) \gammafcn( (d - s)/2 - 1 )}{\gammafcn( d/2 ) \gammafcn( d - s/2 )} = \frac{2^{s+1} d}{d - 2 - s} \, W_s( \mathbb{S}^d ). 
\end{equation*}
On observing that $2 / (R+1) = 1 - (R - 1) / (R + 1)$, we also have that (as~$R \to 1$)
\begin{align*}
\frac{-\frac{s}{R+1} \, U_s^{\sigma_d}(R) + \frac{s}{2} \, W_s( \mathbb{S}^d )}{R-1} 
&= - \frac{s}{2} \Big( \frac{U_s^{\sigma_d}(R) - W_s( \mathbb{S}^d )}{R - 1} - \frac{U_s^{\sigma_d}(R)}{R + 1} \Big) \\
&\to - \frac{s}{2} \Big( \frac{\dd U_s^{\sigma_d}}{\dd R}( 1 ) - \frac{1}{2} \, U_s^{\sigma_d}( 1 ) \Big) = \frac{s}{2} \, \frac{s+1}{2} \, W_s( \mathbb{S}^d ).
\end{align*}
We conclude that for $d - 2 \leq s < d - 1$ (and $s > 0$),
\begin{equation} \label{eq:2nd.derivative.at.1.limit.A}
\frac{\dd^2 U_s^{\sigma_d}(R)}{\dd R^2} \to -\infty \quad \text{as $R \to 1$}
\end{equation}
and for $0 < s < d - 2$, 
\begin{equation} \label{eq:2nd.derivative.at.1.limit.B}
\lim_{R \to 1} \frac{\dd^2 U_s^{\sigma_d}(R)}{\dd R^2} = \frac{s}{4} \Big[ s + 1 - \frac{d}{d-2-s} \Big] W_s( \mathbb{S}^d ).
\end{equation}
Observe that the latter is negative for $s$ sufficiently close to $0^+$ or $(d - 2)^-$, provided $d \geq 3$.

\subsection*{Convexity Properties} 
The $s$-potential $U_s^{\sigma_d}(R)$ is strictly convex on the interval $(R_s,\infty)$ for every ${s \in (0,d)}$. The existence of $R_s \geq 1$ follows from the fact that $U_s^{\sigma_d}(R)$ is convex on $( ( 2 + s ) / ( 1 + s ), \infty )$ as the integrand of the second derivative of \eqref{eq:s.potential.integral} is positive for $R$ in this interval as can be seen from
\begin{equation*}
\frac{\dd^2 }{\dd R^2} \Big\{ ( R^2 - 2 R \, t + 1 )^{-s/2} \Big\} = s \, \frac{( s + 1 ) ( R - t )^2 - ( 1 - t^2 )}{( R^2 - 2 R \, t + 1 )^{s/2 + 2}}, \qquad t \in [-1,1].
\end{equation*}
In the strictly subharmonic case an inspection of the signs in the second derivative of \eqref{eq:s.potential.A},
\begin{equation*}
\begin{split}
\frac{\dd^2 U_s^{\sigma_d}( R )}{\dd R^2} 
&= +s \, \frac{U_s^{\sigma_d}(R)}{R^2} - \frac{s}{R} \, \frac{\dd U_s^{\sigma_d}( R )}{\dd R} + \frac{\dd}{\dd R} \Big\{  \frac{( d - 1 - s ) s}{d+1} \, R^{-s-3} \\
&\phantom{=\pm}\times \Hypergeom{2}{1}{1-(d-1-s)/2,1+s/2}{1+(d+1)/2}{\frac{1}{R^2}} \Big\},
\end{split}
\end{equation*}
shows that $U_s^{\sigma_d}( R )$ is convex on $( 1, \infty )$. In the harmonic case, $U_{d-1}^{\sigma_d}( R ) = R^{1-d}$ is strictly convex on $( 1, \infty )$ provided $d \geq 2$. In the strictly superharmonic regime convexity is a more subtle property. For $d - 2 \leq s <  d - 1$ and $s > 0$, the second derivative of $U_s^{\sigma_d}(R)$ is negative near $1$ (cf. \eqref{eq:2nd.derivative.at.1.limit.A}). Similarly, for $s$ sufficiently close to $0^+$ or $(d - 2)^-$, this derivative is also negative (cf. \eqref{eq:2nd.derivative.at.1.limit.B}). However, $U_s^{\sigma_d}(R)$ can be strictly convex on $(1,\infty)$ even for $s \in (0, d - 2)$, $d \geq 3$, as the following example for $d = 7$ and $s = 2$ demonstrates:
\begin{equation*}
U_{2}^{\sigma_7}(R) = \frac{R^4 - \frac{1}{2} \, R^2 + \frac{1}{10}}{R^6}, \quad  \frac{\dd^2 U_{2}^{\sigma_7}(R)}{\dd R^2} = 6 \frac{( R^2 - \frac{5}{6} )^2 + \frac{1}{180}}{R^8}, \qquad R \geq 1.
\end{equation*}

The $s$-potential $U_s^{\sigma_d}(R)$ is strictly convex on $(0, 1)$ in the strictly subharmonic regime. This follows from differentiating the series representation of \eqref{eq:s.potential.B} twice and rewrite it as
\begin{equation*}
\frac{\dd^2 U_s^{\sigma_d}( R )}{\dd R^2} = - \frac{( d - 1 - s ) s}{d+1} \, \Hypergeom{3}{2}{1-(d-1-s)/2,1+s/2,3/2}{1+(d+1)/2,1/2}{R^2},
\end{equation*}
where convergence is assured for $0 \leq R < 1$. We can also infer that the $s$-potential $U_s^{\sigma_d}(R)$ is strictly concave on $(0,1)$ in the strictly superharmonic case $s \in ( \max\{ 0, d - 3\}, d - 1 )$ and this property extends to $s = d - 3$ if $d \geq 4$. Indeed, as
\begin{equation*}
\lim_{R \to 0} \frac{\dd^2 U_s^{\sigma_d}( R )}{\dd R^2} = - \frac{( d - 1 - s ) s}{d+1}
\end{equation*}
is negative in the strictly superharmonic regime, the $s$-potential $U_s^{\sigma_d}(R)$ is always strictly concave on $(0,R_s)$ for some $0 < R_s \leq 1$ for $s \in (0, d - 1)$.

\end{document}